\begin{document}
\title{Multistage Compute-and-Forward with Multilevel Lattice Codes Based on Product Constructions}
\author{Yu-Chih Huang, Krishna R. Narayanan, and Nihat Engin Tunali\\
Department of Electrical and Computer Engineering \\
Texas A\&M University\\
{\tt\small {\{jerry.yc.huang@gmail.com, krn@ece.tamu.edu, engintunali@neo.tamu.edu\}} }}

\maketitle

\begin{abstract}
A novel construction of lattices is proposed. This construction can be thought of as Construction A with codes that can be represented as the Cartesian product of $L$ linear codes over $\mathbb{F}_{p_1},\ldots,\mathbb{F}_{p_L}$, respectively; hence, is referred to as the product construction. The existence of a sequence of such lattices that are good for quantization and Poltyrev-good under multistage decoding is shown. This family of lattices is then used to generate a sequence of nested lattice codes which allows one to achieve the same computation rate of Nazer and Gastpar for compute-and-forward under multistage decoding, which is referred to as lattice-based \textit{multistage compute-and-forward}.

Motivated by the proposed lattice codes, two families of signal constellations are then proposed for the separation-based compute-and-forward framework proposed by Tunali \textit{et al.} together with a multilevel coding/multistage decoding scheme tailored specifically for these constellations. This scheme is termed separation-based \textit{multistage compute-and-forward} and is shown having a complexity of the channel coding dominated by the greatest common divisor of the constellation size (may not be a prime number) instead of the constellation size itself.
\end{abstract}


\section{Introduction}

Compute-and-forward is a novel information forwarding paradigm in wireless communications in which relays in a network decode functions of signals transmitted from multiple transmitters and forward them to a central destination. If these functions are chosen as linear integer combinations, lattice codes are one of the most natural ways to implement a compute-and-forward scheme since a lattice is closed under addition. If the channel state information is not available at the transmitters, compute-and-forward can be implemented effectively by allowing the relay to choose integer coefficients depending on the channel coefficients and signal-to-noise ratio (SNR). Such a scheme which uses lattices over integers has been analyzed by Nazer and Gastpar for AWGN networks in \cite{nazer2011CF} where achievable computation rates were derived. Based on this approach, in \cite{Engin12}, Tunali \textit{et al.} considered the use of lattices over Eisenstein integers for the compute-and-forward paradigm and successfully extended the result on achievable rates in \cite{nazer2011CF} to lattices over Eisenstein integers.

The lattice codes adopted in \cite{nazer2011CF} are based on those generated by Construction A \cite{LeechSloane71} \cite{conway1999sphere} whose decoding complexity typically depend on decoding of the underlying linear codes. One main drawback of the Construction A lattices is that in order to be Poltyrev-good, the underlying linear codes have to be implemented over very large prime fields which in turn result in high decoding complexity. To alleviate this drawback, in the first part of the paper, inspired by Theorem 2 in \cite{Feng10}, we propose a novel lattice construction called product construction that can be thought of as Construction A \cite{LeechSloane71} with codes which can be represented as the Cartesian product of $L$ linear codes over $\mbb{F}_{p_1},\ldots,\mbb{F}_{p_L}$, respectively. This construction is shown to be able to generate sequences of lattices which are Poltyrev-good under multistage decoding and good for mean-squared error (MSE) quantization. We then generate a sequence of nested lattice codes by extending the result by Ordentlich and Erez in \cite{ordentlich_erez_simple} to the proposed lattices. This sequence of nested lattice codes is adopted for the compute-and-forward problem and a novel strategy called multistage compute-and-forward is proposed which can recover the achievable computation rates in \cite{nazer2011CF} using multistage decoding.

After establishing the information-theoretic results, one important next step would be making progress toward the construction of practically implementable coding schemes for the compute-and-forward paradigm. In \cite{Feng10}, Feng, Silva, and Kschischang have extended the framework in \cite{nazer2011CF} towards the design of efficient and practical schemes via an algebraic approach. In \cite{Engin11SC}, a scheme based on the concatenation of signal codes \cite{SC} with low-density parity check (LDPC) codes have been implemented to compute-and-forward. One of the major drawbacks of this scheme is the substantially high decoding complexity resulting from the fact that for such lattices, the shaping and channel coding are coupled together. This hinders optimal decoding as the dimensionality grows and also results in inseparable shaping and coding gains. In \cite{Engin12}, Tunali \textit{et al.} have proposed a framework that allows the separation of channel coding and data modulation. This scheme is motivated by Construction A \cite{LeechSloane71} over $\mbb{F}_p$, which uses a linear code over $\mbb{F}_p$ in conjunction with a constellation which is carefully cropped from the integers (similarly Gaussian integers and Eisenstein integers) with $p$ elements. In contrast to the schemes in the existing literature, this separation-based scheme has enabled one to separately improve the coding gain and shaping gain, thus resulting in increased computation rates. This separation has also allowed one to keep the constellation size small so that optimal demodulation is feasible.

One of the main drawbacks of this scheme is that the decoding complexity increases dramatically with $p$ the constellation size; hence, the computational complexity of this scheme is quite high in the high rate regime. In the second part of the paper, we aim to construct coding schemes with lower decoding complexity while still maintaining desirable properties such as the ability to perform compute-and-forward. Motivated by the successes of using the proposed lattices for lattice-based multistage compute-and-forward, we propose a novel strategy called separation-based multistage compute-and-forward in which we propose two families of signal constellations together with a multilevel coding/multistage decoding scheme specifically tailored for these constellations so that the complexity of the channel coding is dominated by the greatest divisor of the constellation size (may not be a prime number) instead of the constellation size itself. This substantially reduces the decoding complexity for a given size of the constellation (or, equivalently, asymptotic rate) and hence makes the proposed scheme more practically implementable than the existing ones \cite{nazer2011CF} \cite{Engin12} \cite{Feng10}. It should be noted that although we particularly focus on compute-and-forward, the proposed construction of lattices and the proposed scheme are suitable for many other applications that use the lattice structure such as integer-forcing linear receivers \cite{zhan10}, precoded compute-and-forward \cite{Hong13}, lattice interference alignment \cite{ordentlich11} \cite{ordentlich12}, etc.

\subsection{Organization}
The paper is organized as follows. In Section~\ref{sec:problem}, we state the compute-and-forward relay network of Nazer and Gastpar \cite{nazer2011CF} and the problem of maximizing the computation rates. In Section~\ref{sec:prelim}, some background on algebra is provided and both the lattice-based compute-and-forward \cite{nazer2011CF} and the separation-based compute-and-forward scheme \cite{Engin12} are reviewed. We then present the proposed product construction of lattices, show its goodness, and compare it with Construction D in Section~\ref{sec:prod_const}. Lattices based on the proposed product construction are then used to generated nested lattice codes for lattice-based compute-and-forward and similar computation rates as those in \cite{nazer2011CF} are derived in Section~\ref{sec:lattice_CF}. Founded upon the product construction lattices, the proposed constellations and the proposed multilevel coding/multistage decoding for compute-and-forward are given in Section~\ref{sec:const} and Section~\ref{sec:encode_decode}, respectively. The achievable computation rates are computed using Monte-Carlo techniques in Section~\ref{sec:simulation}. Section~\ref{sec:conclude} concludes the paper.

\subsection{Notations}
Throughout the paper, we use $\mbb{Z}$, $\mbb{N}$, $\mbb{R}$, and $\mbb{C}$ to represent the set of integers, natural numbers, real numbers, and complex numbers, respectively. We use $j\defeq \sqrt{-1}$ to denote the imaginary unit. For a complex number $x=a+jb\in \mbb{C}$ where $a,b\in \mbb{R}$, $\bar{x}\defeq a-jb$ denotes its complex conjugate. We use $\Pp(E)$ to denote the probability of the event $E$. Vectors and matrices are written in lowercase boldface and uppercase boldface, respectively. Random variables are written in Sans Serif font. We use $\times$ to denote the Cartesian product and use $\oplus$ and $\odot$ to denote the addition and multiplication operations, respectively, over a finite field where the field size can be understood from the context if it is not specified.

\section{Problem Statement}\label{sec:problem}
The network considered in this paper is the compute-and-forward relay network introduced by Nazer and Gastpar in \cite{nazer2011CF}. Consider a $K$ source nodes $M$ destination nodes AWGN network as shown in Fig~\ref{fig:CF_model}. Each source node has a message $w_k\in\{1,2,\ldots,W\}$, $k\in\{1,\ldots,K\}$ which can alternatively be expressed by a length-$N'$ vector over some finite field, i.e., $\mathbf{w}_k\in\mbb{F}_p^{N'}$ with $W=p^{N'}$. This message is fed into an encoder $\mathcal{E}^N_k$ whose output is a length-$N$ codeword $\mathbf{x}_k\in\mbb{C}^N$. Each codeword is subject to a power constraint given by
\begin{equation}
    \frac{1}{N}\| \mathbf{x}_k\|^2 = \frac{1}{N}\sum_{n=1}^N |x_k[n]|^2 \leq P.
\end{equation}

\begin{figure}
    \centering
    \includegraphics[width=3.5in]{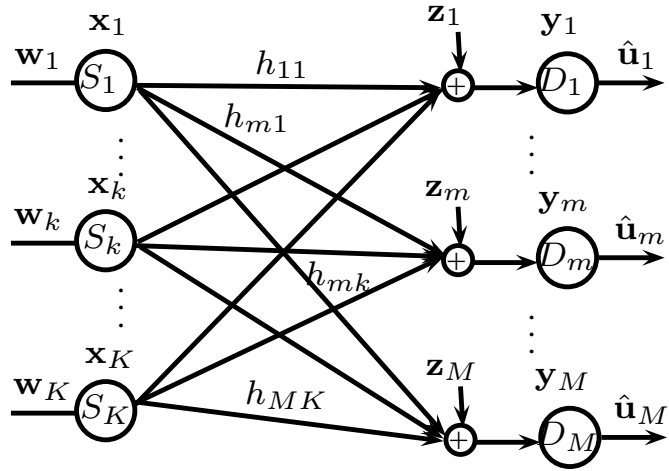}
    \caption{A compute-and-forward relay network where $S_1,\ldots,S_K$ are source nodes and $D_1,\ldots,D_M$ are destination nodes.}
    \label{fig:CF_model}
\end{figure}

The signal observed at destination $m$ is given by
\begin{equation}
    y_m[n] = \sum_{k=1}^K h_{mk}x_k[n] + z_m[n],
\end{equation}
where $h_{mk}\in\mbb{C}$ is the channel coefficient between the source node $k$ and destination $m$, and $z_m[n]\sim \mathcal{CN}(0,1)$. One can also define the channel model for using the channel $N$ times as
\begin{equation}\label{eqn:y_m}
    \mathbf{y}_m = \sum_{k=1}^K h_{mk}\mathbf{x}_k + \mathbf{z}_m.
\end{equation}
Instead of individual messages, each destination node is only interested in computing a function of messages
\begin{equation}
    \mathbf{u}_m = f_m(\mathbf{w}_1, \ldots, \mathbf{w}_K).
\end{equation}
Upon observing $\mathbf{y}_m$, the destination node $m$ forms $\hat{\mathbf{u}}_m = \mathcal{G}^N_m(\mathbf{y}_m)$ an estimate of $\mathbf{u}_m$. These functions are then forwarded to the central destination which can recover all the messages given sufficiently many functions.

\begin{define}[Computation codes]
    For a given set of functions $f_1, \ldots, f_M$, a $(N,N')$ computation code consists of a sequence of encoding/decoding functions $(\mathcal{E}^N_1, \ldots, \mathcal{E}^N_K)/(\mathcal{G}^N_1, \ldots, \mathcal{G}^N_M)$ described above and an error probability given by
    \begin{equation}
        P_{e,m}^{(N)} \triangleq \Pp\left(\left\{\hat{\mathbf{u}}_m \neq \mathbf{u}_m\right\}\right).
    \end{equation}
\end{define}

\begin{define}[Computation rate at the relay $m$]
    For a given channel vector $\mathbf{h}_m \triangleq [h_{m1},\ldots,h_{mK}]^T$ and a given function $f_m$, a computation rate $R(\mathbf{h}_m,f_m)$ is achievable at relay $m$ if for any $\varepsilon>0$ there is an $(N,N')$ computation code such that
    \begin{equation}
        N'\geq NR(\mathbf{h}_m,f_m)/\log(p) \text{~and~} P_{e,m}^{(N)}\leq \varepsilon.
    \end{equation}
    Note that the first condition is equivalent to saying that $W\geq 2^{N R(\mathbf{h}_m,f_m)}$.
\end{define}

In practice, since no cooperation among relays are assumed, a greedy protocol which mimics the behavior of random linear network coding is considered in \cite{nazer2011CF} where each relay computes and forwards the function with the highest computation rate. After that, if given those functions, the central destination is able to recover all the messages, the decoding is successful. Otherwise, the central destination declares failure. The achievable computation rate for the transmitters is then equal to $\min_m R(\mathbf{h}_m,f_m)$. Note that here we consider the case when all the transmitters transmit at a same rate for the sake of simplicity; however, a more general model where they can have different rates was considered in \cite{nazer2011CF}.

%



\section{Preliminaries}\label{sec:prelim}
It should be noted that using the theory in Diophantine approximation, Niesen and Whiting have shown in \cite{niesen11} that the lattice-based compute-and-forward described above is in general very inefficient in terms of degrees of freedom and a coding scheme relying on the channel state information at transmitters has been devised to achieve the full degrees of freedom. Regardless of this deficiency in the asymptotic regime, this paper considers the lattice-based compute-and-forward as it is so far one of the best schemes to exploit the structural gains in the finite SNR regime. Besides, lattice-based schemes are based on a more realistic assumption that channel state information is available only at receivers.

In this section, we briefly summarize background knowledge on lattices and nested lattice codes followed by some preliminaries on abstract algebra. For more details about lattices, lattice codes, and nested lattice codes, the reader is referred to \cite{erez04} \cite{erez05} \cite{conway1999sphere}. We then summarize the lattice-based compute-and-forward paradigm and the main result in \cite{nazer2011CF} and briefly mention the separation-based framework proposed in \cite[Section V]{Engin12}.

\subsection{Lattices}
An $N$-dimensional lattice $\Lambda$ is a discrete subgroup of $\mathbb{R}^N$ which closes under reflection and ordinary vector addition operation. i.e., $\forall \lambda\in\Lambda$ $-\lambda\in\Lambda$, and  $\forall \lambda_1, \lambda_2\in \Lambda$ $\lambda_1 + \lambda_2 \in \Lambda^N$. Some important operations and notions for lattices are defined as follows.
\begin{define}[Lattice Quantizer]
For a $\mathbf{x}\in\mathbb{R}^N$, the nearest neighbor quantizer associated with $\Lambda$ is denoted as
\begin{equation}
    Q_{\Lambda}(\mathbf{x})=\lambda\in\Lambda;~\|\mathbf{x}-\lambda\|\leq\|\mathbf{x}-\lambda'\|~\forall\lambda'\in\Lambda,
\end{equation}
where $\| .\|$ represents the $L_2$-norm operation.
\end{define}

\begin{define}[Fundamental Voronoi Region]
The fundamental Voronoi region $\mathcal{V}_{\Lambda}$ is defined as
\begin{equation}
    \mathcal{V}_{\Lambda}=\{ \mathbf{x}: Q_{\Lambda}(\mathbf{x})=\mathbf{0} \}.
\end{equation}
\end{define}

\begin{define}[Modulo Operation]
The $\mod \Lambda$ operation returns the quantization error with respect to $\Lambda$ and is represented as
\begin{equation}
    \mathbf{x}\mod \Lambda = \mathbf{x}-Q_{\Lambda}(\mathbf{x}).
\end{equation}
\end{define}

The second moment of a lattice is defined as the average energy per dimension of a uniform probability distribution over $\mathcal{V}_{\Lambda}$ as
\begin{equation}
    \sigma^2(\Lambda) = \frac{1}{\text{Vol}(\mc{V}_{\Lambda})}\frac{1}{N}\int_{\mathcal{V}_{\Lambda}}\| \mathbf{x} \|^2 \mathrm{d}\mathbf{x},
\end{equation}
where $\text{Vol}(\mc{V}_{\Lambda})$ is the volume of $\mathcal{V}_{\Lambda}$. The normalized second moment of the lattice is then defined as
\begin{equation}
    G(\Lambda) = \frac{\sigma^2(\Lambda)}{\text{Vol}(\mc{V}_{\Lambda})^{2/N}},
\end{equation}
which is lower bounded by that of a sphere which asymptotically approaches $\frac{1}{2\pi\exp(1)}$ in the limit as $N\rightarrow \infty$. Note that $G(\Lambda)$ is invariant to scaling.

We now define two important notions of goodness for lattices.
\begin{define}[Goodness for MSE Quantization]
    We say that a sequence of lattices is asymptotically good for MSE quantization if
    \begin{equation}
        \underset{N\rightarrow\infty}{\lim} G(\Lambda) = \frac{1}{2\pi \exp(1)}.
    \end{equation}
\end{define}
Consider the unconstrained AWGN channel $\mathbf{Y}=\mathbf{X}+\mathbf{Z}$ where $\mathbf{X}$, $\mathbf{Y}$, and $\mathbf{Z}\sim \mathcal{N}(0,\eta^2\cdot I)$ represent the transmitted signal, the received signal, and the noise, respectively. Moreover, a lattice is adopted as input and there is no power constraint on $\mathbf{X}$ so that any lattice point can be sent.
\begin{define}[Poltyrev-Goodness (or Goodness for AWGN Channel Coding)]
    We say that a sequence of lattices is asymptotically Poltyrev-good if whenever
    \begin{equation}\label{eqn:poltyre_good}
        \eta^2<\frac{\text{Vol}(\mc{V}_{\Lambda})}{2\pi\exp(1)},
    \end{equation}
    the error probability of decoding $\mathbf{X}$ from $\mathbf{Y}$ can be made arbitrarily small.
\end{define}
Here, by Poltyrev-good lattices, we mean a sequence of lattices that approach the Poltyrev limit defined in \eqref{eqn:poltyre_good}. There is a stronger version of Poltyrev-goodness stating that the sequence of lattices achieves an error exponent lower bounded by the Poltyrev exponent \cite{poltyrev94}. However, the proof of achieving Poltyrev exponent is more involved and is not required to prove the main results in this paper. Hence, we do not pursue it in this paper. The interested reader is referred to \cite{poltyrev94} and \cite{erez04}.

\subsection{Algebra}
In this subsection, we provide some preliminaries that will be useful in explaining our results in the following sections. All the Lemmas are provided without proofs for the sake of brevity; however, their proofs can be found in standard textbooks of abstract algebra, see for example \cite{Hungerford74}.

We first recall some basic definitions for commutative rings where many of them are covered in \cite{Feng10} (for those not in \cite{Feng10}, the reader is referred to \cite{Hungerford74}). Let $\mc{R}$ be a commutative ring. Let $a, b\neq 0 \in\mc{R}$ but $ab = 0$, then $a$ and $b$ are \textit{zero divisors}. If $ab = ba = 1$, then we say $a$ is a \textit{unit}. Two elements $a, b\in\mc{R}$ are associates if $a$ can be written as the multiplication of a unit and $b$. A non-unit element $\phi\in\mc{R}$ is a prime if whenever $\phi$ divides $ab$ for some $a, b \in \mc{R}$, either $\phi$ divides $a$ or $\phi$ divides $b$. An \textit{integral domain} is a commutative ring with identity and no zero divisors. An additive subgroup $\mc{I}$ of $\mc{R}$ satisfying $ar\in\mc{I}$ for $a\in\mc{I}$ and $r\in\mc{R}$ is called an \textit{ideal} of $\mc{R}$. An ideal $\mc{I}$ of $\mc{R}$ is proper if $\mc{I}\neq\mc{R}$. An ideal generated by a singleton is called a \textit{principal ideal}. A \textit{principal ideal domain} (PID) is an integral domain in which every ideal is principal. Famous and important examples of PID include $\mbb{Z}$, $\Zi$ and $\Zw$. Let $a, b\in\mc{R}$ and $\mc{I}$ be an ideal of $\mc{R}$; then $a$ is congruent to $b$ \textit{modulo} $\mc{I}$ if $a-b\in\mc{I}$. The quotient ring $\mc{R}/\mc{I}$ of $\mc{R}$ by $\mc{I}$ is the ring with addition and multiplication defined as
\begin{align}
    (a+\mc{I})+(b+\mc{I}) &= (a+b)+\mc{I}, \text{~and} \\
    (a+\mc{I})\cdot(b+\mc{I}) &= (a\cdot b)+\mc{I}.
\end{align}

A proper ideal $\mc{P}$ of $\mc{R}$ is said to be a \textit{prime ideal} if for $a, b\in\mc{R}$ and $ab\in\mc{P}$, then either $a\in\mc{P}$ or $b\in\mc{P}$. Two ideals $\mc{I}_1$ and $\mc{I}_2$ of $\mc{R}$ are \textit{relatively prime} if
\begin{equation}
    \mc{R} = \mc{I}_1+\mc{I}_2 \triangleq \{a+b: a\in\mc{I}_1, b\in\mc{I}_2\}.
\end{equation}
If two ideals $\mc{I}_1$ and $\mc{I}_2$ are relatively prime, then $\mc{I}_1\mc{I}_2 = \mc{I}_1\cap\mc{I}_2$. A proper ideal $\mc{O}$ of $\mc{R}$ is said to be a \textit{maximal ideal} if $\mc{O}$ is not contained in any strictly larger proper ideal. It should be noted that every maximal ideal is also a prime ideal but the reverse may not be true. Let $\mc{R}_1, \mc{R}_2, \ldots, \mc{R}_L$ be a family of rings, the direct product of these rings, denoted by $\mc{R}_1\times \mc{R}_2\times \ldots \times\mc{R}_L$, is the direct product of the additive Abelian groups $\mc{R}_l$ equipped with multiplication defined by the \textit{componentwise} multiplication.

Let $\mc{R}_1$ and $\mc{R}_2$ be rings. A function $\sigma:\mc{R}_1\rightarrow \mc{R}_2$ is a \textit{ring homomorphism} if
\begin{align}
    \sigma(a + b) &= \sigma(a) \oplus \sigma(b) ~\forall a,b\in\mc{R}_1 \text{~and} \\
    \sigma(a\cdot b) &= \sigma(a)\odot \sigma(b),~\forall a,b\in\mc{R}_1.
\end{align}
A homomorphism is said to be \textit{isomorphism} if it is bijective. It is worth mentioning that for an ideal $\mc{I}$ there is a natural ring homomorphism $\mod\mc{I}:\mc{R}\rightarrow \mc{R}/\mc{I}$. A $\mc{R}$-module $\mc{N}$ over a ring $\mc{R}$ consists of an Abelian group ($\mc{N},+$) and an operation $\mc{R}\times \mc{N}\rightarrow \mc{N}$ which satisfies the same axioms as those for vector spaces. Let $\mc{N}_1$ and $\mc{N}_2$ be $\mc{R}$-modules. A function $\varphi:\mc{N}_1\rightarrow \mc{N}_2$ is a \textit{$\mc{R}$-module homomorphism} if
\begin{align}
    \varphi(a + b) &= \varphi(a) \oplus \varphi(b) ~\forall a,b\in\mc{N}_1 \text{~and} \\
    \varphi(r a) &= r \varphi(a),~\forall r\in\mc{R}, a\in\mc{N}_1.
\end{align}

We now present some lemmas which serve as the foundation of the paper.
\begin{lemma}\label{lma:PID}
    If $\mc{R}$ is a PID, then every non-zero prime ideal is maximal.
\end{lemma}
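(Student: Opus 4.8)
The plan is to exploit the defining property of a PID --- that every ideal has a single generator --- to squeeze any proper ideal containing $\mc{P}$ back down to $\mc{P}$ itself. First I would fix a non-zero prime ideal $\mc{P}$ and write $\mc{P}=(\phi)$ using the PID hypothesis; since $\mc{P}$ is proper and non-zero, $\phi$ is a non-zero non-unit. Then I would let $\mc{M}$ be any ideal with $\mc{P}\subseteq\mc{M}\subsetneq\mc{R}$ and invoke the PID hypothesis again to write $\mc{M}=(\mu)$; the goal is to force $\mc{M}=\mc{P}$.

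The core step is a short dichotomy. From $\phi\in\mc{M}=(\mu)$ we get $\phi=\mu r$ for some $r\in\mc{R}$, hence $\mu r\in\mc{P}$; since $\mc{P}$ is prime, either $\mu\in\mc{P}$ or $r\in\mc{P}$. If $\mu\in\mc{P}$, then $(\mu)\subseteq(\phi)$, i.e. $\mc{M}\subseteq\mc{P}$, which together with $\mc{P}\subseteq\mc{M}$ gives $\mc{M}=\mc{P}$. If instead $r\in\mc{P}=(\phi)$, write $r=\phi s$; then $\phi=\mu\phi s$, and because $\mc{R}$ is an integral domain and $\phi\neq 0$ we may cancel $\phi$ to obtain $\mu s=1$. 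This makes $\mu$ a unit, so $\mc{M}=(\mu)=\mc{R}$, contradicting $\mc{M}\subsetneq\mc{R}$. Hence $\mc{M}=\mc{P}$ is the only possibility, so $\mc{P}$ is contained in no strictly larger proper ideal, which is precisely the definition of maximality.

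There is essentially no serious obstacle here: the argument is elementary once the PID hypothesis is used twice (to name generators of $\mc{P}$ and of $\mc{M}$) and the no-zero-divisors property is used once (to cancel $\phi$). The only points demanding a little care are to notice that the cancellation of $\phi$ is exactly where ``integral domain'' --- part of the definition of a PID --- is needed, and to keep straight that ``proper'' is what turns the branch $\mc{M}=\mc{R}$ into a genuine contradiction rather than an admissible conclusion.
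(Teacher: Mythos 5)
Your proof is correct; the paper itself omits the proof of this lemma (deferring to standard references such as Hungerford), and your argument is precisely the standard one: name generators $\mc{P}=(\phi)$ and $\mc{M}=(\mu)$, use primality to split into $\mu\in\mc{P}$ (forcing $\mc{M}=\mc{P}$) or $r\in\mc{P}$ (forcing $\mu$ to be a unit via cancellation in the integral domain, contradicting properness). Nothing is missing, and you correctly flag the two places where the hypotheses are actually used.
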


\begin{lemma}\label{lma:MAX}
    Let $\mc{I}$ be an ideal in a ring $\mc{R}$ with identity $1_{\mc{R}}\neq 0$. If $\mc{I}$ is maximal and $\mc{R}$ is commutative, then the quotient ring $\mc{R}/\mc{I}$ is isomorphic to a field.
\end{lemma}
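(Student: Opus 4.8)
The plan is to show directly that the commutative ring $\mc{R}/\mc{I}$ has a multiplicative identity and that every nonzero element is invertible; since a field is by definition such a ring, this gives the isomorphism (with the field being $\mc{R}/\mc{I}$ itself). First I would record the easy structural facts: because $\mc{R}$ is commutative with identity and $\mc{I}$ is an ideal, the quotient $\mc{R}/\mc{I}$ inherits commutativity and has identity element $1_{\mc{R}}+\mc{I}$. Moreover, since a maximal ideal is by definition proper, $1_{\mc{R}}\notin\mc{I}$, so $1_{\mc{R}}+\mc{I}\neq 0+\mc{I}$ and the quotient ring is not the zero ring.

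The core step is the construction of inverses. Fix any nonzero element of $\mc{R}/\mc{I}$, i.e. $a+\mc{I}$ with $a\notin\mc{I}$. I would form the set $\mc{J}\defeq \mc{I}+\mc{R}a=\{i+ra: i\in\mc{I},\ r\in\mc{R}\}$ and check it is an ideal of $\mc{R}$: it is clearly an additive subgroup, and it absorbs multiplication by $\mc{R}$ because $\mc{I}$ is an ideal and $\mc{R}a$ is closed under left multiplication by $\mc{R}$ (here commutativity makes the one-sided check suffice). Then $\mc{I}\subseteq\mc{J}$, and the containment is strict since $a=0+1_{\mc{R}}\cdot a\in\mc{J}$ but $a\notin\mc{I}$. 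By maximality of $\mc{I}$, the only possibility is $\mc{J}=\mc{R}$, hence $1_{\mc{R}}\in\mc{J}$.

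Writing $1_{\mc{R}}=i+ra$ for suitable $i\in\mc{I}$, $r\in\mc{R}$, and reducing modulo $\mc{I}$ (using that $\mod\mc{I}$ is a ring homomorphism, as noted after Lemma~\ref{lma:MAX}'s preliminaries), I get $1_{\mc{R}}+\mc{I}=(ra)+\mc{I}=(r+\mc{I})(a+\mc{I})$. Thus $r+\mc{I}$ is a multiplicative inverse of $a+\mc{I}$. Since $a+\mc{I}$ was an arbitrary nonzero element, every nonzero element of the commutative ring $\mc{R}/\mc{I}$ is a unit, so $\mc{R}/\mc{I}$ is a field, which is trivially isomorphic to a field.

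There is no real obstacle here; the only points that need care are (i) invoking $1_{\mc{R}}\neq 0$ together with properness of $\mc{I}$ to guarantee the quotient is not the zero ring (so that "field" is meaningful), and (ii) using commutativity and the existence of $1_{\mc{R}}$ to ensure $\mc{I}+\mc{R}a$ is a genuine two-sided ideal strictly containing $\mc{I}$, which is exactly where the hypotheses are consumed.
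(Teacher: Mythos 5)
Your argument is correct: forming $\mc{J}=\mc{I}+\mc{R}a$ for $a\notin\mc{I}$, invoking maximality to conclude $\mc{J}=\mc{R}$, and extracting an inverse from $1_{\mc{R}}=i+ra$ is exactly the standard proof, and you handle the two points that need care (properness of $\mc{I}$ so the quotient is nonzero, and commutativity plus the identity so that $\mc{I}+\mc{R}a$ is a genuine ideal strictly containing $\mc{I}$). The paper itself states this lemma without proof, deferring to standard algebra texts such as \cite{Hungerford74}, and your proof is precisely the textbook argument being deferred to, so there is nothing to reconcile.
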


\begin{lemma}[Chinese Remainder Theorem]\label{lma:CRT}
    Let $\mc{R}$ be a commutative ring, and $\mc{I}_1,\ldots,\mc{I}_n$ be ideals in $\mc{R}$, such that they are relatively prime. Then,
    \begin{equation}
        \mc{R}/\cap_{i=1}^n\mc{I}_i \cong \mc{R}/\mc{I}_1\times\ldots\times\mc{R}/\mc{I}_n.
    \end{equation}
\end{lemma}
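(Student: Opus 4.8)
The plan is to realize the isomorphism concretely through the natural componentwise reduction map and then reduce everything to a surjectivity statement via the first isomorphism theorem; I read the hypothesis as saying the ideals are \emph{pairwise} relatively prime, i.e.\ $\mc{I}_i + \mc{I}_j = \mc{R}$ for all $i \neq j$. First I would introduce the ring homomorphism $\varphi \from \mc{R} \to \mc{R}/\mc{I}_1 \times \cdots \times \mc{R}/\mc{I}_n$ defined by $\varphi(r) = (r + \mc{I}_1, \ldots, r + \mc{I}_n)$. Because each coordinate of $\varphi$ is the natural ring homomorphism $\mod \mc{I}_i$ and the ring operations on the direct product are componentwise, $\varphi$ is a ring homomorphism essentially by inspection. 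Its kernel is $\{r \in \mc{R} : r \in \mc{I}_i \text{ for every } i\} = \cap_{i=1}^n \mc{I}_i$, so the first isomorphism theorem already gives $\mc{R}/\cap_{i=1}^n \mc{I}_i \cong \operatorname{im}\varphi$; it remains to prove $\varphi$ is onto.

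For $n = 2$, relative primality furnishes $e_1 \in \mc{I}_1$ and $e_2 \in \mc{I}_2$ with $e_1 + e_2 = 1$, so $e_2 \equiv 1 \pmod{\mc{I}_1}$, $e_2 \equiv 0 \pmod{\mc{I}_2}$ and symmetrically for $e_1$; given a target $(a_1 + \mc{I}_1,\, a_2 + \mc{I}_2)$, the element $r = a_1 e_2 + a_2 e_1$ maps to it, so $\varphi$ is surjective and the case $n=2$ is done. For general $n$ I would induct on $n$. The crucial step is to check that $\mc{I}_1$ is relatively prime to $\mc{J} \defeq \cap_{i=2}^n \mc{I}_i$: for each $i \geq 2$ write $1 = a_i + b_i$ with $a_i \in \mc{I}_1$, $b_i \in \mc{I}_i$; then $\prod_{i=2}^n b_i \in \prod_{i=2}^n \mc{I}_i \subseteq \mc{J}$, while expanding $\prod_{i=2}^n (1 - a_i)$ shows this same product equals $1$ minus an element of $\mc{I}_1$, so $1 \in \mc{I}_1 + \mc{J}$. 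Applying the $n=2$ case to the pair $(\mc{I}_1, \mc{J})$ gives $\mc{R}/(\mc{I}_1 \cap \mc{J}) \cong \mc{R}/\mc{I}_1 \times \mc{R}/\mc{J}$, and since $\mc{I}_1 \cap \mc{J} = \cap_{i=1}^n \mc{I}_i$, combining with the inductive hypothesis $\mc{R}/\mc{J} \cong \prod_{i=2}^n \mc{R}/\mc{I}_i$ yields the claim.

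I expect the surjectivity argument --- and within it, the verification that $\mc{I}_1$ and the tail intersection $\cap_{i \geq 2}\mc{I}_i$ are again relatively prime --- to be the only real obstacle; it is exactly the place where pairwise relative primality is genuinely used, and it also relies on the identity recalled in the preliminaries that relatively prime ideals satisfy $\mc{I}_1\mc{I}_2 = \mc{I}_1 \cap \mc{I}_2$ (needed to keep track of $\mc{J}$ as both a product and an intersection through the induction). The remaining ingredients --- that $\varphi$ is a homomorphism, the kernel computation, and the appeal to the first isomorphism theorem --- are routine and need no more than the definitions of quotient rings and direct products recalled above.
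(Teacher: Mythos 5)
Your proof is correct, but there is nothing in the paper to compare it against: the paper states the Chinese Remainder Theorem as background (its Lemma~\ref{lma:CRT}) and explicitly omits the proof, deferring to standard abstract-algebra texts such as Hungerford. Your argument is exactly the canonical textbook proof those references give: the natural componentwise reduction $\varphi\colon \mathcal{R}\to \mathcal{R}/\mathcal{I}_1\times\cdots\times\mathcal{R}/\mathcal{I}_n$, the kernel computation $\ker\varphi=\cap_{i=1}^n\mathcal{I}_i$ plus the first isomorphism theorem, the $n=2$ surjectivity via $1=e_1+e_2$, and the induction step showing $\mathcal{I}_1+\bigl(\cap_{i\geq 2}\mathcal{I}_i\bigr)=\mathcal{R}$ by expanding $\prod_{i\geq 2}(1-a_i)$; reading ``relatively prime'' as pairwise comaximal is the right (and standard) interpretation, and it is how the paper uses the lemma (e.g.\ for $p_1\mathbb{Z},\ldots,p_L\mathbb{Z}$ and for the $\phi_l\mathbb{Z}[\omega]$). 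Two minor remarks: the identity $\mathcal{I}_1\mathcal{I}_2=\mathcal{I}_1\cap\mathcal{I}_2$ that you flag is not actually needed anywhere in your induction, since $\prod_{i\geq 2}b_i\in\cap_{i\geq 2}\mathcal{I}_i$ already follows from $b_i\in\mathcal{I}_i$ for each $i$; and your proof uses the existence of $1\in\mathcal{R}$, which is harmless here because every ring the paper applies the lemma to ($\mathbb{Z}$, $\mathbb{Z}[i]$, $\mathbb{Z}[\omega]$ and their quotients) has identity.
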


\begin{example}\label{exp:ring_iso}
    Consider the PID $\mbb{Z}$ and one of its ideal $6\mbb{Z}$. Note that one can do the prime factorization $6=2\cdot 3$. Now since 2 and 3 are primes, $2\mbb{Z}$ and $3\mbb{Z}$ are prime ideals. Also, since $2\mbb{Z}+3\mbb{Z}=\mbb{Z}$, they are relatively prime. This implies that $2\cdot 3\mbb{Z} = 2\mbb{Z}\cap 3\mbb{Z}$. One has that
    \begin{align}
        \mbb{Z}_6 &\cong \mbb{Z}/6\mbb{Z} = \mbb{Z}/2\cdot 3 \mbb{Z} \nonumber \\
        &\overset{(a)}{=} \mbb{Z}/2\mbb{Z}\cap 3\mbb{Z} \nonumber \\
        &\overset{(b)}{\cong} \mbb{Z}/2\mbb{Z}\times\mbb{Z}/3\mbb{Z} \nonumber \\
        &\overset{(c)}{\cong} \mbb{F}_2\times \mbb{F}_3,
    \end{align}
    where (a) follows from that $2\mbb{Z}$ and $3\mbb{Z}$ are relatively prime, (b) follows from Chinese Remainder Theorem, and (c) is from Lemma~\ref{lma:MAX}. One isomorphism is given as follows,
    \begin{align}
        0&\leftrightarrow (0,0),~~1\leftrightarrow (1,1), \nonumber \\
        2&\leftrightarrow (0,2),~~3\leftrightarrow (1,0), \nonumber \\
        4&\leftrightarrow (0,1),~~5\leftrightarrow (1,2), \nonumber
    \end{align}
    and the multiplication is defined componentwise. One can easily see from this example that the product of two fields may not be a field. In this example, the product is isomorphic to $\mbb{Z}_6$ which is a ring but not a field.
\end{example}

We now introduce two important PIDs, namely the Eisenstein integers $\Zw$ and the Gaussian integers $\Zi$. The ring of Eisenstein integers $\Zw$ is the collection of complex numbers of the form $a+b\omega$ where $a,b\in\mbb{Z}$ and $\omega=-\frac{1}{2}+j\frac{\sqrt{3}}{2}$. The ring of Gaussian integers $\Zi$ is the collection of complex numbers of the form $a+bj$ where again $a,b\in\mbb{Z}$. Both $\Zw$ and $\Zi$ are PIDs.  The group of units (closed under multiplication) in $\Zw$ is $\{\pm 1,\pm \omega, \pm \omega^2\}$ and that in $\Zi$ is $\{\pm 1, \pm j\}$. An Eisenstein integer $\phi$ is an Eisenstein prime if and only if one of the following mutually exclusive conditions hold:
\begin{enumerate}
    \item $|\phi|^2=3$,
    \item $\phi$ is equal to the product of a unit and any rational prime congruent to $2\mod 3$,
    \item $|\phi|^2$ is any rational prime congruent to $1\mod 3$.
\end{enumerate}
This means that $3$ ramifies in $\Zw$. All the rational primes congruent to $2\mod 3$ stay inert in $\Zw$ and those congruent to $1\mod 3$ split into two distinct primes in $\Zw$. An Gaussian integer $\phi$ is an Gaussian prime if and only if one of the following mutually exclusive conditions hold:
\begin{enumerate}
    \item $|\phi|^2=2$,
    \item $\phi$ is equal to the product of a unit and any rational prime congruent to $3\mod 4$,
    \item $|\phi|^2$ is any rational prime congruent to $1\mod 4$.
\end{enumerate}
This means that $2$ ramifies in $\Zi$. All the rational primes congruent to $3\mod 4$ stay inert in $\Zi$ and those congruent to $1\mod 4$ split into two distinct primes in $\Zi$.

For those Eisenstein primes (Gaussian primes) $\phi$ with $|\phi|^2=\phi\cdot\bar{\phi}$ being rational primes congruent to $1\mod 3$ ($1\mod 4$), one can verify that $\phi$ and $\bar{\phi}$ are both Eisenstein primes (Gaussian primes) but they are not associates. Moreover, it has been shown in \cite{german} that for every $x\geq 7$, there exists a rational prime of this form between $x$ and $2x$. Thus, the choices of $\phi$ satisfying the above property are abundant. In the following sections, we will focus on the ring of Eisenstein integers for the sake of brevity but the schemes and the results for the ring of integers and the ring of Gaussian integers can be obtained in a straightforward fashion.



\subsection{Lattice-Based Compute-and-Forward in \cite{nazer2011CF}}
In \cite{nazer2011CF}, Nazer and Gastpar proposed a novel paradigm called compute-and-forward which exploits the algebraic structure of lattices. Using lattices for communication has a rich history in the literature. Typically, a lattice that is Poltyrev-good is required to guarantee reliable communication \cite{poltyrev94} \cite{loeliger97} \cite{forney2000}. In addition to the Poltyrev-goodness, shaping has to be taken into account in order to achieve the AWGN channel capacity. By carefully shaping the lattices with their sublattices, Erez and Zamir show that lattices can indeed achieve AWGN capacity with lattice decoding \cite{erez04}. Functional computation in physical layer with such lattices has been realized to asymptotically approach the capacity for the bidirectional relay networks in \cite{wilson07} \cite{wilson10} \cite{nam08} \cite{nam10}. The reader is referred to a tutorial paper \cite{nazer11tutorial} for more details about using lattices for the bidirectional relay channels. One of the main contribution of \cite{nazer2011CF} is to provide a means to harness interference when there is no channel state information at transmitters. In the sequel, we briefly summarize the main results and the coding scheme in \cite{nazer2011CF}.

In \cite{nazer2011CF}, the functions $f_m$ are chosen to be linear combinations of codewords with coefficients being integers $\mathbf{a}_m=[a_{m1},\ldots,a_{mK}]$. Hence, the functions are completely characterized by those coefficients and the achievable computation rates are written as $R(\mathbf{h}_m,\mathbf{a}_m)$. These integer combinations of codewords correspond to linear combinations of messages
\begin{equation}
    \mathbf{u}_m = b_{m1} \mathbf{w}_1\oplus\ldots\oplus b_{mK} \mathbf{w}_K.
\end{equation}

Each source node adopts an identical nested lattice code of Erez and Zamir \cite{erez04}. Specifically, let $(\Lambda_f,\Lambda_c)$ be two lattices such that $\Lambda_c$ is a sublattice of $\Lambda_f$, i.e., $\Lambda_c \subseteq \Lambda_f$, where $\Lambda_f$ is Poltyrev-good and $\Lambda_c$ is simultaneously good for MSE quantization and Poltyrev-good. Each source node uses $\Lambda_f\cap\mc{V}_{\Lambda_c}$ a set of minimum-energy coset representatives of the quotient group $\Lambda_f/\Lambda_c$ as codebook. The source node $k$ first bijectively maps its message $\mathbf{w}_k$ to a lattice codeword $\mathbf{t}_k\in\Lambda_f\cap\mc{V}_{\Lambda_c}$ and sends a dithered version
\begin{equation}
    \mathbf{x}_k = (\mathbf{t}_k -\mathbf{u}_k) \mod \Lambda_c.
\end{equation}

Given a Gaussian integer vectors $\mathbf{a}_m=[a_{m1},\ldots,a_{mK}]^T$, the relay $m$ scales the received signal by $\alpha_m$ and adds the dithers back to form
\begin{align}\label{eqn:NG_y}
    \mathbf{y}'_m &= \left(\alpha_m\mathbf{y}_m + \sum_{k=1}^K a_{mk}\mathbf{u}_k\right) \mod \Lambda_c \nonumber \\
    &= (\mathbf{t}_{eq,m} + \mathbf{z}_{eq,m}) \mod \Lambda_c,
\end{align}
where
\begin{equation}\label{eqn:NG_t_eq}
    \mathbf{t}_{eq,m}= \sum_{k=1}^K a_{mk}\mathbf{t}_{mk} \mod \Lambda_c,
\end{equation}
and
\begin{equation}\label{eqn:NG_z_eq}
    \mathbf{z}_{eq,m} = \left(\alpha_m\mathbf{z}_m + \sum_{k=1}^K (\alpha_m h_{mk}-a_{mk})\mathbf{x}_k\right).
\end{equation}
Due to the linearity of lattice codes, $\mathbf{t}_{eq,m}$ is a codeword in $\Lambda_f\cap\mc{V}_{\Lambda_c}$ and hence one can directly compute this function at the relay $m$. Moreover, note that the distribution of the equivalent noise $\mathbf{z}_{eq,m}$ is in general not Gaussian but would become Gaussian in the limit as $N\rightarrow \infty$ if $\Lambda_c$ is good for quantization due to the Gaussian approximation principle \cite{zamir96} \cite[Remark 5]{forney03}. This results in a computation rate given by
\begin{equation}
    R(\mathbf{h}_m,\mathbf{a}_m,\alpha_m) = \log^+\left(\frac{P}{|\alpha_m|^2+P\|\alpha_m\mathbf{h}_m-\mathbf{a}_m\|^2}\right),
\end{equation}
where $\log^+(.)\defeq \max\{0,\log(.)\}$. Intuitively speaking, one can arbitrarily rotate and scale the received signals by $\alpha_m$ such that the resulting channel coefficients would be arbitrarily close to the Gaussian integer vector $\mathbf{a}_m$ and hence make the second term in the denominator vanish. However, one might as well end up blowing up the noise which is the first term in the denominator. It turns out that the optimal choice of $\alpha_m$ is the MMSE estimator given by
\begin{equation}
    \alpha_{\text{MMSE},m} = \frac{P\mathbf{h}^*_m\mathbf{a}_m}{1+P\|\mathbf{h}_m\|^2}.
\end{equation}
Plugging the $\alpha_{\text{MMSE},m}$, one obtains the main result in \cite{nazer2011CF} as follows.
\begin{theorem}[Nazer-Gastpar]
    For given channel coefficients $\mathbf{h}_m$ and Gaussian integer vector $\mathbf{a}_m$, the following computation rate is achievable at the relay $m$.
    \begin{align}\label{eqn:com_rate_m}
        R(\mathbf{h}_m,\mathbf{a}_m) &= R(\mathbf{h}_m,\mathbf{a}_m,\alpha_{\text{MMSE},m}) \nonumber \\
        &= \log^+\left(\left(\|\mathbf{a}_m\|^2-\frac{P|\mathbf{h}_m^*\mathbf{a}_m|^2}{1+P\|\mathbf{h}_m\|^2}\right)^{-1}\right).
    \end{align}
\end{theorem}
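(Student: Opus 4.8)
The statement to be shown is an achievability result, but almost every ingredient is already in place in the description of the Nazer--Gastpar scheme above: for \emph{any} fixed scaling coefficient $\alpha_m\in\C$, the computation rate
\[
    R(\mathbf{h}_m,\mathbf{a}_m,\alpha_m)=\log^+\!\left(\frac{P}{|\alpha_m|^2+P\|\alpha_m\mathbf{h}_m-\mathbf{a}_m\|^2}\right)
\]
is achievable. Since $\alpha_m$ is a parameter chosen at relay $m$ and is not tied to the codebook, the relay is free to pick the $\alpha_m$ that maximizes this expression, so the rate $\sup_{\alpha_m\in\C}R(\mathbf{h}_m,\mathbf{a}_m,\alpha_m)$ is achievable. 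The plan is therefore: (i) minimize the effective noise variance $N_{\mathrm{eff}}(\alpha_m)\defeq|\alpha_m|^2+P\|\alpha_m\mathbf{h}_m-\mathbf{a}_m\|^2$ over $\alpha_m\in\C$; (ii) identify the minimizer as $\alpha_{\mathrm{MMSE},m}$; and (iii) substitute back into $\log^+(P/N_{\mathrm{eff}})$ and simplify.

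For step (i), I would expand
\[
    N_{\mathrm{eff}}(\alpha_m)=\bigl(1+P\|\mathbf{h}_m\|^2\bigr)\,|\alpha_m|^2-2P\,\Re\!\bigl(\bar\alpha_m\,\mathbf{h}_m^*\mathbf{a}_m\bigr)+P\|\mathbf{a}_m\|^2,
\]
which is a strictly convex quadratic in $\alpha_m$ since $1+P\|\mathbf{h}_m\|^2>0$. Writing $\beta\defeq 1+P\|\mathbf{h}_m\|^2$ and $c\defeq\mathbf{h}_m^*\mathbf{a}_m$ and completing the square gives
\[
    N_{\mathrm{eff}}(\alpha_m)=\beta\left|\alpha_m-\frac{Pc}{\beta}\right|^2+P\|\mathbf{a}_m\|^2-\frac{P^2|c|^2}{\beta},
\]
so the unique minimizer is $\alpha_m=Pc/\beta$, i.e.\ $\alpha_{\mathrm{MMSE},m}=\dfrac{P\mathbf{h}_m^*\mathbf{a}_m}{1+P\|\mathbf{h}_m\|^2}$, which is exactly the MMSE coefficient quoted above. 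Equivalently, one can observe that $N_{\mathrm{eff}}(\alpha_m)$ is the per-dimension mean-squared error of estimating $\sum_k a_{mk}\mathbf{x}_k$ from $\alpha_m\mathbf{y}_m$ (the dithering, which renders the $\mathbf{x}_k$ mutually uncorrelated of per-dimension power $P$ and independent of $\mathbf{z}_m$, makes this MSE computation clean), so $\alpha_{\mathrm{MMSE},m}$ is the linear MMSE estimator given by the orthogonality principle.

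For steps (ii)--(iii), the minimum value read off the completed square is
\[
    N_{\mathrm{eff}}(\alpha_{\mathrm{MMSE},m})=P\|\mathbf{a}_m\|^2-\frac{P^2|\mathbf{h}_m^*\mathbf{a}_m|^2}{1+P\|\mathbf{h}_m\|^2}
    =P\left(\|\mathbf{a}_m\|^2-\frac{P|\mathbf{h}_m^*\mathbf{a}_m|^2}{1+P\|\mathbf{h}_m\|^2}\right),
\]
hence $\dfrac{P}{N_{\mathrm{eff}}(\alpha_{\mathrm{MMSE},m})}=\left(\|\mathbf{a}_m\|^2-\dfrac{P|\mathbf{h}_m^*\mathbf{a}_m|^2}{1+P\|\mathbf{h}_m\|^2}\right)^{-1}$, and applying $\log^+(\cdot)$ gives the claimed $R(\mathbf{h}_m,\mathbf{a}_m)$. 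The $\log^+$ simply absorbs the degenerate case where the bracketed quantity exceeds $1$, in which the relay transmits nothing and the achievable rate is $0$.

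I expect the optimization above to be entirely routine; the genuine substance of the result is the achievability of $R(\mathbf{h}_m,\mathbf{a}_m,\alpha_m)$ for fixed $\alpha_m$, which I would take as given from the scheme above (it is the content of \cite{nazer2011CF}): it needs a nested pair $\Lambda_c\subseteq\Lambda_f$ with $\Lambda_f$ Poltyrev-good and $\Lambda_c$ simultaneously Poltyrev-good and good for MSE quantization, together with the Gaussian-approximation argument that the non-Gaussian effective noise $\mathbf{z}_{eq,m}$ of \eqref{eqn:NG_z_eq} behaves, in the volume-to-noise sense as $N\to\infty$, like AWGN of per-dimension variance $N_{\mathrm{eff}}(\alpha_m)$, so that Poltyrev-goodness of $\Lambda_f$ recovers $\mathbf{t}_{eq,m}$ reliably below $\log^+(P/N_{\mathrm{eff}}(\alpha_m))$. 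The only place where care is genuinely needed, should one wish to keep that step self-contained, is the bookkeeping: verifying that the self-interference term $\sum_k(\alpha_m h_{mk}-a_{mk})\mathbf{x}_k$ contributes exactly $P\|\alpha_m\mathbf{h}_m-\mathbf{a}_m\|^2$ to the per-dimension noise variance, and tracking the complex-versus-real normalization inside the logarithm so that it matches the stated rate formula.
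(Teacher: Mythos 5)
Your proposal is correct and follows essentially the same route as the paper: the paper also takes the fixed-$\alpha_m$ rate $\log^+\bigl(P/(|\alpha_m|^2+P\|\alpha_m\mathbf{h}_m-\mathbf{a}_m\|^2)\bigr)$ as given from the Nazer--Gastpar scheme, identifies the MMSE coefficient $\alpha_{\text{MMSE},m}=P\mathbf{h}_m^*\mathbf{a}_m/(1+P\|\mathbf{h}_m\|^2)$ as the optimal scaling, and substitutes it to obtain \eqref{eqn:com_rate_m}. Your explicit completion of the square merely spells out the optimization the paper summarizes with ``it turns out that the optimal choice of $\alpha_m$ is the MMSE estimator,'' and your algebra checks out.
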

After computing $\mathbf{t}_{eq,m}$, the relay $m$ can recover the function $\mathbf{u}_m = \bigoplus_{k=1}^K b_{mk}\mathbf{w}_k$ where $b_{mk}\defeq\sigma(a_{mk})$ with $\sigma$ being the ring homomorphism used in Construction A for generating the underlying lattice \cite{LeechSloane71} \cite{conway1999sphere}. At the central destination, one can invert the matrix $\mathbf{B}=[\mathbf{b}_1,\ldots,\mathbf{b}_M]$ to recover all the messages if the matrix is invertible.

\begin{remark}
    The coding scheme in \cite{nazer2011CF} in fact separately transmits signals in the real and the imaginary parts. However, we find it easier for us to describe the scheme by directly looking at the complex field and Gaussian integers. In fact, this has motivated the generalization of the compute-and-forward paradigm to the ring of Eisenstein integers in \cite{Engin12} where each element in $\mathbf{A}$ is chosen from $\Zw$ instead of $\Zi$.
\end{remark}


\subsection{Separation-Based Compute-and-Forward in \cite[Section V]{Engin12}}
As mentioned above, the ensemble of lattices considered in \cite{nazer2011CF} is based on the construction of Erez and Zamir \cite{erez04} and hence is infinitely-dimensional and simultaneously good for channel coding and good for quantization. However, due to the lack of efficient shaping techniques in practice, we consider a somewhat more practical framework called the separation-based compute-and-forward proposed in \cite[Section V]{Engin12}. This framework attempts to separate the design of channel coding and data modulation so that one can let the dimension of channel coding grow and design the shaping to be optimal in a small dimensional space (in spite of being suboptimal in the $N$ dimensional space).

The separation-based compute-and-forward is shown in Fig.~\ref{fig:sep_framework} and is briefly summarized in the following. Without loss of generality, we first assume the message at source $k$ to be a length-$N'$ vector over some finite field $\mbb{F}_p$ with $p$ to be determined later, i.e., $\mathbf{w}_k\in \mbb{F}_p^{N'}$. The channel coding employed by all the source nodes is restricted to be the same linear code $C$ over $\mbb{F}_p$ in order to ensure that linear combinations (over $\mbb{F}_p$) of codewords themselves are valid codewords. On the other hand, the constellation has to be carefully chosen so that one can still benefit from the structural gain offered by the compute-and-forward strategy. It turns out that the key condition for this is a ring homomorphism between the extended version (to infinite constellation) of the signal constellation and $\mbb{F}_p$, the field that channel coding is implemented.

\begin{figure}
    \centering
    \includegraphics[width=5in]{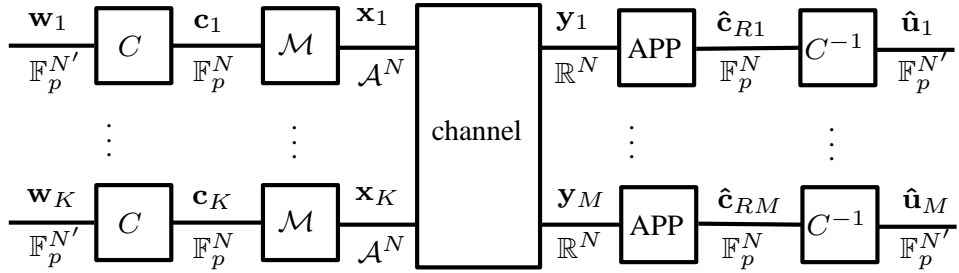}
    \caption{The separation-based compute-and-forward framework.}
    \label{fig:sep_framework}
\end{figure}

In \cite[Section V]{Engin12}, using fundamentals in commutative rings, the authors identify a family of signal constellations according to quotient rings of Eisenstein integers. Note that here and throughout, we will slightly abuse the notation and directly write the constellation as the quotient ring but it should be understood as a set of the minimum-energy coset representatives of that quotient rings. With this notation, the constellation proposed in \cite[Section V]{Engin12} is given by
\begin{equation}\label{eqn:S}
    \mc{A} \triangleq \Zw/\phi\Zw,
\end{equation}
where $|\phi|^2\triangleq p$ is a rational prime congruent to $1\mod 3$. The output is then scaled by $\gamma$ for satisfying the power constraint. Since $\Zw$ is a PID, from Lemma~\ref{lma:PID}, one has that $\phi\Zw$ is a maximal ideal. Also, the order of the quotient ring $\Zw/\phi\Zw$ is $|\Zw/\phi\Zw|=|\phi|^2=p$ (which is typically represented as $\left[\Zw:\phi\Zw\right]$ in the abstract algebra language). Hence, from Lemma~\ref{lma:MAX}, $\Zw/\phi\Zw\cong\mbb{F}_p$. i.e., the following ring isomorphism $\mc{M}$ exists,
\begin{equation}\label{eqn:homo}
    \Zw/\phi\Zw \overset{\mc{M}}{\underset{\mc{M}^{-1}}{\leftrightarrows}} \mbb{F}_p.
\end{equation}
Moreover, one can write $\Zw$ as the disjoint union of $p$ cosets of $\phi\Zw$ as follows,
\begin{equation}
    \Zw = \underset{a\in \mc{A}}{\bigcup} \left(\phi\Zw + a \right),
\end{equation}
where $\mc{A}$ is a set of minimum-energy coset representatives and $|\mc{A}|=p$. This induces a natural homomorphism from $\Zw$ to $\Zw/\phi\Zw$ via the $\mod \phi\Zw$ operation and hence $\sigma\defeq \mc{M}^{-1} \circ \mod\phi \Zw$ is a ring homomorphism described as follows,
\begin{equation}
    \sigma: \Zw \overset{\mod \phi \Zw}{\rightarrow} \Zw/\phi\Zw \overset{\mc{M}}{\underset{\mc{M}^{-1}}{\leftrightarrows}} \mbb{F}_p.
\end{equation}

The mapping from codeword elements to actual transmitted signals (before scaling) is then chosen to be this ring isomorphism $\mc{M}$. It has been shown in \cite{Feng10} \cite{Engin12} that the existence of such ring homomorphism $\sigma$ is crucial for exploiting the structural gains in compute-and-forward. Furthermore, this constellation provides other properties such as good shaping gain (in two-dimensional space) and good quantization of channel coefficients as $\Zw$ corresponds to hexagonal lattices. Upon receiving the signals, the receiver $m$ first computes the a posteriori probabilities (APP) for a given set of coefficients $[b_{m1},\ldots,b_{mK}]$ and then decodes to the codeword $\hat{\mathbf{c}}_{Rm}\in C$ that maximizes the APP. Note that since the encoders adopt the same linear code, one can then decode the corresponding $\hat{\mathbf{u}}_m$.

Unlike the framework considered in \cite{nazer2011CF} and \cite[Section III]{Engin12} in which infinitely-dimensional lattices are employed for channel coding and data modulation jointly, the separation approach allows one to let the dimension of channel coding grow while keeping the constellation size small so that optimal decoding is feasible. This also allows the use of well-developed codes on graphs (e.g., non-binary LDPC) for channel coding and enables one to employ iterative decoding such as message passing algorithm \cite{urbanke_book} to further reduce the decoding complexity. One key drawback of this scheme is that the channel coding has to work over $\mbb{F}_p$ for a constellation with $p$ elements. Hence, the decoding complexity increases dramatically as $p$ increases. This will be relaxed when the constellations proposed in Section~\ref{sec:const} are used together with the multilevel coding/multistage decoding proposed in Section~\ref{sec:encode_decode}.

\section{Proposed Product Construction of Lattices}\label{sec:prod_const}
Motivated by Theorem 2 in \cite{Feng10}, we propose the product construction of lattices shown in Fig.~\ref{fig:lattice_const}. Note that the proposed product construction can be used for generating lattices over $\mbb{Z}$, $\Zi$, and $\Zw$. In this section, we will only talk about $\mbb{Z}$ and $\Zw$ as the lattices over $\Zi$ can be obtained in a similar way as those over $\Zw$. The proposed lattices heavily rely on the existence of ring homomorphisms described in the following theorem.

\begin{theorem}
    Let $p_1, p_2,\ldots,p_L$ be a collection of distinct rational primes. There exists a ring isomorphism $\mc{M}:\times_{l=1}^L \mbb{F}_{p_l}\rightarrow \mbb{Z}/\Pi_{l=1}^L p_l \mbb{Z}$. Moreover,
    \begin{equation}
        \sigma:\mbb{Z}\overset{\mod \Pi_{l=1}^L p_l \mbb{Z}}\rightarrow\mbb{Z}/\Pi_{l=1}^L p_l \mbb{Z} \overset{\mc{M}}{\underset{\mc{M}^{-1}}{\leftrightarrows}} \mbb{F}_{p_1}\times\ldots\times\mbb{F}_{p_L},
    \end{equation}
    is a ring homomorphism. Similarly, let $\phi_1,\phi_2,\ldots,\phi_L$ be a collection of distinct Eisenstein primes that are relatively prime and with norm $|\phi_l|^2=q_l$ for $l\in\{1,\ldots,L\}$. There exists a ring isomorphism $\mc{M}:\times_{l=1}^L \mbb{F}_{q_l}\rightarrow \Zw/\Pi_{l=1}^L\phi_l \Zw $. Moreover,
    \begin{equation}
        \sigma:\Zw\overset{\mod \Pi_{l=1}^L\phi_l \Zw}\rightarrow\Zw/\Pi_{l=1}^L\phi_l \Zw \overset{\mc{M}}{\underset{\mc{M}^{-1}}{\leftrightarrows}} \mbb{F}_{q_1}\times\ldots\times\mbb{F}_{q_L},
    \end{equation}
    is a ring homomorphism.
\end{theorem}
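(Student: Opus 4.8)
The plan is to build the isomorphism $\mc{M}$ by composing the Chinese Remainder Theorem with Lemma~\ref{lma:MAX}, exactly mirroring Example~\ref{exp:ring_iso} but with $L$ factors. First I would treat the rational-integer case. Since $p_1,\ldots,p_L$ are distinct rational primes, the ideals $p_l\mbb{Z}$ are maximal (Lemma~\ref{lma:PID}, as $\mbb{Z}$ is a PID) and pairwise relatively prime: for $l\neq l'$, $\gcd(p_l,p_{l'})=1$ so $p_l\mbb{Z}+p_{l'}\mbb{Z}=\mbb{Z}$. Pairwise relative primality of the $p_l\mbb{Z}$ gives $\cap_{l=1}^L p_l\mbb{Z}=\prod_{l=1}^L p_l\mbb{Z}$ (iterating the fact quoted in the preliminaries that relatively prime ideals satisfy $\mc{I}_1\mc{I}_2=\mc{I}_1\cap\mc{I}_2$, together with the observation that $\prod_{l<L}p_l\mbb{Z}$ and $p_L\mbb{Z}$ remain relatively prime). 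Then the Chinese Remainder Theorem (Lemma~\ref{lma:CRT}) yields
\begin{equation}
    \mbb{Z}/\textstyle\prod_{l=1}^L p_l\mbb{Z}\;=\;\mbb{Z}/\cap_{l=1}^L p_l\mbb{Z}\;\cong\;\mbb{Z}/p_1\mbb{Z}\times\cdots\times\mbb{Z}/p_L\mbb{Z},
\end{equation}
and since each $p_l\mbb{Z}$ is maximal, Lemma~\ref{lma:MAX} gives $\mbb{Z}/p_l\mbb{Z}\cong\mbb{F}_{p_l}$, so composing these isomorphisms (the product of ring isomorphisms is a ring isomorphism under componentwise operations) produces a ring isomorphism $\mc{M}:\times_{l=1}^L\mbb{F}_{p_l}\to\mbb{Z}/\prod_{l=1}^L p_l\mbb{Z}$.

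Next I would verify that $\sigma$ is a ring homomorphism. The natural quotient map $\mod\prod_{l=1}^L p_l\mbb{Z}:\mbb{Z}\to\mbb{Z}/\prod_{l=1}^L p_l\mbb{Z}$ is a ring homomorphism (this is the natural homomorphism attached to an ideal, as recalled in the algebra subsection), and $\mc{M}^{-1}$ is a ring isomorphism, hence a ring homomorphism; the composition of ring homomorphisms is a ring homomorphism, so $\sigma=\mc{M}^{-1}\circ(\mod\prod_l p_l\mbb{Z})$ is one. It is worth remarking that $\sigma$ is surjective but not injective, consistent with the fact (illustrated in Example~\ref{exp:ring_iso}) that $\times_l\mbb{F}_{p_l}$ is a ring, generally not a field.

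For the Eisenstein case the argument is structurally identical, with $\mbb{Z}$ replaced by $\Zw$ and $p_l$ by $\phi_l$. The ring $\Zw$ is a PID, so each nonzero prime ideal $\phi_l\Zw$ is maximal by Lemma~\ref{lma:PID}; the hypothesis that the $\phi_l$ are relatively prime means precisely $\phi_l\Zw+\phi_{l'}\Zw=\Zw$ for $l\neq l'$, so again $\cap_{l=1}^L\phi_l\Zw=\prod_{l=1}^L\phi_l\Zw$ and Lemma~\ref{lma:CRT} gives $\Zw/\prod_l\phi_l\Zw\cong\times_l\Zw/\phi_l\Zw$. By Lemma~\ref{lma:MAX}, $\Zw/\phi_l\Zw$ is a field; its cardinality is the index $[\Zw:\phi_l\Zw]=|\phi_l|^2=q_l$ (since $q_l$ is a rational prime congruent to $1\bmod 3$, $\phi_l$ is an Eisenstein prime of the split type, matching the constellation discussion around \eqref{eqn:S}), so $\Zw/\phi_l\Zw\cong\mbb{F}_{q_l}$. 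Composing yields the ring isomorphism $\mc{M}:\times_{l=1}^L\mbb{F}_{q_l}\to\Zw/\prod_l\phi_l\Zw$, and exactly as before $\sigma=\mc{M}^{-1}\circ(\mod\prod_l\phi_l\Zw)$ is a ring homomorphism.

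The proof is essentially bookkeeping on top of the three lemmas, so there is no deep obstacle; the one point that needs care is the reduction $\cap_{l=1}^L\mc{I}_l=\prod_{l=1}^L\mc{I}_l$ for $L>2$, since the stated fact $\mc{I}_1\mc{I}_2=\mc{I}_1\cap\mc{I}_2$ is only for two ideals. I would handle this by an induction that simultaneously tracks relative primality of the partial product $\prod_{l<j}\mc{I}_l$ with $\mc{I}_j$: from $\mc{I}_j+\mc{I}_l=\mc{R}$ for each $l<j$ one gets $\mc{I}_j+\prod_{l<j}\mc{I}_l=\mc{R}$ (picking $a_l\in\mc{I}_j$, $b_l\in\mc{I}_l$ with $a_l+b_l=1$ and multiplying the relations), which feeds the next induction step. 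I would also be mildly careful to state that ``the $p_l$ are distinct rational primes'' is exactly what guarantees pairwise relative primality of the $p_l\mbb{Z}$, and for the Eisenstein case that distinctness alone is not enough (associated primes generate the same ideal), which is why the theorem explicitly assumes the $\phi_l$ are relatively prime.
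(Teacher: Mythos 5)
Your proposal is correct and follows essentially the same route as the paper's proof: establish the isomorphism $\mc{M}$ via the Chinese Remainder Theorem (Lemma~\ref{lma:CRT}) together with Lemma~\ref{lma:MAX} applied to the maximal ideals $p_l\mbb{Z}$ (resp.\ $\phi_l\Zw$), and then obtain $\sigma$ as the composition of the natural quotient homomorphism with $\mc{M}^{-1}$. The only difference is that you fill in details the paper leaves implicit, namely the induction showing $\cap_{l=1}^L\mc{I}_l=\Pi_{l=1}^L\mc{I}_l$ for $L>2$ and the explicit treatment of the Eisenstein case (where the cardinality count $[\Zw:\phi_l\Zw]=|\phi_l|^2=q_l$ holds for any Eisenstein prime, not only the split ones, so your parenthetical restriction to $q_l\equiv 1\bmod 3$ is unnecessary but harmless).
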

\begin{proof}
    We only prove the theorem for $\mbb{Z}$. It follows that
    \begin{align}
        \mbb{Z}/\Pi_{l=1}^L p_l \mbb{Z} &\overset{(a)}{\cong} \mbb{Z}/\cap_{l=1}^L p_l\mbb{Z} \nonumber \\
        &\overset{(b)}{\cong} \mbb{Z}/p_1\mbb{Z}\times\ldots\times\mbb{Z}/p_L\mbb{Z} \nonumber \\
        &\overset{(c)}{\cong} \mbb{F}_{p_1}\times\ldots\times\mbb{F}_{p_L},
    \end{align}
    where (a) follows from that $p_l \mbb{Z}$ are relatively prime, (b) is from Chinese Remainder Theorem in Lemma~\ref{lma:CRT}, and (c) is due to the fact that $\mbb{Z}$ is a PID and Lemma~\ref{lma:MAX}. Therefore, the ring isomorphism $\mc{M}$ between the quotient ring $\mbb{Z}/\Pi_{l=1}^L p_l\mbb{Z}$ and the product of fields $\times_{l=1}^L \mbb{F}_{p_l}$ exists. Moreover, the modulo operation is a natural ring homomorphism; hence, $\sigma\defeq \mc{M}^{-1}\circ \mod \Pi_{l=1}^L p_l\mbb{Z}$ is a ring homomorphism.
\end{proof}
%

Throughout the paper, we will refer to a set of minimum-energy coset representatives of $\mbb{Z}/\Pi_{l=1}^L p_l \mbb{Z}$ ($\Zw/\Pi_{l=1}^L\phi_l\Zw$) as the \textit{signal constellation} or \textit{constellation} in short. Also, let $\mathcal{C}^l$, $l\in\{1,\ldots, L\}$, be the set of \textit{all} linear $(N,m^l)$ codes over $\mbb{F}_{p_l}$ ($\mbb{F}_{q_l}$) and $\mathcal{C}\triangleq \mathcal{C}^1\times\ldots\times \mathcal{C}^L$. i.e., $\mathcal{C}$ is the collection of all codes that can be represented as the Cartesian product of $L$ linear codes whose input lengths are $m^1,\ldots, m^2$, respectively, over $\mbb{F}_{p_l}$ ($\mbb{F}_{q_l}$). The construction consists of the following steps.
\begin{enumerate}
    \item Let $C = C^1\times\ldots \times C^L \in \mathcal{C}$ where $C^l \in \mathcal{C}^l$, $l\in\{1,\ldots,L\}$.
    \item Define $\Lambda^* \triangleq \mathcal{M}(C^1,\ldots,C^L)$ where for all the vectors $\mathbf{c}^1,\ldots,\mathbf{c}^L$ with equal length, $\mathcal{M}(\mathbf{c}^1,\ldots,\mathbf{c}^L)$ is defined as the elementwise mapping.
    \item Replicate $\Lambda^*$ over the entire $\mbb{R}^N$ ($\mbb{C}^N$) to form $\Lambda \triangleq  \Lambda^* + \Pi_{l=1}^L p_l \mbb{Z}^N$ ($\Lambda \triangleq \Lambda^* + \Pi_{l=1}^L \phi_l(\Zw)^N$).
\end{enumerate}
\begin{figure}
    \centering
    \includegraphics[width=4in]{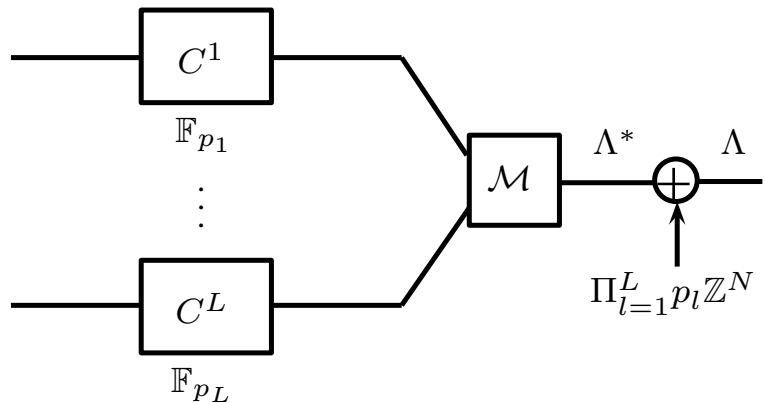}
    \caption{The proposed product construction of lattices.}
    \label{fig:lattice_const}
\end{figure}
Note that scaling by real (complex) numbers does not change the structure of a lattice; therefore, throughout the paper, we use $\Lambda \triangleq  \Lambda^* + \Pi_{l=1}^L p_l \mbb{Z}^N$ and $\Lambda \triangleq  \left(\Pi_{l=1}^L p_l\right)^{-1}\Lambda^* + \mbb{Z}^N$ interchangeably. For the lattices generated by the proposed product construction, we can show the following properties.
\begin{theorem}\label{thm:lattice}
    $\Lambda$ is a lattice. Moreover, there exists a sequence of such lattices that are simultaneously good for MSE quantization and Poltyrev-good under multistage decoding.
\end{theorem}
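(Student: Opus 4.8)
The plan is to establish the three claims in order: that $\Lambda$ is a lattice, that a subsequence is good for MSE quantization, and that a subsequence is Poltyrev-good under multistage decoding; the first is essentially structural, while the latter two follow the classical Loeliger/Erez--Litsyn--Zamir random-coding arguments adapted to the product structure and to the layered decoder.

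First I would verify that $\Lambda$ is a lattice. Since $\Lambda = \Lambda^* + \Pi_{l=1}^L p_l \Z^N$ and $\Lambda^* = \mc{M}(C^1,\ldots,C^L)$, the key observation is that the elementwise map $\mc{M}\colon \times_{l=1}^L \F_{p_l}^N \to (\Z/\Pi_l p_l\Z)^N$ is the coordinatewise extension of the ring isomorphism of the preceding theorem, and $C = C^1\times\cdots\times C^L$ is closed under addition componentwise. Hence $\Lambda \bmod \Pi_l p_l\Z^N$ is an additive subgroup of $(\Z/\Pi_l p_l\Z)^N$, so $\Lambda$ is closed under addition and under negation and is discrete; this makes it a lattice. (Equivalently one can argue that this is exactly Construction A applied to the linear code $\mc{M}(C^1,\ldots,C^L)$ over the ring $\Z/\Pi_l p_l\Z$, and Construction A over a $\Z$-module always yields a lattice.) I would also record the volume: $\mathrm{Vol}(\mc{V}_\Lambda) = \big(\Pi_{l=1}^L p_l\big)^{N}\big/\prod_{l=1}^L p_l^{\,k_l} = \Pi_{l=1}^L p_l^{\,N-k_l}$ where $k_l=\dim C^l$, which I will need to normalize the density in the goodness arguments.

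Next, goodness. The strategy is a random-coding argument over the ensemble $\mc{C}=\mc{C}^1\times\cdots\times\mc{C}^L$: draw each $C^l$ uniformly at random from all linear $(N,k_l)$ codes over $\F_{p_l}$ (independently across $l$), with $p_l\to\infty$ at suitable rates and $k_l/N$ chosen to fix the overall lattice density at the target value. Because $\mc{M}$ is a bijection, a uniformly random codeword of $C$ with at least one nonzero block maps to a point that is ``almost uniform'' on the torus $(\Z/\Pi_l p_l\Z)^N$ — more precisely, conditioned on which subset of blocks is nonzero, the nonzero-block coordinates are uniform over the corresponding nonzero cosets — and this is exactly the Minkowski--Hlawka-type averaging property needed. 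For Poltyrev-goodness I would show the expected number of nonzero lattice points inside a noise-typical ball of radius $\sqrt{N(1-\delta)\mathrm{Vol}^{2/N}/(2\pi e)}$ vanishes, but \emph{crucially} I must show this under \emph{multistage} (successive, block-by-block) decoding rather than Euclidean lattice decoding: decode the $\F_{p_l}$-component of the coset at stage $l$ (say in increasing order of $p_l$, or whatever order the companion lemmas dictate), peel it off, and proceed. So the union bound is run stage by stage, and at stage $l$ one needs the per-stage effective noise (the projection of the AWGN modulo the already-decoded coarser structure $\Pi_{i<l}$) to be below the per-stage Poltyrev threshold set by $p_l^{\,N-k_l}$; summing the $L$ stage error probabilities, with $L$ fixed, still gives a vanishing total. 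For MSE-quantization goodness of $\Lambda_c$ I would invoke the standard fact (Erez--Litsyn--Zamir) that a random Construction-A lattice that is Poltyrev-good can be made simultaneously quantization-good by a covering/second-moment argument; here I would need to check that the argument is compatible with the product structure, i.e., that the covering radius is controlled simultaneously across all $L$ levels, which follows because each level's code is itself a good (covering-efficient) random linear code.

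The main obstacle, I expect, is precisely the interaction between the multistage decoder and the union bound: ordinary Poltyrev-goodness is about the Euclidean geometry of $\Lambda$, but multistage decoding commits to the algebraic block structure, so I must argue that \emph{no} nonzero lattice point causes a stage-$l$ error, and the effective noise at stage $l$ is not the original Gaussian but its image after quotienting by $\Pi_{i<l}p_i$ (and, if an earlier stage erred, a shifted version). Handling the error-propagation cleanly — either by conditioning on all earlier stages being correct and bounding the resulting per-stage densities, or by a genie-aided argument — and verifying that the $L$ per-stage density thresholds can be met \emph{simultaneously} with a single choice of the $k_l$'s (subject to the fixed total density constraint and $p_l\to\infty$) is the delicate accounting. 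Once that bookkeeping is set up, each individual stage is a textbook random-linear-code-over-$\F_{p_l}$ Poltyrev argument, and the quantization-goodness piece rides on the existing machinery with only the product-structure compatibility to check.
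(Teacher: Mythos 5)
Your first part (that $\Lambda$ is a lattice) matches the paper's argument and is fine. The two goodness claims, however, contain genuine gaps, and in both cases the gap sits exactly where you defer the work. For Poltyrev-goodness, your plan is a Loeliger-style Minkowski--Hlawka/union-bound count of lattice points in a noise ball, run ``stage by stage,'' but you never resolve how the per-stage error event is to be analyzed, and your description of the stage-$l$ effective noise as the AWGN ``after quotienting by $\Pi_{i<l}p_i$'' imports the nested, divide-by-$\phi$ peeling of Construction D into a construction whose levels are \emph{parallel} CRT components: here level $l$ is isolated by reduction modulo $p_l$ (or by summing over the undecoded levels in the APP), not by successive division, so the per-stage threshold is not ``set by $p_l^{\,N-k_l}$'' in the scaled-noise sense you describe. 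More importantly, at stage $l$ the residual channel (aliased Gaussian plus the uniform interference of the undecoded levels $l'>l$) is not an unconstrained-AWGN problem, so the claim that ``each individual stage is a textbook random-linear-code-over-$\mathbb{F}_{p_l}$ Poltyrev argument'' needs justification: for a \emph{fixed} prime $p_l$, linear codes force uniform inputs and a coset structure, and one must prove the induced per-level channel is symmetric/regular for random linear codes to achieve its conditional mutual information. This regularity argument (in the sense of Delsarte--Piret, via explicit permutations $\tau_b$ exploiting that $\mathcal{M}$ is an isomorphism) is precisely how the paper proves the coded levels work, following Forney's mod-$\Lambda'$/chain-rule route rather than Loeliger's; it also lets the paper keep each $p_l$ fixed and only send the product $\Pi_l p_l$ to infinity (for the uncoded level), whereas your sketch quietly assumes each $p_l\to\infty$, which both weakens the point of the construction and still does not by itself repair the per-stage analysis, since lattice points that are nonzero in only a subset of levels are not uniform on the torus and so the plain MH averaging you invoke does not apply to them.

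For MSE-quantization goodness the gap is similar in kind: you propose to ``invoke'' the Erez--Litsyn--Zamir covering machinery and assert that the covering radius is controlled because ``each level's code is itself a good (covering-efficient) random linear code.'' That justification is not valid as stated --- covering goodness level by level does not control the covering/second moment of the product-construction lattice, because the relevant uniformity is a \emph{joint} property of all levels through the CRT map. The paper instead adapts the Ordentlich--Erez argument directly: for sub-message vectors with all blocks nonzero, $\mathcal{M}(\mathbf{C}(\mathbf{w}))$ is exactly uniform on $\mathbb{Z}^N/q\mathbb{Z}^N$ (with $q=\Pi_l p_l$ chosen of order $N^{3/2}$), which gives the per-point covering probability bound, and then the rate condition $\sum_l (m^l/N)\log p_l = \tfrac12\log(4/V_N^{2/N})+\delta$ plus a Markov-inequality step yields quantization goodness; compatibility of this choice with the Poltyrev requirement is then checked via Forney's capacity-separability. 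So both of your ``delicate accounting'' placeholders are the actual content of the theorem, and the specific mechanisms you point to (Construction-D-style peeling, per-level covering goodness) would not carry the argument without the regularity and joint-uniformity ideas the paper supplies.
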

\begin{IEEEproof}
    See Appendix~\ref{apx:lattice}. The proof of the existence of Poltyrev-good lattices closely follows the proof by Forney in \cite{forney2000} instead of the Loeliger's proof in \cite{loeliger97}. The proof of the existence of lattices good for MSE quantization is a modification of a recent result by Ordentlich and Erez in \cite{ordentlich_erez_simple}.
\end{IEEEproof}

\begin{remark}
    When proving the Poltyrev-goodness, unlike Construction A lattices letting $p\rightarrow\infty$ and Construction D lattices letting $L\rightarrow\infty$, for the product construction lattices, we let $\Pi_{l=1}^L p_l\rightarrow \infty$ and allow one to play with these two parameters. Therefore, the proposed construction allows us to achieve the Poltyrev-limit with a significantly lower decoding complexity compared to Construction A lattices as now the complexity is not determined by the number of elements in $\Lambda^*$ but by the greatest divisor in the prime factorization of $|\Lambda^*|$. However, the complexity is higher than that of the Construction D lattices in \cite{BarnesSloane83} \cite{forney2000} whose complexity is always determined by coding over $\mbb{F}_2$. This is a direct consequence of that all primes should be distinct in the proposed product construction.
\end{remark}

\begin{remark}
    The possible value of $|\Lambda^*|$ is confined in a subset of $\mbb{N}$. For example, for $\mbb{Z}$, the proposed constellation allows $|\Lambda^*|$ to be any square-free integer \cite{sloaneOEIS}. Nonetheless, the choices of such $|\Lambda^*|$ are very rich and absorb Construction A lattices as special cases. The square-free integers are closely related to the M\"{o}bius function and can be identified efficiently without factorizing integers. The interested reader is referred to \cite{Auil13}.
\end{remark}

\subsection{Comparison with Construction D Lattices and its Variant}
At first glance, due to its multilevel nature, the proposed product construction looks similar to Construction D \cite{BarnesSloane83} \cite[Page 232]{conway1999sphere}. We compare and contrast the proposed product construction lattices and the Construction D lattices over $\Zw$ as it is more general. In order to make a detailed comparison, we first summarize Construction D extended to $\Zw$.

Let $\phi$ be an Eisenstein prime. From Lemma~\ref{lma:PID}, since $\Zw$ is a PID, $\phi \Zw$ is a prime ideal and hence a maximal ideal. If $|\phi|=q$ is a rational prime congruent to $2\mod 3$, from Lemma~\ref{lma:MAX}, we have that $\Zw/ \phi \Zw \cong \mbb{F}_{|\phi|^2} = \mbb{F}_{q^2}$. On the other hand, if $|\phi|^2 = q$ is a rational prime congruent to $1\mod 3$, again from Lemma~\ref{lma:MAX}, $\Zw/ \phi \Zw \cong \mbb{F}_{|\phi|^2}=\mbb{F}_q$. Therefore, in either case, we have a ring isomorphism $\mc{M}$ from $\mbb{F}_{|\phi|^2}$ to $\Zw$. This ring isomorphism will later be used for mapping codewords to constellations.

We first construct a set of nested linear codes $C^1 \subseteq C^2 \subseteq  \ldots \subseteq C^{r+1}$ over $\mbb{F}_{|\phi|^2}$ where $C^{r+1}$ is the trivial $(N,N)$-code and $C^l$ is a $(N,m^l)$-code for $l\in\{1,2,\ldots r\}$ with $m^1 \leq \ldots \leq m^r$. The codes are guaranteed to be nested by choosing $\{ \mathbf{g}_1,\ldots,\mathbf{g}_N \}$ which spans $C^{r+1}$ and then using the first $m^l$ vectors $\{ \mathbf{g}_1,\ldots,\mathbf{g}_{m^l} \}$ to generate $C^l$.

We are now ready to state the extended Construction D.

{\bf \underline{Construction D}} A lattice $\Lambda_{\text{D}}$ generated by the extended Construction D over $\Zw$ with $r+1$ level is given as follows.
\begin{equation}
    \Lambda_{\text{D}} = \bigcup \left\{ \phi^r (\Zw)^N + \sum_{1\leq l \leq r} \sum_{1\leq i \leq m^l} \phi^{l-1} \mc{M}(a_{li})\mc{M}(\mathbf{g}_i) |a_{li}\in\mbb{F}_{|\phi|^2}  \right\},
\end{equation}
where all the operations are over $\mbb{C}$.

A variant of Construction D called Construction by Code Formula has attracted a lot of attention since its introduction by Forney in \cite{forney88}, see for example \cite{harshan12} \cite{YanLingWu13} \cite{KosiOggier13}. Here, we also provide an extension of Construction by Code Formula to the complex field $\mbb{C}$. It is known that Construction by Code Formula does not always produce a lattice and it has been shown very recently in \cite{KosiOggier13} that one requires the nested linear codes closed under Schur product in order to have a lattice. Similar to Construction by Code Formula, the extended version does not always generate a lattice.

{\bf \underline{Construction by Code Formula}} Let $C^1 \subseteq C^2 \subseteq  \ldots \subseteq C^{r+1}$ be nested linear codes over $\mbb{F}_{|\phi|^2}$ as described above. A lattice $\Lambda_{\text{code}}$ generated by Construction by Code Formula over $\Zw$ with $r+1$ level is given as follows.
\begin{equation}
    \Lambda_{\text{code}} = \phi^r (\Zw)^N + \phi^{r-1} \mc{M}(C^r) + \ldots + \phi \mc{M}(C^{2}) + \mc{M}(C^{1}).
\end{equation}

Both Construction D and Construction by Code Formula admit an efficient (but suboptimal) decoding algorithm as follows. The decoder first reduces the received signal by modulo $\phi\Zw$. This will get rid of all the contribution from $C^2, \ldots, C^{r+1}$ and the remainder is a codeword from the linear code $C^1$. After successfully decoding, the decoder reconstructs and subtracts out the contribution from $C^1$ and divides the results by $\phi$. Now the signal becomes a noisy version (with variance $|\phi|^2$ times smaller than the original noise) of a lattice point from a lattice generated by the same construction with only $r$ level. So the decoder can then repeat the above procedure until all the codewords are decoded. In \cite{forney2000}, Forney \textit{et al.} show that Construction D lattices together with the above decoding procedure achieves the sphere bound and hence is Poltyrev-good.

One main difference between the proposed product construction and the two constructions described above is that the proposed product construction relies solely on the ring homomorphism while Construction D and Construction by Code Formula require the linear code at each level to be nested into those in the subsequent levels. In addition to this, another fundamental difference is that the proposed product construction allows the codes used in different levels to be over different fields while Construction D and Construction by Code Formula require them to be over the same field. Moreover, the mapping from $(\mbb{F}_{|\phi|^2}^N)^{r+1}$ to $(\Zw)^N$ as a whole used in the two constructions may not possess the ring homomorphism property as required by our product construction. i.e., sum of lattice points may not correspond to sum of codewords over $\mbb{F}_{|\phi|^2}$ for $C^1,\ldots,C^{r+1}$. The lack of ring homomorphisms renders these two constructions not straightforward to be used for compute-and-forward. This difference will be further discussed in Remark~\ref{rmk:D_revisit} in Section~\ref{sec:const2_sp}.

\section{Proposed Lattice-Based Multistage Compute-and-Forward}\label{sec:lattice_CF}
In this section, the proposed lattices are used to generate a sequence of nested lattice codes. Due to the multilevel nature, the proposed nested lattice codes admit multistage decoding and hence are computationally less complex. We then replace the nested lattice codes adopted in \cite{nazer2011CF} by the nested lattice codes generated by the proposed product construction and obtain similar achievable computation rates with multistage decoding. As corollaries, we also recover the main results in \cite{erez04} and \cite{wilson10} by our proposed lattices with multistage decoding.

\subsection{Main Result}
Here, we only consider the ring of integers $\mbb{Z}$ and the real channel coefficients, i.e., $h_{mk}\in\mbb{R}$. The results for the complex coefficients with either $\Zi$ or $\Zw$ can be obtained in a similar fashion. Let $p_1,\ldots,p_L$ be rational primes and $\mc{M}:\times_{l=1}^L\mbb{F}_{p_l}\rightarrow \mbb{Z}/\Pi_{l=1}^L p_l\mbb{Z}$ be the ring isomorphism. We note that each integer $a_{mk}\in\mbb{Z}$ can be represented as
\begin{equation}
    a_{mk} = \bar{a}_{mk} + \Pi_{l=1}^L p_l\tilde{a}_{mk},
\end{equation}
where $\bar{a}_{mk}\in\mbb{Z}/\Pi_{l=1}^L p_l \mbb{Z}, \tilde{a}_{mk}\in\mbb{Z}$. Moreover, each $\bar{a}_{mk}$ can be represented by its coordinate in $\times_{l=1}^L \mbb{F}_{p_l}$ as
\begin{equation}
    \bar{a}_{mk} = \mc{M}(b_{mk}^1,\ldots,b_{mk}^L).
\end{equation}
We can also write $\mathbf{a}_m = \bar{\mathbf{a}}_m + q\tilde{\mathbf{a}}_m$ where $\bar{\mathbf{a}}_m=\mc{M}(\mathbf{b}_m^1,\ldots,\mathbf{b}_m^L)$. In our proposed scheme, each transmitter decomposes its message $\mathbf{w}_k$ into $L$ sub-messages $\mathbf{w}^l_k$ over $\mbb{F}_{p_l}$ for $l\in\{1,\ldots,L\}$. The functions we aim to compute and the relay $m$ are given by
\begin{equation}
    \mathbf{u}_m^l \defeq b_{m1}^l\odot\mathbf{w}_{1}^l\oplus\ldots\oplus b_{mK}^l\odot\mathbf{w}_{K}^l,
\end{equation}
for $l\in\{1,\ldots,L\}$. We are now ready to state the main result of this section.


\begin{theorem}
    For given channel coefficients $\mathbf{h}_m$ and integer vector $\mathbf{a}_m$, the computation rate $R(\mathbf{h}_m,\mathbf{a}_m)$ described in \eqref{eqn:com_rate_m} is achievable under multistage decoding at the relay $m$.
\end{theorem}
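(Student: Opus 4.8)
The plan is to transplant the Nazer--Gastpar dithered nested-lattice scheme onto the nested lattice codes built from the product construction of Section~\ref{sec:prod_const}, and to replace the single-shot lattice decoder at the relay by an $L$-stage decoder that peels off one prime ``layer'' at a time. First I would fix rational primes $p_1,\dots,p_L$ with $q\defeq\Pi_{l=1}^L p_l$ and build a nested pair $(\Lambda_f,\Lambda_c)$, $\Lambda_c\subseteq\Lambda_f$, from the product construction: $\Lambda_f$ is a scaled copy of $\Lambda=\Lambda^*+q\mbb{Z}^N$ and $\Lambda_c$ a suitably scaled sublattice of the same form, obtained by extending the self-similar construction of Ordentlich and Erez exactly as quoted in the proof of Theorem~\ref{thm:lattice}. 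Theorem~\ref{thm:lattice} then supplies a sequence of such pairs with $\Lambda_c$ simultaneously good for MSE quantization and Poltyrev-good under multistage decoding and $\Lambda_f$ Poltyrev-good under multistage decoding, and the codebook is $\Lambda_f\cap\mc{V}_{\Lambda_c}$. The transmitters dither and reduce modulo $\Lambda_c$ exactly as in \cite{nazer2011CF}; the relay scales by $\alpha_{\text{MMSE},m}$, adds back the dithers, and reduces modulo $\Lambda_c$ to obtain $\mathbf{y}'_m=(\mathbf{t}_{eq,m}+\mathbf{z}_{eq,m})\bmod\Lambda_c$ as in \eqref{eqn:NG_y}--\eqref{eqn:NG_z_eq}, with $\mathbf{t}_{eq,m}=\sum_k a_{mk}\mathbf{t}_k\bmod\Lambda_c\in\Lambda_f\cap\mc{V}_{\Lambda_c}$. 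Since $\Lambda_c$ is quantization-good, the Gaussian approximation principle makes $\mathbf{z}_{eq,m}$ converge to an i.i.d.\ Gaussian vector with precisely the variance appearing in \eqref{eqn:com_rate_m}.

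The next step is to exploit the ring-homomorphism structure that is the defining feature of the product construction. Writing $a_{mk}=\bar a_{mk}+q\tilde a_{mk}$ with $\bar a_{mk}=\mc{M}(b_{mk}^1,\dots,b_{mk}^L)$, the contribution $q\tilde a_{mk}\mathbf{t}_k$ lies in $q\Lambda_f$, hence vanishes under the level-$l$ reduction, and the $l$-th component map $\sigma_l$ of $\sigma$ (cf.\ Lemma~\ref{lma:CRT}) applied to $\mathbf{t}_{eq,m}$ returns exactly $\mathbf{u}_m^l=\bigoplus_k b_{mk}^l\odot\mathbf{w}_k^l$. Multistage decoding then proceeds recursively: at stage $l$ the decoder reduces the current observation onto the $\mbb{F}_{p_l}$ layer, decodes the $(N,m^l)$ code $C^l$ to recover $\mathbf{u}_m^l$ by Poltyrev-goodness of the corresponding level lattice, reconstructs and subtracts that layer's contribution, rescales (dividing the residual noise variance by the appropriate factor), and passes to stage $l+1$. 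Because the decomposition is produced by the lattice chain $q\Lambda_f\subseteq\cdots\subseteq\Lambda_f$ underlying Theorem~\ref{thm:lattice}, at each stage the effective noise variance and the volume of the effective coding lattice scale together, so all $L$ stages sit at the Poltyrev limit simultaneously; allocating stage rates $R_l$ with $\sum_{l=1}^L R_l=R(\mathbf{h}_m,\mathbf{a}_m)$ — the splitting of the single logarithm in \eqref{eqn:com_rate_m} along the prime factorization of $q$ — is then feasible. A union bound over the $L$ stages (finitely many, each with vanishing error) drives $P_{e,m}^{(N)}$ to zero, and letting $q\to\infty$ recovers \eqref{eqn:com_rate_m} in full.

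The main obstacle I expect is the ``no-rate-loss'' claim for the $L$-stage decoder: one must show the per-layer effective noise stays below the Poltyrev threshold of the per-layer lattice for \emph{every} $l$ at once, which amounts to careful bookkeeping of second moments and Voronoi-region volumes along the tower and a direct appeal to the ``under multistage decoding'' refinement of Poltyrev-goodness established in Theorem~\ref{thm:lattice}. A secondary, purely combinatorial point is error propagation: a mistake at stage $l$ corrupts all later stages, but since $L$ is fixed and independent of $N$, the union bound absorbs this with no asymptotic penalty. Finally, specializing to $K=1$ (resp.\ the symmetric two-user case) recovers the Erez--Zamir AWGN-capacity result \cite{erez04} (resp.\ the Wilson \textit{et al.} two-way relay result \cite{wilson10}) with multistage decoding, as noted at the start of this section.
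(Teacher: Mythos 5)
Your proposal follows essentially the same route as the paper: the Nazer--Gastpar dithered nested-lattice scheme is transplanted onto a nested pair built from the product construction (generalizing the Ordentlich--Erez ensemble), the decomposition $a_{mk}=\bar a_{mk}+q\tilde a_{mk}$ with $\bar a_{mk}=\mc{M}(b_{mk}^1,\ldots,b_{mk}^L)$ and the ring isomorphism turn $\mathbf{t}_{eq,m}$ into per-level combinations $\bigoplus_k b^l_{mk}\odot\mathbf{c}^l_k$, the Gaussian approximation principle handles $\mathbf{z}_{eq,m}$, Theorem~\ref{thm:lattice} supplies the goodness properties, and the quantization-goodness of $\Lambda_c$ yields the rate in \eqref{eqn:com_rate_m}; this is exactly the paper's argument.

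One piece of your reasoning, however, does not apply and should be replaced. The description of the stage-$l$ step as ``subtract the decoded layer, rescale, dividing the residual noise variance by the appropriate factor,'' and the ensuing claim that ``the effective noise variance and the volume of the effective coding lattice scale together so all $L$ stages sit at the Poltyrev limit simultaneously,'' is the Construction~D mechanism, where the levels are nested by powers of a single prime and each stage sees noise shrunk by $|\phi|^2$. In the product construction the levels correspond to \emph{distinct} primes $p_1,\ldots,p_L$, there is no inter-stage noise rescaling, and the per-level rates are not equal per-level Poltyrev thresholds: they are the conditional mutual informations $I(\msf{Y};\msf{C}^l\mid\msf{C}^1,\ldots,\msf{C}^{l-1})$, which sum to the full rate by the chain rule, and each is achievable by a linear code over $\mbb{F}_{p_l}$ because the equivalent channel at each level (after conditioning on previously decoded levels and marginalizing over undecoded ones) is regular in the sense of Delsarte--Piret --- this is precisely the content of the ``Poltyrev-good under multistage decoding'' part of Theorem~\ref{thm:lattice} (Appendix~\ref{apx:lattice}) that you invoke. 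Since you do ultimately appeal to that theorem, the gap is one of justification rather than architecture, but as literally written the noise-scaling bookkeeping you propose would not go through.
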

Now suppose that we only have one transmitter and one receiver and the channel coefficient is $1$. The channel reduces to the point to point AWGN channel and the above theorem recovers the main results in \cite{erez04} with multistage decoding.
\begin{corollary}
    For the point to point AWGN channel, the proposed nested lattice codes together with multistage decoding achieves a rate of, $\frac{1}{2}\log\left(1+P\right)$, the channel capacity.
\end{corollary}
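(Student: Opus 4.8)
The plan is to obtain the corollary as an immediate specialization of the preceding theorem to the degenerate single-source, single-relay network, together with the choice of the trivial (identity) function.

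First I would set $K=M=1$ and $h_{11}=1$, which collapses the channel model \eqref{eqn:y_m} to the real point-to-point AWGN channel $\mathbf{y}=\mathbf{x}+\mathbf{z}$ under the power constraint $\frac{1}{N}\|\mathbf{x}\|^2\le P$, with $\mathbf{x}$ drawn from the nested lattice codebook $\Lambda_f\cap\mc{V}_{\Lambda_c}$ produced by the product construction. Next I would take the integer combining coefficient $\mathbf{a}_1=1$. Since the ring homomorphism $\sigma$ maps $1\in\mbb{Z}$ to the multiplicative identity $(1,\ldots,1)$ of $\times_{l=1}^{L}\mbb{F}_{p_l}$, the induced per-level coefficient is $b_{11}^l=1$ for every $l$, so that $\mathbf{u}_1^l=1\odot\mathbf{w}_1^l=\mathbf{w}_1^l$; hence recovering the functions $\{\mathbf{u}_1^l\}_{l=1}^{L}$ amounts to recovering the sub-messages $\{\mathbf{w}_1^l\}_{l=1}^{L}$, i.e., the whole message $\mathbf{w}_1$. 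In other words, the theorem's multistage decoder reduces here to the ordinary level-by-level decoder of the nested lattice code (decode the sub-message over $\mbb{F}_{p_l}$, re-encode, subtract, rescale), and the \emph{computation} rate it guarantees is in fact an ordinary transmission rate achieved with vanishing error probability.

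It then remains only to evaluate the rate in this case. Using the real-channel version of \eqref{eqn:com_rate_m}, namely $\frac{1}{2}\log^+\!\big((\|\mathbf{a}_1\|^2-P|\mathbf{h}_1^{T}\mathbf{a}_1|^2/(1+P\|\mathbf{h}_1\|^2))^{-1}\big)$, and substituting $\mathbf{h}_1=\mathbf{a}_1=1$ gives
\begin{equation}
R=\frac{1}{2}\log^+\!\left(\left(1-\frac{P}{1+P}\right)^{-1}\right)=\frac{1}{2}\log(1+P),
\end{equation}
the capacity of the real point-to-point AWGN channel with input power $P$, which is exactly the assertion of the corollary.

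I do not expect any genuine obstacle here, since the content is entirely contained in the preceding theorem; the only points that warrant a word of care are (i) the factor $\frac{1}{2}$ accompanying the real-valued version of the rate expression; (ii) the observation that the all-ones coefficient turns a computation-rate guarantee into a transmission-rate guarantee, so that decoding the function is the same as decoding the message; and (iii) the fact that the capacity-matching already relies on the MMSE scaling $\alpha_{\text{MMSE}}=P/(1+P)$ being built into \eqref{eqn:com_rate_m}, so that the residual self-interference $(\alpha_{\text{MMSE}}-1)\mathbf{x}$ is properly accounted for and no separate argument is needed. The matching converse is the standard AWGN capacity converse and lies outside the scope of the statement.
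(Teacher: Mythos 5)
Your proposal is correct and follows essentially the same route the paper intends: the corollary is obtained as an immediate specialization of the preceding theorem to $K=M=1$, $h_{11}=1$, $a_{1}=1$, with the real-channel (factor $\tfrac{1}{2}$) version of \eqref{eqn:com_rate_m} evaluating to $\tfrac{1}{2}\log(1+P)$. Your added remarks --- that the all-ones coefficient turns the computation-rate guarantee into an ordinary transmission-rate guarantee and that the MMSE scaling is already built into the rate expression --- are accurate and consistent with the paper's treatment.
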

Also, consider the case we have two transmitters and one receiver with $h_{11}=h_{12}=1$. The channel model reduces to the first phase (MAC phase) of the bidirectional relay channel and when setting $a_{11}=a_{12}=1$, the above theorem recovers the main results in \cite{wilson10} with multistage decoding.
\begin{corollary}
    For the first phase of the two-way relay channel, the proposed nested lattice codes together with multistage decoding achieves $\frac{1}{2}\log\left(\frac{1}{2}+P\right)$, which is asymptotically optimal in the high SNR regime.
\end{corollary}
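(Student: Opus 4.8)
The plan is to reduce the claim to two facts established earlier: the compute-and-forward reduction of \cite{nazer2011CF,erez04} that turns relay $m$'s multiple-access observation into a single point-to-point modulo-lattice channel, and the ``Poltyrev-good under multistage decoding'' property of product-construction lattices from Theorem~\ref{thm:lattice}. First I would build the nested pair from the product construction: for each level $l$ pick linear codes $C_c^l\subseteq C_f^l$ over $\mbb{F}_{p_l}$ and form $\Lambda_c$, $\Lambda_f$ with the corresponding product codes; this yields (up to an overall scaling fixing $\sigma^2(\Lambda_c)=P$) the chain $q\mbb{Z}^N\subseteq\Lambda_c\subseteq\Lambda_f\subseteq\mbb{Z}^N$ with $q\defeq\Pi_{l=1}^L p_l$, and in particular $q\Lambda_f\subseteq\Lambda_c$. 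By Theorem~\ref{thm:lattice} together with a nesting argument in the style of \cite{ordentlich_erez_simple}, the codes can be chosen so that $\Lambda_c$ is good for MSE quantization and the intermediate chain $\Lambda_c=\Gamma_0\subseteq\Gamma_1\subseteq\cdots\subseteq\Gamma_L=\Lambda_f$, $\Gamma_l\defeq\mc{M}(C_f^1,\ldots,C_f^l,C_c^{l+1},\ldots,C_c^L)+q\mbb{Z}^N$ with $\Gamma_l/\Gamma_{l-1}$ an $\mbb{F}_{p_l}$-space, is decodable stage-by-stage up to the Poltyrev limit. The codebook is $\Lambda_f\cap\mc{V}_{\Lambda_c}$.

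Next I would run the usual encoding and relay processing. Transmitter $k$ maps $(\mathbf{w}_k^1,\ldots,\mathbf{w}_k^L)$ through $\mc{M}$ to $\mathbf{t}_k\in\Lambda_f\cap\mc{V}_{\Lambda_c}$ and transmits the dithered, power-$P$ signal $\mathbf{x}_k=(\mathbf{t}_k-\mathbf{d}_k)\bmod\Lambda_c$. Exactly as in \eqref{eqn:NG_y}--\eqref{eqn:NG_z_eq}, relay $m$ scales $\mathbf{y}_m$ by $\alpha_m$ and adds back $\sum_k a_{mk}\mathbf{d}_k$ to obtain $\mathbf{y}_m'=(\mathbf{t}_{eq,m}+\mathbf{z}_{eq,m})\bmod\Lambda_c$ with $\mathbf{t}_{eq,m}=\big(\sum_k a_{mk}\mathbf{t}_k\big)\bmod\Lambda_c$. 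Writing $a_{mk}=\bar{a}_{mk}+q\tilde{a}_{mk}$ and using $q\Lambda_f\subseteq\Lambda_c$, the $q\tilde{a}_{mk}$ term vanishes modulo $\Lambda_c$, so $\mathbf{t}_{eq,m}=\big(\sum_k\bar{a}_{mk}\mathbf{t}_k\big)\bmod\Lambda_c$ is again a codeword in $\Lambda_f\cap\mc{V}_{\Lambda_c}$ whose level-$l$ label is, by the ring-homomorphism property of $\mc{M}$, exactly $\mathbf{u}_m^l=\bigoplus_k b_{mk}^l\odot\mathbf{w}_k^l$. The per-real-dimension variance $|\alpha_m|^2+P\|\alpha_m\mathbf{h}_m-\mathbf{a}_m\|^2$ of $\mathbf{z}_{eq,m}$ is minimized by $\alpha_{\text{MMSE},m}$ to $N_{eq}^{*}=P\big(\|\mathbf{a}_m\|^2-P|\mathbf{h}_m^{*}\mathbf{a}_m|^2/(1+P\|\mathbf{h}_m\|^2)\big)$, and since $\Lambda_c$ is good for quantization the dithered $\mathbf{x}_k$ is nearly white Gaussian, so $\mathbf{z}_{eq,m}$ is arbitrarily close in per-dimension divergence to $\mathcal{N}(\mathbf{0},N_{eq}^{*}I)$ \cite{zamir96}.

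Finally I would decode $\mathbf{t}_{eq,m}$ from $\mathbf{y}_m'$ along the chain $\Gamma_0\subseteq\cdots\subseteq\Gamma_L$: at stage $l$, subtract the recovered contributions of levels $1,\ldots,l-1$, reduce the result modulo $\Gamma_{l-1}$, and decode the coset in $\Gamma_l/\Gamma_{l-1}$, which is precisely the linear code $C_f^l/C_c^l$ over $\mbb{F}_{p_l}$ used as a Construction-A ingredient against the folded noise $\mathbf{z}_{eq,m}\bmod\Gamma_{l-1}$. By Theorem~\ref{thm:lattice}, for a Gaussian noise of variance below the Poltyrev threshold of $\Gamma_0$ all $L$ stages succeed with vanishing error probability provided the level rates obey $\sum_l R_l<\tfrac12\log\!\big(\sigma^2(\Lambda_c)/N_{eq}^{*}\big)$; combining this with $\sigma^2(\Lambda_c)=P$ and the near-Gaussianity of $\mathbf{z}_{eq,m}$ (apply the threshold analysis with effective variance $N_{eq}^{*}+\varepsilon$), the codebook rate $\tfrac1N\log|\Lambda_f\cap\mc{V}_{\Lambda_c}|=\sum_l\tfrac1N\log|C_f^l/C_c^l|$ can be pushed to within $\varepsilon$ of $\tfrac12\log^{+}\!\big(P/N_{eq}^{*}\big)$, i.e., the real-channel version of $R(\mathbf{h}_m,\mathbf{a}_m)$ in \eqref{eqn:com_rate_m}. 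Having recovered $\mathbf{t}_{eq,m}$, the relay reads off every $\mathbf{u}_m^l$ and the theorem follows. I expect the main obstacle to be exactly this step: making the chain-rule accounting rigorous across the \emph{distinct} primes $p_1,\ldots,p_L$ --- verifying that the modulo-$\Gamma_{l-1}$ folding never inflates the noise beyond what $C_f^l/C_c^l$ can correct, and that the $L$ per-level rates sum to the full $\tfrac12\log^{+}\!\big(P/N_{eq}^{*}\big)$ --- for which one must invoke the ``Poltyrev-good under multistage decoding'' half of Theorem~\ref{thm:lattice} essentially as a black box, the only genuinely new work being the passage from a true Gaussian channel to the near-Gaussian $\mathbf{z}_{eq,m}$.

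The two corollaries are immediate specializations of \eqref{eqn:com_rate_m}. For $K=1$, $h_{11}=1$, $a_{11}=1$ one gets $R=\tfrac12\log^{+}\!\big((1-P/(1+P))^{-1}\big)=\tfrac12\log(1+P)$, the real AWGN capacity, recovering \cite{erez04}. For $K=2$, $h_{11}=h_{12}=1$, $a_{11}=a_{12}=1$ one has $\|\mathbf{a}_1\|^2=2$, $\mathbf{h}_1^{*}\mathbf{a}_1=2$, $\|\mathbf{h}_1\|^2=2$, so $R=\tfrac12\log^{+}\!\big((2-4P/(1+2P))^{-1}\big)=\tfrac12\log^{+}\!\big(\tfrac12+P\big)$, matching \cite{wilson10}; since $\tfrac12\log(1+P)-\tfrac12\log^{+}\!\big(\tfrac12+P\big)\to0$ as $P\to\infty$, this lies within a vanishing gap of the single-link capacity and is therefore asymptotically optimal at high SNR.
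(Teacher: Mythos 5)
Your proposal is correct and follows essentially the same route as the paper: the corollary is read off as an immediate specialization ($K=2$, $h_{11}=h_{12}=a_{11}=a_{12}=1$) of the multistage computation-rate theorem, whose proof you reconstruct exactly as the paper does — product-construction nested lattices in the Ordentlich--Erez style with a quantization-good coarse lattice, dithered Nazer--Gastpar relay processing with MMSE scaling, and level-by-level decoding justified by the Poltyrev-goodness under multistage decoding of Theorem~\ref{thm:lattice} together with the divergence/near-Gaussianity argument for $\mathbf{z}_{eq,m}$. Your evaluation $R=\tfrac{1}{2}\log^{+}\bigl(\tfrac{1}{2}+P\bigr)$ and the vanishing gap to $\tfrac{1}{2}\log(1+P)$ at high SNR match the paper's claim.
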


\subsection{Proof of the Main Result}
We consider an ensemble of nested lattice codes that can be regarded as a generalization of the ensemble in \cite{ordentlich_erez_simple} rather than the frequently used one by Erez and Zamir in \cite{erez04}. We first generate pairs of nested linear codes $(\mc{C}^l_f,\mc{C}^l_c)$ such that $\mc{C}^l_c\subseteq \mc{C}^l_f$ for $l\in\{1,\ldots,L\}$ as follows,
\begin{align}
    \mc{C}^l_c &= \{\mathbf{G}^l_c\odot\mathbf{w}^l | \mathbf{w}^l\in\mbb{F}_{p_l}^{m^l_c} \}, \\
    \mc{C}^l_f &= \{\mathbf{G}^l_f\odot\mathbf{w}^l | \mathbf{w}^l\in\mbb{F}_{p_l}^{m^l_f} \},
\end{align}
where $\mathbf{G}^l_c$ is a $N\times m^l_c$ matrix and
\begin{equation}
    \mathbf{G}^l_f = \begin{bmatrix}
                       \mathbf{G}^l_c & \mathbf{\tilde{G}}^l \\
                     \end{bmatrix},
\end{equation}
where $\mathbf{\tilde{G}}^l$ is a $N\times (m^l_f-m^l_c)$ matrix. We then generate (scaled) lattices $\Lambda_f$ and $\Lambda_c$ from the proposed product construction with the linear codes $\mc{C}^l_f$ and $\mc{C}^l_c$, respectively, as follows.
\begin{align}
    \Lambda_f &\triangleq  \gamma\left(\Pi_{l=1}^L p_l\right)^{-1} \mc{M}(\mc{C}^1_f,\ldots,\mc{C}^L_f) + \gamma\mbb{Z}^N, \nonumber \\
    \Lambda_c &\triangleq  \gamma\left(\Pi_{l=1}^L p_l\right)^{-1} \mc{M}(\mc{C}^1_c,\ldots,\mc{C}^L_c) + \gamma\mbb{Z}^N,
\end{align}
where $\gamma = 2\sqrt{NP}$. Clearly, $\Lambda_c\subseteq\Lambda_f$ and the design rate is given by
\begin{equation}
    R_{\text{design}} = \sum_{l=1}^L\frac{m^l_f-m^l_c}{N}\log(p_l).
\end{equation}
The design rate becomes the actual rate if every $\mathbf{G}^l_f$ is full-rank which will be fulfilled with high probability. Moreover, as shown in Appendix~\ref{apx:lattice} setting
\begin{equation}
    \sum_{l=1}^L\frac{m^l_c}{N}\log(p_l)\rightarrow \frac{1}{2}\left(\frac{4}{V_N^{2/N}}\right),
\end{equation}
ensures that the second moment converges to $P$ with high probability and the coarse lattice is good for MSE quantization. Besides, at the level $l$ for $l\in\{1,\ldots,L\}$, randomly choosing $\mathbf{G}^l_f$ would result in a capacity-achieving linear code with high probability, which in turn gives us a Poltyrev-good lattice. Therefore, in the following, we can assume that the coarse lattice is good for MSE quantization and the fine lattice is Poltyrev-good.

The transmitter $k$ first decomposes the message $\mathbf{w}_k$ into $(\mathbf{w}^1_k,\ldots,\mathbf{w}^L_k)$, where $\mathbf{w}^l_k$ is a length $(m_f^l-m_c^l)$ vector over $\mbb{F}_{p_l}$, and bijectively maps it to a lattice point $\mathbf{t}_k\in\Lambda_f\cap\mc{V}_{\Lambda_c}$ where
\begin{equation}
    \mathbf{t}_k = \left(\gamma\left(\Pi_{l=1}^L p_l\right)^{-1}\mc{M}(\mathbf{c}^1_k,\ldots,\mathbf{c}^L_k) + \gamma\zeta_k\right) \mod\Lambda_c,
\end{equation}
with $\zeta_k\in\mbb{Z}^N$ and $\mathbf{c}^l_k\defeq\mathbf{G}_f^l\odot[\mathbf{0}_{m_c^l}~\mathbf{w}^l_k]^T$. It then sends a dithered version
\begin{equation}
    \mathbf{x}_k = (\mathbf{t}_k-\mathbf{u}_k)\mod\Lambda_c.
\end{equation}
Similar to \eqref{eqn:NG_y}, by scaling the receive signal by $\alpha$ and adding the dithers back, one obtains
\begin{equation}
    \mathbf{y'}_m = (\mathbf{t}_{eq,m} + \mathbf{z}_{eq,m}) \mod \Lambda_c,
\end{equation}
where $\mathbf{t}_{eq,m}$ and $\mathbf{z}_{eq,m}$ are as \eqref{eqn:NG_t_eq} and \eqref{eqn:NG_z_eq}, respectively. Moreover, with the relationship $a_{mk}=\bar{a}_{mk}+\tilde{a}_{mk}$, one can further rewrite
\begin{align}
    &\mathbf{t}_{eq,m} = \left( \sum_{k=1}^K (\bar{a}_{mk}+ \Pi_{l=1}^L p_l \tilde{a}_{mk}) \mathbf{t}_k \right) \mod \Lambda_c \nonumber \\
    &= \left( \gamma(\Pi_{l=1}^L p_l)^{-1} \sum_{k=1}^K \mc{M}(b_{mk}^1,\ldots,b_{mk}^L) \mc{M}(\mathbf{c}_k^1,\ldots,\mathbf{c}_k^L) + \gamma\tilde{a}_{mk}\mathbf{t}_k \right) \mod \Lambda_c \nonumber \\
    &\overset{(a)}{=} \hspace{-4pt}\left(\gamma(\Pi_{l=1}^L p_l)^{-1}\mc{M}\left(\bigoplus_{k=1}^K b_{mk}^1\odot\mathbf{c}_k^1,\ldots,\bigoplus_{k=1}^K b_{mk}^L\odot\mathbf{c}_k^L\right)+ \gamma \zeta_m \right) \mod \Lambda_c,
\end{align}
where $\zeta_m\in\mbb{Z}^N$ and (a) holds because $\mc{M}(.)$ is a ring isomorphism. One can then decode the fine lattice point corresponding to $\mathbf{t}_{eq,m}$ by decoding the equivalent codeword $\bigoplus_{k=1}^K b_{mk}^l\odot\mathbf{c}_k^l$ level by level. This in turn gives an estimate of $\mathbf{u}_m^l$ for $l\in\{1,\ldots,L\}$. From the Gaussian approximation principle of MSE quantization good lattices \cite{zamir96} \cite[Remark 5]{forney03}, the equivalent noise would become Gaussian in the limit as $N\rightarrow \infty$. Thus, asymptotically, the probability of error is guaranteed to vanish whenever $\text{Vol}(\Lambda_f)$ is (slightly) larger than the volume of the noise ball since $\Lambda_f$ is Poltyrev-good under multistage decoding. Precisely, similar to the proof of the existence of Poltyrev-good lattices in Appendix~\ref{apx:lattice}, one can show that the equivalent channel seen at each level with the noise $\mathbf{z}_{eq,m}$ is regular. Let $\mathbf{z}^*_{eq,m}$ be the i.i.d. Gaussian random vector having distribution $\mc{N}(0,\sigma^2_{eq,m})$. Using the regularity of the channel and the fact that $D(\msf{Z}_{eq,m}\|\msf{Z}^*_{eq,m}) = h(\msf{Z}_{eq,m})- \frac{N}{2}\log 2\pi\exp(1)\sigma^2_{eq,m}$, one can show by following \cite{forney2000} that the error probability can be made arbitrarily small whenever
\begin{equation}
    \text{Vol}(\Lambda_f)^{\frac{2}{N}}>2\pi\exp(1)\sigma^2_{eq,m}2^{\frac{2}{N}D(\msf{Z}_{eq,m}\|\msf{Z}^*_{eq,m})},
\end{equation}
which converges to $2\pi\exp(1)\sigma^2_{eq,m}$ in the limit as $N\rightarrow \infty$ if the coarse lattice is good for quantization \cite{zamir96}.

The computation rate per real dimension achieved by this scheme is given by
\begin{align}
    R &= \frac{1}{N}\log\left(\frac{\text{Vol}(\Lambda_c)}{\text{Vol}(\Lambda_f)}\right) \nonumber \\
    &= \frac{1}{N}\log (\text{Vol}(\Lambda_c)) - \frac{1}{N}\log(\text{Vol}(\Lambda_f)) \nonumber \\
    &= \frac{1}{2}\log\frac{P}{G(\Lambda_c)}-\frac{1}{2}\log2\pi\exp(1)\sigma_{eq,m}^2 \nonumber \\
    &\overset{(a)}{=} \frac{1}{2}\log \left(\frac{P}{\sigma_{eq,m}^2}\right) \nonumber \\
    &= \frac{1}{2}\log\left(\frac{P}{|\alpha_m|^2+P\|\alpha_m\mathbf{h}_m-\mathbf{a}_m\|^2} \right),
\end{align}
where (a) follows from that the coarse lattice is good for MSE quantization. When $N$ is sufficiently large, one can choose the parameters $N$, $p_l$, $m_c^l$, and $m_f^l$ such that the design rate is arbitrarily close to the achievable computation rate derived above. Moreover, when sending signals along real and imaginary parts independently or considering the ring of Gaussian integers, one recovers the same computation rates as \eqref{eqn:com_rate_m} per complex dimension. Also, when considering the ring of Eisenstein integers, one obtains the same result in \cite{Engin12}.

One example of the proposed multistage compute-and-forward is provided below.
\begin{example}
    Consider the case where we only have two source node and we only focus on the computation at one destination (relay). Consider the isomorphism in Example~\ref{exp:ring_iso}. Let $\mathbf{G}_f^1=[1,1]^T$, $\mathbf{G}_f^2=[1,2]^T$, and $\mathbf{G}_c^1=\mathbf{G}_c^2=\{\emptyset\}$. i.e., we directly transmit the fine lattice points in this example. Suppose the codewords are
    \begin{equation}
    \mathbf{c}_1^1=\left[
                     \begin{array}{c}
                       1 \\
                       1 \\
                     \end{array}
                   \right],~~
    \mathbf{c}_1^2=\left[
                     \begin{array}{c}
                       1 \\
                       2 \\
                     \end{array}
                   \right],~~
    \mathbf{c}_2^1\left[
                     \begin{array}{c}
                       0 \\
                       0 \\
                     \end{array}
                   \right],~~
    \mathbf{c}_2^2 = \left[
                     \begin{array}{c}
                       2 \\
                       1 \\
                     \end{array}
                   \right].
    \end{equation}
    Ignoring the scaling factor, one has
    \begin{align}
    \mathbf{x}_1 &= \mc{M}\left(\left[
                     \begin{array}{c}
                       1 \\
                       1 \\
                     \end{array}
                   \right], \left[
                     \begin{array}{c}
                       1 \\
                       2 \\
                     \end{array}
                   \right]\right)=\left[
                     \begin{array}{c}
                       1 \\
                       5 \\
                     \end{array}
                   \right], \nonumber \\
    \mathbf{x}_2&=\mc{M}\left(\left[
                     \begin{array}{c}
                       0 \\
                       0 \\
                     \end{array}
                   \right],
                   \left[
                     \begin{array}{c}
                       2 \\
                       1 \\
                     \end{array}
                   \right]\right)=
                   \left[
                     \begin{array}{c}
                       2 \\
                       4 \\
                     \end{array}
                   \right].
      \end{align}
      Further, let the channel gains be $h_1=3=\mc{M}(1,0)$, $h_2=4=\mc{M}(0,1)$, and assume that there is no channel noise. The receiver will observe
      \begin{equation}
        \left[
          \begin{array}{c}
            11 \\
            31 \\
          \end{array}
        \right] = \left[
          \begin{array}{c}
            5 \\
            1 \\
          \end{array}
        \right] + 6\cdot\left[
          \begin{array}{c}
            1 \\
            5 \\
          \end{array}
        \right]
        = \mc{M}\left(\left[
          \begin{array}{c}
            1 \\
            1 \\
          \end{array}
        \right],\left[
          \begin{array}{c}
            2 \\
            1 \\
          \end{array}
        \right]\right) + 6\cdot\left[
          \begin{array}{c}
            1 \\
            5 \\
          \end{array}
        \right]
    \end{equation}
    At his particular destination, the multistage compute-and-forward will then compute $[1,1]^T$ and $[2,1]^T$ which corresponds to $1\odot\mathbf{c}_1^1\oplus 0\odot\mathbf{c}_2^1$ and $0\odot\mathbf{c}_1^2\oplus 1\odot\mathbf{c}_2^2$, respectively.
\end{example}


\section{Proposed Constellations for Separation-Based Compute-and-Forward}\label{sec:const}
In this section, we propose two families of constellations for the separation-based compute-and-forward. The constellations in the first family are isomorphic to the corresponding extension fields and those in the second family are isomorphic to the Cartesian product of finite fields; thus, the existence of ring homomorphisms can be shown. For some cases, the existence of such ring homomorphisms enables the proposed constellations to be directly used for separation-based compute-and-forward. More importantly, for both cases, there exist $\mbb{Z}$-module homomorphisms which will be the foundation of the proposed multilevel coding/multistage decoding scheme proposed in Section~\ref{sec:encode_decode}. This section is concluded in Section~\ref{sec:genreal_result} by providing the general result that absorbs the two proposed families. The reason that we divide it into two families is because the fundamental theorems used for showing these results are different and also this will ease the exposition of the results in this section and the next one. It must be emphasized that the theory required for showing the existence of the homomorphisms has been developed in \cite[Theorem 1 and Theorem 2]{Feng10}. But these specific constructions are not explicitly proposed in \cite{Feng10} and the multilevel coding/multistage decoding in Section~\ref{sec:encode_decode} is not given in \cite{Feng10}. Also, it is worth noting that the proposed technique works equally well for $\mbb{Z}$, $\Zi$, and $\Zw$, but here we particularly focus on $\Zw$ because the constellations obtained from $\Zw$ usually provide the best shaping gain among these three PIDs.

\subsection{The First Proposed Family of Constellations}\label{sec:const1}
Let $\phi$ be an Eisenstein prime with $\phi$ being the product of a unit and a rational prime $q$ congruent to $2\mod 3$. Since $\Zw$ is a PID, from Lemma~\ref{lma:PID}, $\phi\Zw$ is a prime ideal and hence a maximal ideal. Also, the order of the quotient ring $\Zw /\phi\Zw$ is $|\Zw /\phi\Zw|=|\phi|^2= q^2$. From Lemma~\ref{lma:MAX}, one has that
\begin{equation}
    \Zw /\phi\Zw \cong \mbb{F}_{q^2}.
\end{equation}
Thus, the following ring homomorphism exists
\begin{equation}\label{eqn:homo_prod_Eis1}
    \sigma: \Zw \overset{\mod \phi\Zw}{\rightarrow} \Zw/\phi\Zw \overset{\mc{M}}{\underset{\mc{M}^{-1}}{\leftrightarrows}} \mbb{F}_{q^2}.
\end{equation}
In order to exploit the structural gain, one then has to design the mapping $\mc{M}:\mbb{F}_{q^2}\rightarrow \Zw/\phi\Zw$ such that it is a ring isomorphism.

\begin{example}\label{exp:ring_Eis}
One example of this construction with $q = 5$ is given in Fig.~\ref{fig:ring_homo_Eis} where the labeling is the ring homomorphism and the multiplication in $\mbb{F}_{25}$ is defined by the irreducible polynomial $x^2+2x+4$ over $\mbb{F}_5$.
\begin{figure}
    \centering
    \includegraphics[width=5in]{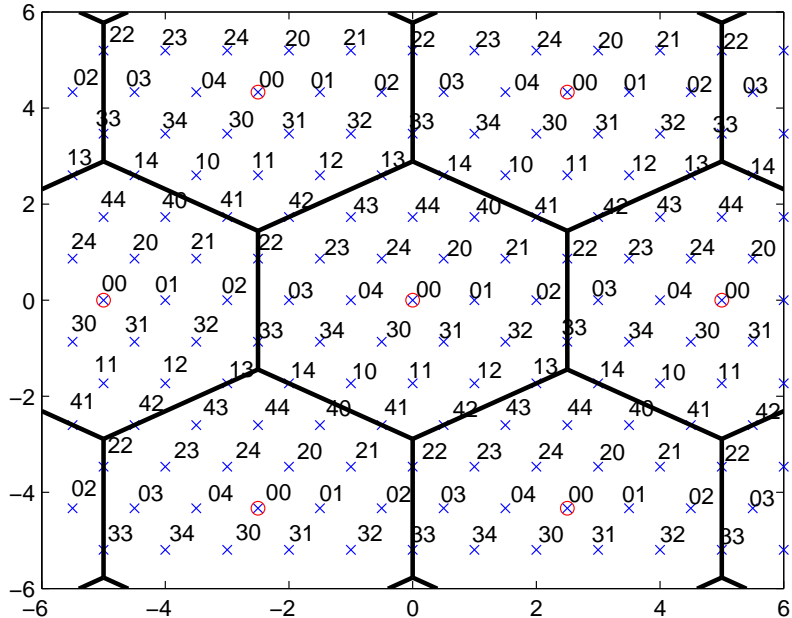}
    \caption{The proposed constellation with $q = 5$ and a ring homomorphism shown as the labeling with the irreducible polynomial $x^2+2x+4$.}
    \label{fig:ring_homo_Eis}
\end{figure}
\end{example}

%
%
One straightforward way to exploit this property is to use
\begin{equation}
    \mc{A}\triangleq \Zw/\phi\Zw,
\end{equation}
and scale the output by $\gamma$ for satisfying the power constraint, as signal constellation for the separation-based compute-and-forward framework in \cite{Engin12} and directly apply the ring isomorphism as signal mapping. And in fact, it can be shown that using Construction A over this family of constellations with appropriately chosen $q$, one would obtain a sequence of lattices that is simultaneously good for quantization and Poltyrev-good. This result is summarized in Appendix~\ref{apx:simul_good}.

Directly using constellations from this family implies that one has to work with a very large field $\mbb{F}_{q^2}$. This results in a significantly increased decoding complexity. It should be noted that the ring isomorphism between $\mbb{F}_{q^2}$ and $\Zw/\phi\Zw$ induces a $\mbb{Z}$-module isomorphism between $\mbb{Z}/q\mbb{Z}\times\mbb{Z}/q\mbb{Z}$ to $\Zw/\phi\Zw$. In the next section, we propose a novel encoding/decoding pair that incorporates the idea of multilevel coding and multistage decoding \cite{ImaiHirakawa77} \cite{WachsmannFischerHuber99} which largely relies on $\mbb{Z}$-module isomorphisms. The proposed encoding/decoding allows us to work over a potentially much smaller field $\mbb{F}_q$.


\subsection{The Second Proposed Family of Constellations}\label{sec:const2}
Let $\phi_1,\phi_2,\ldots,\phi_L$ be a collection of distinct Eisenstein primes with $|\phi_l|^2 = q_l,~\forall l\in\{1,2,\ldots,L\}$ congruent to $0\mod 3$ or $1\mod 3$. In addition, we also require $\phi_1,\phi_2,\ldots,\phi_L$ to be relatively prime. Since $\Zw$ is a PID, we have that each $\phi_l\Zw$ for $l\in\{1,2,\ldots,L\}$ is a maximal ideal. Moreover, we have the orders $|\Zw/\phi_l\Zw|=|\phi_l|^2=q_l$ for $l\in\{1,2,\ldots,L\}$. Therefore, one has that
\begin{align}\label{eqn:iso_prod_Eis2}
    \Zw/\Pi_{l=1}^L\phi_l\Zw &\cong \Zw/\cap_{l=1}^L\phi_l\Zw \nonumber \\
    &\overset{(a)}{\cong}\left(\Zw/\phi_1\Zw\right)\times \ldots\times\left(\Zw/\phi_L\Zw\right) \nonumber \\
    &\overset{(b)}{\cong}\mbb{F}_{q_1}\times\ldots\times\mbb{F}_{q_L},
\end{align}
where (a) follows from the Chinese Remainder Theorem in Lemma~\ref{lma:CRT} and (b) is from Lemma~\ref{lma:MAX}. Therefore, this implies that the following ring homomorphism exists,
\begin{equation}\label{eqn:homo_prod_Eis2}
    \sigma: \Zw \overset{\mod \Pi_{l=1}^L\phi_l\Zw}{\rightarrow} \Zw/\Pi_{l=1}^L\phi_l\Zw \overset{\mc{M}}{\underset{\mc{M}^{-1}}{\leftrightarrows}} \mbb{F}_{q_1}\times\ldots\times\mbb{F}_{q_L},
\end{equation}
where $\mc{M}$ is a ring isomorphism. We provide several examples as follows.
\begin{example}\label{exp:21pt}
    In this example, we choose $\phi_1 = 1+2\omega$ and $\phi_2 = 3+2\omega$ with $q_1 = 3$ and $q_2 = 7$, respectively. It can be verified that $\phi_1$ and $\phi_2$ are relatively prime. Then from Chinese Remainder Theorem, we have that $\Zw/\phi_1\phi_2\Zw\cong\mbb{F}_3\times\mbb{F}_7$. The constellation and a ring homomorphism from $\Zw$ to $\mbb{F}_{q_1}\times\ldots\times\mbb{F}_{q_L}$ are given in Fig.~\ref{fig:homo_21pt} where only the 21 points inside the big hexagon are used (ties can be broken arbitrarily) as constellation points.
    \begin{figure}
    \centering
    \includegraphics[width=5in]{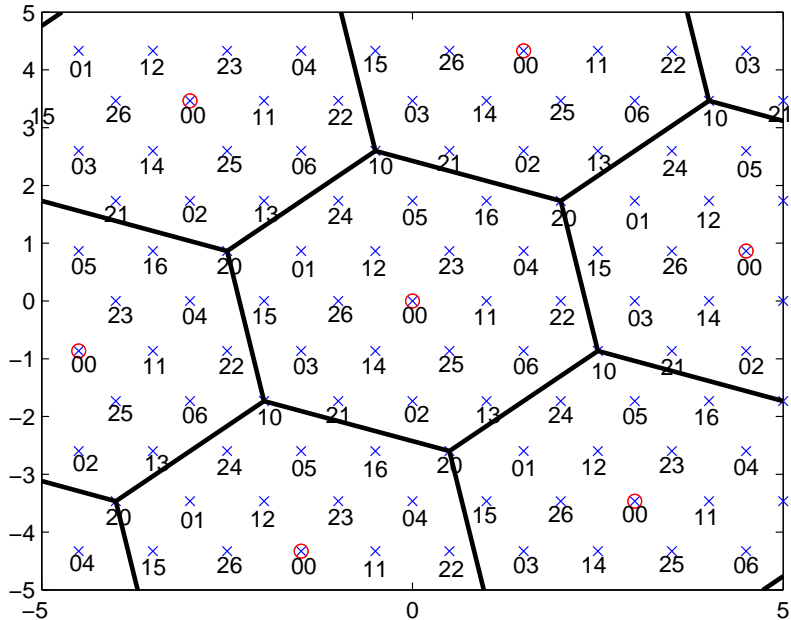}
    \caption{A 21-pt constellation in the second proposed family of constellations with $\phi_1 = 1+2\omega$ and $\phi_2 = 3+2\omega$.}
    \label{fig:homo_21pt}
    \end{figure}
    Let us provide a small example to illustrate how the compute-and-forward works according to this specific constellation and mapping. Suppose the source nodes have $c_1=(1,1)$ and $c_2=(2,6)$ in $\mbb{F}_3\times\mbb{F}_7$ and these will be mapped to $x_1=1$ and $x_2=-1$, respectively. Let the channel coefficients be $h_1=1$ and $h_2=\omega$ and there is no channel noise. In this case, the receiver will choose $a_1 = (1,1)$ and $a_2 = (1,2)$ since they are closest to $h_1$ and $h_2$, respectively. The received signal is then given by $h_1x_1+h_2x_2=1-\omega$ which corresponds exactly to the finite field result $(1,1)\odot(1,1)\oplus(1,2)\odot(2,6)=(0,6)$. One can also verify that this ring homomorphism induces a $\mbb{Z}$-module homomorphism $\varphi\defeq \mc{M}^{-1}\circ \mod\phi_1\phi_2\Zw$ where
    \begin{equation}\label{eqn:module_homo_21}
        \mc{M}(v^1,v^2) =  v^1(2\phi_2) + v^2 (3\phi_1) \mod \phi_1\phi_2\Zw,
    \end{equation}
    where $v^1\in\mbb{F}_3$ and $v^2\in\mbb{F}_7$.
\end{example}
Although the focus of this section is on $\Zw$, we also provide an example from $\Zi$.
\begin{example}
    Let $\phi_1 = 1+2j$ and $\phi_2 = 3 + 2j$. One has that $\Zi/\phi_1\phi_2\Zi\cong \mbb{F}_5 \times \mbb{F}_{13}$. This provides a constellation with 65 elements and the corresponding ring homomorphism shown in Fig.~\ref{fig:homo_65pt} where only the 65 points inside the fundamental Voronoi region are used.
    \begin{figure}
    \centering
    \includegraphics[width=5in]{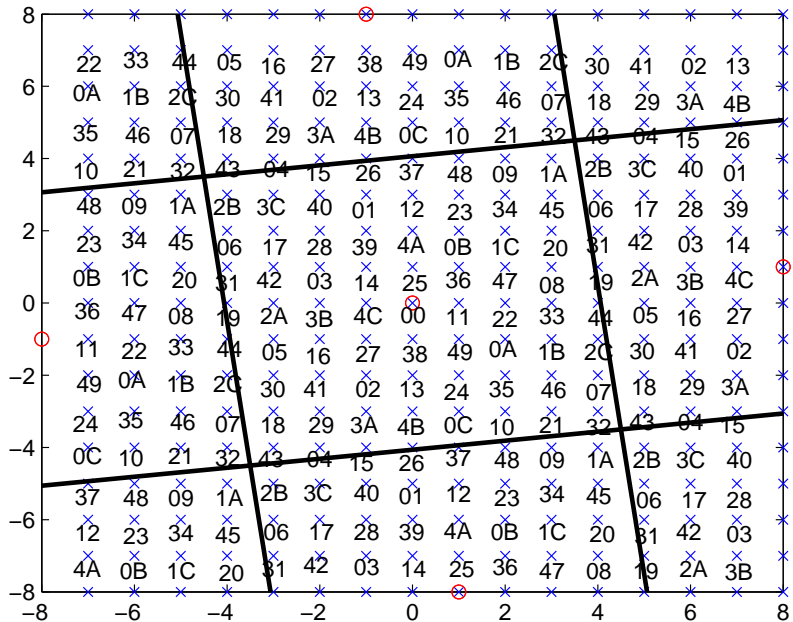}
    \caption{A 65-pt constellation in the second proposed family of constellations over $\Zi$ with $\phi_1 = 1+2j$ and $\phi_2 = 3+2j$. We use A, B, and C to denote $10, 11,$ and $12$ in $\mbb{F}_{13}$, respectively.}
    \label{fig:homo_65pt}
    \end{figure}
    One can verify that again, this ring homomorphism induces a $\mbb{Z}$-module homomorphism $\varphi\defeq \mc{M}^{-1}\circ \mod\phi_1\phi_2\Zw$ where
    \begin{equation}\label{eqn:module_homo_65}
        \mc{M}(v^1,v^2) =  v^1 (3\phi_2) + v^2 (6\phi_1) \mod \phi_1\phi_2 \Zi
    \end{equation}
    where $v^1\in\mbb{F}_5$ and $v^2\in\mbb{F}_{13}$.
\end{example}

Similar to the first proposed family of constellations, one can directly use the set of the coset representatives of $\Zw\rightarrow \Zw/\Pi_{l=1}^L\phi_l\Zw$ as constellations together with $\mc{M}$, which is a ring isomorphism, for signal mapping. But then, one may have to deal with coding over a large ring. Again, the decoding complexity will increase dramatically as the constellation size increases. Another way to take advantage of this family of constellations is to regard both sides of \eqref{eqn:iso_prod_Eis2} as finitely-generated Abelian groups, i.e., $\mbb{Z}$-modules. This implies that
\begin{equation}\label{eqn:module_iso_Eis2}
    \Zw/\Pi_{l=1}^L\phi_l\Zw \cong \mbb{Z}/q_1\mbb{Z}\times\ldots\times\mbb{Z}/q_L\mbb{Z},
\end{equation}
and the following $\mbb{Z}$-module homomorphisms exists
\begin{equation}\label{eqn:module_homo_prod_Eis2}
    \varphi: \Zw \overset{\mod \Pi_{l=1}^L\phi_l\Zw}{\rightarrow} \Zw/\Pi_{l=1}^L\phi_l\Zw \overset{\mc{M}}{\underset{\mc{M}^{-1}}{\leftrightarrows}} \mbb{Z}/q_1\mbb{Z}\times\ldots\times\mbb{Z}/q_L\mbb{Z},
\end{equation}
where now $\mc{M}$ is chosen to be a $\mbb{Z}$-module isomorphism whose existence is guaranteed by \eqref{eqn:module_iso_Eis2}. As mentioned before, the multilevel coding/multistage decoding that will be proposed later in Section~\ref{sec:encode_decode} only requires $\mbb{Z}$-module homomorphisms; hence, in the following, we focus on $\mbb{Z}$-module homomorphisms instead of ring homomorphisms. In the following theorem, we provide an explicit construction of $\mbb{Z}$-module isomorphisms for the proposed constellations.

\begin{theorem}
    Let $\phi_1,\ldots,\phi_L$ be a collection of Eisenstein primes with $|\phi_l|^2 = q_l,~\forall l\in\{1,2,\ldots,L\}$, congruent to $0\mod 3$ or $1\mod 3$. Also, $\phi_1,\ldots,\phi_L$ are relatively prime. The mapping $\mc{M}$ satisfying
    \begin{equation}\label{eqn:general_homo}
        \mc{M}(v^1,\ldots,v^L) \defeq \sum_{l=1}^L v^l\Pi_{l'=1,l'\neq l}^L \phi_{l'} \mod \Pi_{l=1}^L \phi_l\Zw,
    \end{equation}
    where $v^l\in\mbb{F}_{q_l}$, is a $\mbb{Z}$-module isomorphism from $\Zw/\Pi_{l=1}^L\Zw$ to $\mbb{F}_{q_1}\times\ldots,\times \mbb{F}_{q_L}$ and hence $\varphi\defeq \mc{M}^{-1}\circ \mod\Pi_{l=1}^L\Zw$ is a $\mbb{Z}$-module homomorphism.
\end{theorem}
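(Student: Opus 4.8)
The plan is to verify directly that the map $\mc{M}$ defined in \eqref{eqn:general_homo} is a well-defined $\mbb{Z}$-module homomorphism from $\mbb{F}_{q_1}\times\ldots\times\mbb{F}_{q_L}$ to $\Zw/\Pi_{l=1}^L\phi_l\Zw$ and then argue bijectivity by a cardinality count, which upgrades it to an isomorphism; the claim about $\varphi$ then follows immediately since the natural quotient map $\mod\Pi_{l=1}^L\phi_l\Zw$ is a $\mbb{Z}$-module homomorphism and a composition of $\mbb{Z}$-module homomorphisms is one. First I would treat the domain as a $\mbb{Z}$-module by letting $\mbb{Z}$ act on each coordinate $v^l\in\mbb{F}_{q_l}$ through reduction $\mod q_l$ (this is the $\mbb{Z}$-module structure implicit in \eqref{eqn:module_iso_Eis2}), and I would fix, for each $l$, a choice of integer (or Eisenstein-integer) representative for each class $v^l$; the first thing to check is that $\mc{M}$ does not depend on this choice. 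The key arithmetic fact for well-definedness is that for each fixed $l$, the element $\Pi_{l'\neq l}\phi_{l'}$ is divisible (in $\Zw$) by $\Pi_{l'\neq l}\phi_{l'}$ obviously, and more to the point $q_l \cdot \Pi_{l'\neq l}\phi_{l'} = |\phi_l|^2\,\Pi_{l'\neq l}\phi_{l'} = \phi_l\bar{\phi_l}\Pi_{l'\neq l}\phi_{l'} \in \Pi_{l=1}^L\phi_l\Zw$, so changing the representative of $v^l$ by a multiple of $q_l$ changes $\mc{M}(v^1,\ldots,v^L)$ by an element of the ideal $\Pi_{l=1}^L\phi_l\Zw$, i.e.\ not at all modulo that ideal. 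Additivity, $\mc{M}((v^1,\ldots,v^L)+(w^1,\ldots,w^L)) = \mc{M}(v^1,\ldots,v^L) + \mc{M}(w^1,\ldots,w^L)$, and $\mbb{Z}$-linearity, $\mc{M}(n\cdot(v^1,\ldots,v^L)) = n\,\mc{M}(v^1,\ldots,v^L)$ for $n\in\mbb{Z}$, are then immediate from the linearity of the defining sum together with the well-definedness just established.

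For bijectivity I would first observe that both sides are finite, and that
\[
    \left\lvert \mbb{F}_{q_1}\times\ldots\times\mbb{F}_{q_L}\right\rvert = \Pi_{l=1}^L q_l = \Pi_{l=1}^L|\phi_l|^2 = \left\lvert \Zw/\Pi_{l=1}^L\phi_l\Zw\right\rvert,
\]
where the last equality uses that $\Pi_{l=1}^L\phi_l$ has norm $\Pi_{l=1}^L q_l$ and that the index of $\psi\Zw$ in $\Zw$ equals $|\psi|^2$. Hence it suffices to prove injectivity, i.e.\ that the kernel is trivial. Suppose $\sum_{l=1}^L v^l\,\Pi_{l'\neq l}\phi_{l'} \equiv 0 \modl \Pi_{l=1}^L\phi_l\Zw$. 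Fixing an index $l_0$ and reducing this congruence modulo $\phi_{l_0}\Zw$: every term with $l\neq l_0$ contains the factor $\phi_{l_0}$ and so vanishes, leaving $v^{l_0}\,\Pi_{l'\neq l_0}\phi_{l'} \equiv 0 \pmod{\phi_{l_0}\Zw}$. Since the $\phi_l$ are relatively prime, $\Pi_{l'\neq l_0}\phi_{l'}$ is a unit in $\Zw/\phi_{l_0}\Zw \cong \mbb{F}_{q_{l_0}}$ (it is coprime to the maximal ideal $\phi_{l_0}\Zw$), hence $v^{l_0} \equiv 0$ in $\mbb{F}_{q_{l_0}}$; as $l_0$ was arbitrary, the kernel is trivial. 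This gives injectivity, and with the cardinality match, $\mc{M}$ is a $\mbb{Z}$-module isomorphism.

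Finally, $\varphi \defeq \mc{M}^{-1}\circ\mod\Pi_{l=1}^L\phi_l\Zw$ is the composition of the canonical surjection $\Zw\to\Zw/\Pi_{l=1}^L\phi_l\Zw$ — a $\mbb{Z}$-module homomorphism — with the $\mbb{Z}$-module isomorphism $\mc{M}^{-1}$, hence is a $\mbb{Z}$-module homomorphism; this establishes the theorem. I expect the only genuinely delicate point to be the kernel-triviality step, specifically the justification that reduction modulo $\phi_{l_0}\Zw$ kills exactly the right terms and that the surviving coefficient $\Pi_{l'\neq l_0}\phi_{l'}$ is invertible mod $\phi_{l_0}$ — this is precisely where the hypothesis that the $\phi_l$ are pairwise relatively prime (equivalently, that the ideals $\phi_l\Zw$ are pairwise coprime, so no $\phi_{l'}$ lies in $\phi_{l_0}\Zw$) is used, and it is essentially the constructive content of the Chinese Remainder Theorem already invoked in \eqref{eqn:iso_prod_Eis2}. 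Everything else is bookkeeping with the defining formula; I would not belabor the routine verifications of additivity and $\mbb{Z}$-linearity.
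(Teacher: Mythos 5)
Your proof is correct, and its core computation is the same one the paper uses: the only nontrivial arithmetic fact in both arguments is that $q_l\,\Pi_{l'\neq l}\phi_{l'}=\phi_l\bar{\phi}_l\,\Pi_{l'\neq l}\phi_{l'}\in\Pi_{l=1}^L\phi_l\Zw$, which you invoke for well-definedness (and hence additivity and $\mbb{Z}$-linearity), while the paper invokes it inside a direct verification that $\mc{M}(v_1^1,\ldots,v_1^L)+\mc{M}(v_2^1,\ldots,v_2^L)=\mc{M}(v_1^1\oplus v_2^1,\ldots,v_1^L\oplus v_2^L)$ after writing the integer sum $v_1^l+v_2^l$ as $v_1^l\oplus v_2^l+\xi_l q_l$. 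Where you go further is the isomorphism claim: the paper's proof verifies only the homomorphism property and leaves bijectivity implicit, resting on the Chinese-Remainder-Theorem identification in \eqref{eqn:iso_prod_Eis2}--\eqref{eqn:module_iso_Eis2}, whereas you prove it explicitly by matching cardinalities ($|\Zw/\Pi_l\phi_l\Zw|=\Pi_l|\phi_l|^2=\Pi_l q_l$) and showing the kernel is trivial by reducing modulo each $\phi_{l_0}\Zw$ and using that $\Pi_{l'\neq l_0}\phi_{l'}$ is a unit in $\Zw/\phi_{l_0}\Zw$ because the $\phi_l$ are pairwise relatively prime. Your version is therefore slightly more self-contained (it does not need to re-cite the CRT isomorphism to conclude bijectivity), at the cost of a little extra bookkeeping; both routes use the relative-primality hypothesis in exactly the same place, and both handle the ramified case $q_l=3$ correctly since $q_l=\phi_l\bar{\phi}_l$ holds there as well.
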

\begin{IEEEproof}
    Let $v^l_k\in \mbb{F}_{q_l}$ for $k\in\{1,2\}$ and $l\in\{1,\ldots,L\}$. Consider
    \begin{equation}
        \mc{M}(v^1_k,\ldots,v^L_k) = \sum_{l=1}^L v^l_k\Pi_{l'=1,l'\neq l}^L \phi_{l'} \mod\Pi_{l=1}^L\Zw.
    \end{equation}
    One has that
    \begin{align}
        &\mc{M}(v^1_1,\ldots,v^L_1) + \mc{M}(v^1_2,\ldots,v^L_2) \mod\Pi_{l=1}^L\Zw \nonumber \\
        &= \sum_{l=1}^L (v^l_1+v^l_2)\Pi_{l'=1,l'\neq l}^L \phi_{l'} \mod\Pi_{l=1}^L\Zw \nonumber \\
        &= \sum_{l=1}^L (v^l_1\oplus v^l_2+\xi_l q_l)\Pi_{l'=1,l'\neq l}^L \phi_{l'} \mod\Pi_{l=1}^L\Zw \nonumber \\
        &= \sum_{l=1}^L (v^l_1\oplus v^l_2)\Pi_{l'=1,l'\neq l}^L \phi_{l'} + \sum_{l=1}^L \xi_l q_l\Pi_{l'=1,l'\neq l}^L \phi_{l'}  \mod\Pi_{l=1}^L\Zw \nonumber \\
        &\overset{(a)}{=} \sum_{l=1}^L (v^l_1\oplus v^l_2)\Pi_{l'=1,l'\neq l}^L \phi_{l'} + \sum_{l=1}^L \xi_l \bar{\phi}_l\Pi_{l'=1}^L \phi_{l'} \mod\Pi_{l=1}^L\Zw \nonumber\\
        &= \sum_{l=1}^L (v^l_1\oplus v^l_2)\Pi_{l'=1,l'\neq l}^L \phi_{l'} \mod\Pi_{l=1}^L\Zw \nonumber\\
        &= \mc{M}(v^1_1\oplus v^1_2,\ldots,v^L_1\oplus v^L_2),
    \end{align}
    where $\xi_l \in \mbb{Z}$ and the equality (a) is due to the fact that $q_l$ can be uniquely factorized (up to associates) as $\phi_l\bar{\phi}_l$.

\end{IEEEproof}

\begin{example}\label{exp:63pt}
    We now consider an example that has more than 2 levels. We choose $\phi_1 = 1-\omega$, $\phi_2 = 1-2\omega$, and $\phi_3 = 3+2\omega$ with $q_1 = 3$, $q_2 = 7$, and $q_3 =7$, respectively. It can be verified that $\phi_1$, $\phi_2$, and $\phi_3$ are relatively prime. Then from Chinese Remainder Theorem, we have that $\Zw/\phi_1\phi_2\phi_3\Zw\cong\mbb{F}_3\times\mbb{F}_7\times\mbb{F}_7$. This will give us a constellation with 147 elements shown in Fig.~\ref{fig:homo_147pt} where only the 147 points inside the big hexagon are used (ties can be broken arbitrarily) as constellation points. The $\mbb{Z}$-module homomorphism provided in \eqref{eqn:general_homo} are also plotted where one can verify that it is indeed a valid homomorphism.

    \begin{figure}
    \centering
    \includegraphics[width=5in]{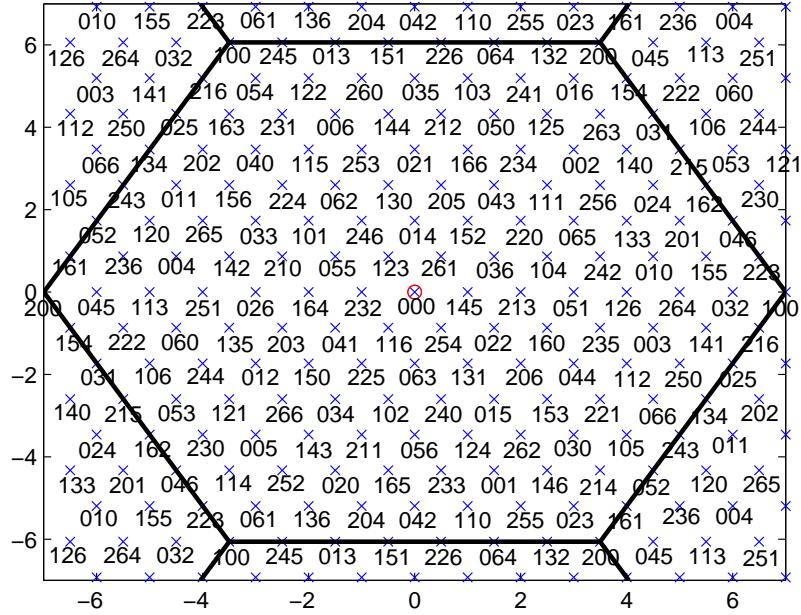}
    \caption{A 147-pt constellation in the second proposed family of constellations with $\phi_1 = 1-\omega$, $\phi_2 = 1-2\omega$, and $\phi_3 = 3+2\omega$.}
    \label{fig:homo_147pt}
    \end{figure}
\end{example}

\subsection{A Special Subclass of the Second Proposed Family of Constellations}\label{sec:const2_sp}
We now discuss an interesting subclass of the second proposed family of constellations. The constellations in this subclass have only two levels with equal size. Let $\phi_1=\phi$ and $\phi_2=\bar{\phi}$ with $|\phi|^2=|\bar{\phi}|^2=q$ be a rational prime congruent $1\mod 3$. As mentioned in Section~\ref{sec:prelim}, $\phi$ and $\bar{\phi}$ are relatively prime and the choices of such primes are abundant. Again, from the Chinese Remainder Theorem, one has that $\Zw/\phi\bar{\phi}\Zw\cong \mbb{F}_q^2$. One interesting feature of this subclass of constellations is that since two levels are over the same field $\mbb{F}_q$, $\mbb{Z}$-module homomorphisms can be generated by choosing any two linearly independent vectors as generators. Furthermore, as will be discussed later on, for this subclass of constellations, it is possible to include the idea of flexible decoding \cite{Brett11}. In what follows, we provide several interesting examples.

\begin{example}
Let $\phi=3+2\omega$ with $|\phi|^2=7$. Thus, we have $\Zw/\phi\bar{\phi}\Zw\cong \mbb{F}_7^2$. The constellation and a $\mbb{Z}$-module homomorphisms generated by
\begin{equation}\label{eqn:module_homo}
    \mc{M}(v^1,v^2) \triangleq v^1 + v^2 \omega \mod \phi\bar{\phi}\Zw.
\end{equation}
where $v^1, v^2\in\mbb{F}_q$, are shown in Fig.~\ref{fig:homo_49pt} where only the 49 points inside the big hexagon are used (ties can be broken arbitrarily) as constellation points.
    \begin{figure}
    \centering
    \includegraphics[width=5in]{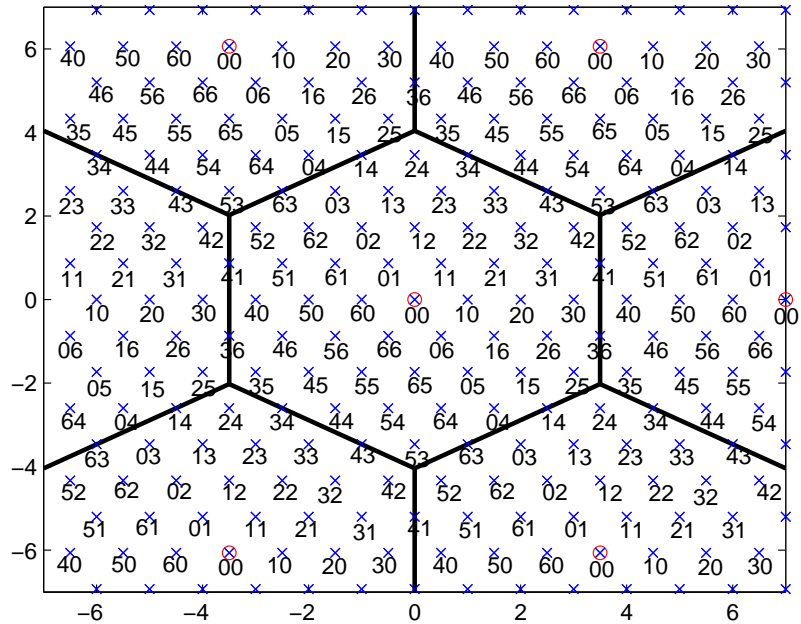}
    \caption{A 49-pt constellation in the second proposed family of constellations with $\phi = 3+2\omega$ and homomorphism defined in \eqref{eqn:module_homo}.}
    \label{fig:homo_49pt}
    \end{figure}
\end{example}

In general, there are many other ways to do the labeling. For example, one can incorporate the idea of Ungerboeck \cite{Ungerboeck82} where the minimum intra-subset Euclidean distance is maximized when partitioning at each level. However, this technique does not guarantee that the resulting mapping has the desired homomorphism property. One example of a signal mapping generated by the labeling technique of Ungerboeck is given in Fig.~\ref{fig:ungerboeck}. In each level, we use a ring isomorphism from $\mbb{F}_7$ to $\mbb{Z}$ to do the labeling. For example, if the first bit is set to be 0, one observes that all the points in $\mbb{C}$ corresponding to the seven points $(0,0), (0,1), \ldots, (0,6)$ in $\mbb{F}_7^2$ are mapped via a ring isomorphism. Similarly, inside each small hexagon, the mapping is done via a ring isomorphism. However, one can easily see that this mapping is in fact not a ring (or $\mbb{Z}$-module) isomorphism from $\mbb{F}_7^2$ to $\Zw$. Therefore, although it is a very powerful mapping technique in point-to-point communication, blindly applying the mapping of Ungerboeck usually provides significantly less rates as will be shown in Section~\ref{sec:simulation}.

We now show that it is possible to follow the guideline of Ungerboeck while maintaining the desired property. For the constellation generated by $\Zw/\phi\bar{\phi}\Zw$, one way to do this is to choose
\begin{equation}\label{eqn:module_homo_ungerboeck}
    \mc{M}(v^1,v^2) \triangleq v^1 + v^2 \phi \mod \phi\bar{\phi}\Zw,
\end{equation}
where $v^1,v^2\in\mbb{F}_q$. A labeling generated by this method can be found in Fig.~\ref{fig:ungerboeck2} where again the 49 points inside the big hexagon are used. In this example, one can verify that the intra-subset distance is indeed maximized. For example, the minimum distance of the set $(0,0), (0,1), \ldots, (0,6)$ is the largest one can get for this constellation. Furthermore, one can also verify that this mapping is indeed an homomorphism.

\begin{figure}
    \centering
    \includegraphics[width=5in]{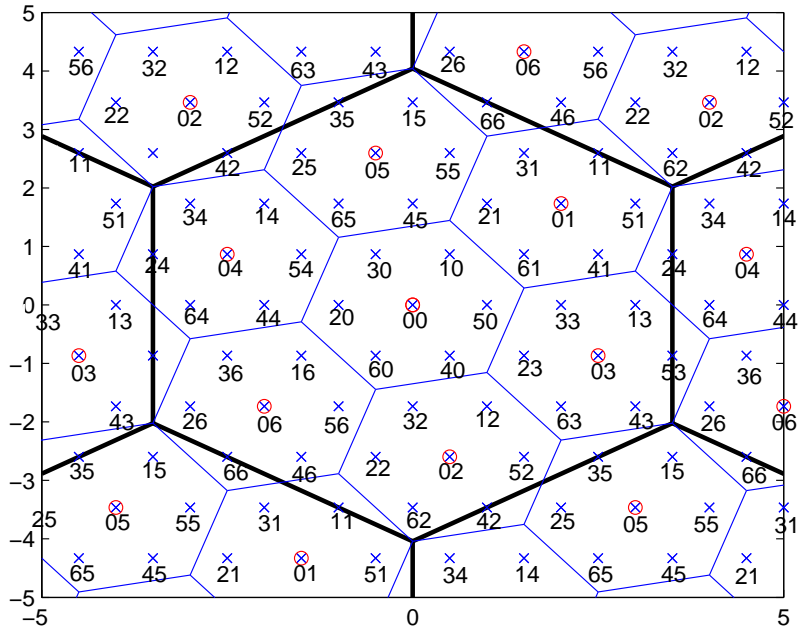}
    \caption{A 49-pt constellation with a labeling strategy obtained by blind apply Ungerboeck's idea \cite{Ungerboeck82}.}
    \label{fig:ungerboeck}
\end{figure}

\begin{figure}
    \centering
    \includegraphics[width=5in]{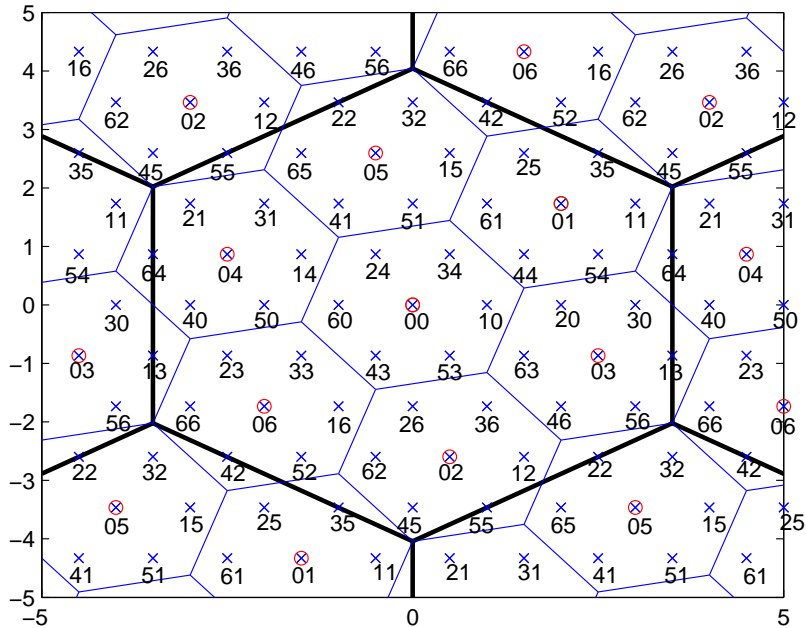}
    \caption{A 49-pt constellation with a labeling strategy obtained by \eqref{eqn:module_homo_ungerboeck}.}
    \label{fig:ungerboeck2}
\end{figure}

\begin{remark}[Extended Constructions Revisited]\label{rmk:D_revisit}
    One may have already noticed that the homomorphism we use in \eqref{eqn:module_homo_ungerboeck} resembles the mapping used in the extended Construction by Code Formula. Indeed, if we use the proposed product construction with this homomorphism, the lattice obtained would be
    \begin{equation}
        \Lambda_{\text{prod}} = q(\Zw)^N + \phi C^2 + C^1,
    \end{equation}
    and the extended Construction by Code Formula would generate
    \begin{equation}
        \Lambda_{\text{code}} = \phi^2(\Zw)^N + \phi C^2 + C^1,
    \end{equation}
    where $C^1$ and $C^2$ are $N$-dimensional linear codes over $\mbb{F}_q$. These two lattices look very similar. However, a subtle difference is that when we reduce $(\Zw)^N$ by modulo $\phi^2(\Zw)^N$, there is no guarantee that this mapping is a $\mbb{Z}$-module homomorphism. This is precisely why the extended Construction by Code Formula (and the extended Construction D as well) requires the linear codes to be nested in order to have a lattice. On the other hand, by Chinese Remainder Theorem, we have shown that when we reduce $(\Zw)^N$ by modulo $q(\Zw)^N$, $\mbb{Z}$-module homomorphisms exist and one can easily verify that \eqref{eqn:module_homo_ungerboeck} is indeed a valid one.
\end{remark}


\subsection{The General Result}\label{sec:genreal_result}
Here, we summarize in the following theorem the proposed constellations by providing the general result which absorbs all the proposed constellations and those in \cite{Engin12} as special cases. The proof is similar to those above and hence is omitted.
\begin{theorem}\label{thm:mix_prime}
    Let $\phi_l$ with $|\phi|^2=q_l$ congruent to $1\mod 3$ for $l\in\{1,\ldots,L\}$ and $\tilde{\phi}_{l'}$ with $|\tilde{\phi}_{l'}|=q_{l'}$ congruent to $2\mod 3$ for $l'\in\{1,\ldots,L'\}$ be a collection of distinct Eisenstein primes that are relatively prime. The following ring homomorphism exists
    \begin{equation}
        \sigma: \Zw\rightarrow \Zw/\Pi_{l=1}^L \Pi_{l'=1}^{L'} \phi_l\tilde{\phi}_{l'} \Zw \overset{\mc{M}}{\underset{\mc{M}^{-1}}{\leftrightarrows}} \times_{l=1}^L \mbb{F}_{q_l}\times \times_{l'=1}^{L'}\mbb{F}_{q_{l'}^2}.
    \end{equation}
    Moreover, when viewing the rings considered as finitely-generated Abelian groups, one has the following $\mbb{Z}$-module homomorphism
    \begin{equation}
        \varphi: \Zw\rightarrow \Zw/\Pi_{l=1}^L \Pi_{l'=1}^{L'} \phi_l\tilde{\phi}_{l'} \Zw \overset{\mc{M}}{\underset{\mc{M}^{-1}}{\leftrightarrows}} \times_{l=1}^L \mbb{Z}/q_l\mbb{Z}\times \times_{l'=1}^{L'}(\mbb{Z}/q_{l'}\mbb{Z})^2.
    \end{equation}
\end{theorem}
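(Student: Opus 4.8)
The plan is to establish Theorem~\ref{thm:mix_prime} by essentially the same chain of ring-isomorphism arguments used above for the two separate families, now applied to a mixture of inert and split (or ramified) primes simultaneously. First I would observe that since $\Zw$ is a PID, each of the distinct primes $\phi_1,\ldots,\phi_L$ (with $|\phi_l|^2=q_l\equiv 1\bmod 3$) and $\tilde{\phi}_1,\ldots,\tilde{\phi}_{L'}$ (with $|\tilde{\phi}_{l'}|=q_{l'}\equiv 2\bmod 3$) generates a maximal ideal by Lemma~\ref{lma:PID}, and these ideals are pairwise relatively prime by hypothesis. Hence their product equals their intersection and the Chinese Remainder Theorem (Lemma~\ref{lma:CRT}) gives
\begin{equation}
    \Zw\Big/\Pi_{l=1}^L\Pi_{l'=1}^{L'}\phi_l\tilde{\phi}_{l'}\Zw \;\cong\; \Big(\textstyle\prod_{l=1}^L \Zw/\phi_l\Zw\Big)\times\Big(\textstyle\prod_{l'=1}^{L'}\Zw/\tilde{\phi}_{l'}\Zw\Big).
\end{equation}
Then, applying Lemma~\ref{lma:MAX} to each factor, one has $\Zw/\phi_l\Zw\cong\mbb{F}_{q_l}$ for the split primes (since the quotient has order $|\phi_l|^2=q_l$) and $\Zw/\tilde{\phi}_{l'}\Zw\cong\mbb{F}_{q_{l'}^2}$ for the inert primes (since the quotient has order $|\tilde{\phi}_{l'}|^2=q_{l'}^2$). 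Composing these isomorphisms with the natural homomorphism $\mod\Pi\phi_l\tilde{\phi}_{l'}\Zw$ yields the claimed ring homomorphism $\sigma$.

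For the $\mbb{Z}$-module statement, I would simply forget the multiplicative structure in the above isomorphism and regard every ring as a finitely generated Abelian group. Each inert factor $\mbb{F}_{q_{l'}^2}$, as an additive group, is isomorphic to $(\mbb{Z}/q_{l'}\mbb{Z})^2$, and each split factor $\mbb{F}_{q_l}$ is $\mbb{Z}/q_l\mbb{Z}$; these additive identifications are canonical up to choice of basis, and composing with the ring isomorphism above gives a $\mbb{Z}$-module isomorphism $\mc{M}$ from the product to the quotient, whence $\varphi\defeq\mc{M}^{-1}\circ\mod\Pi\phi_l\tilde{\phi}_{l'}\Zw$ is a $\mbb{Z}$-module homomorphism. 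An explicit $\mc{M}$ can be written in the style of \eqref{eqn:general_homo} by sending the generator of the $l$-th (resp.\ $l'$-th) component to $\Pi_{l''\neq l}\phi_{l''}\tilde{\phi}_{l''}$ times an appropriate basis element, and verifying additivity by the same cancellation argument (the stray multiples of $q_l=\phi_l\bar{\phi}_l$, resp.\ $q_{l'}=\tilde{\phi}_{l'}\bar{\tilde{\phi}}_{l'}$, are absorbed into the full product modulus) — though since the paper says ``the proof is similar to those above and hence is omitted,'' I would keep this short and merely indicate the construction.

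The only genuinely delicate point — and the one I would flag as the main obstacle — is bookkeeping the two \emph{different} ways the rational prime factors appear: for a split prime one uses the \emph{norm} $|\phi_l|^2=q_l$ as the index of the ideal, while for an inert prime one uses $|\tilde{\phi}_{l'}|^2=q_{l'}^2$ with $|\tilde{\phi}_{l'}|=q_{l'}$, so the field sizes $q_l$ versus $q_{l'}^2$ must be tracked carefully and the ``relatively prime'' hypothesis must be checked to genuinely rule out, e.g., a split prime above $q$ coexisting with the inert $q'$ when $q=q'$. Once the distinctness and coprimality of the ideals is in hand, everything else is a direct invocation of Lemmas~\ref{lma:PID}, \ref{lma:MAX}, and \ref{lma:CRT} exactly as in the arguments already given, so no new technique is required.
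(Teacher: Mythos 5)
Your proposal is correct and is essentially the argument the paper has in mind: the paper omits the proof of Theorem~\ref{thm:mix_prime}, stating it is ``similar to those above,'' and your chain of Lemma~\ref{lma:PID} (maximal ideals), Lemma~\ref{lma:CRT} (Chinese Remainder Theorem with product equal to intersection), and Lemma~\ref{lma:MAX} (quotients isomorphic to $\mbb{F}_{q_l}$ or $\mbb{F}_{q_{l'}^2}$ according to the norm), followed by forgetting the multiplicative structure to obtain the $\mbb{Z}$-module statement and an explicit $\mc{M}$ in the style of \eqref{eqn:general_homo}, is exactly the route used in Sections~\ref{sec:const1} and~\ref{sec:const2}. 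The coprimality worry you flag is in fact vacuous here since $q_l\equiv 1$ and $q_{l'}\equiv 2 \pmod 3$ can never coincide, and distinct non-associate primes in a PID already generate relatively prime ideals.
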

This theorem suggests that the constellation $\Zw/\Pi_{l=1}^L\phi_l \Zw$ is suitable for separation-based compute-and-forward for \textit{any} collection of Eisenstein primes that are relatively prime. Note that this result indicates that one can freely choose the primes regardless which congruence classes they belong to. In what follows, we provide one example of this kind.

\begin{example}
Let $\phi_1 = 1+2\omega$ and $\tilde{\phi}_1 = 2$. It should be noted that $|\phi_1|^2 = 3$ is congruent to $0\mod 3$ and $|\tilde{\phi}_1| = 2$ is congruent to $2\mod 3$. Theorem~\ref{thm:mix_prime} provides that $\Zw/\phi_1\tilde{\phi}_1\Zw\cong \mbb{F}_3\times\mbb{F}_{2^2}$. The constellation and the ring homomorphism is given in Fig.~\ref{fig:homo_12pt} where we use the first index to denote the element in $\mbb{F}_3$ and use the last two indices to denote the element in $\mbb{F}_{2^2}$. The multiplication is defined componentwise and the multiplication over $\mbb{F}_{2^2}$ is determined by the irreducible polynomial $x^2+x+1$. One can verify that the labeling indeed is a ring homomorphism. Moreover, by viewing $\mbb{F}_{2^2}$ as the finitely-generated Abelian group $(\mbb{Z}/2\mbb{Z})^2$, one obtains a $\mbb{Z}$-module homomorphism.
    \begin{figure}
        \centering
        \includegraphics[width=5in]{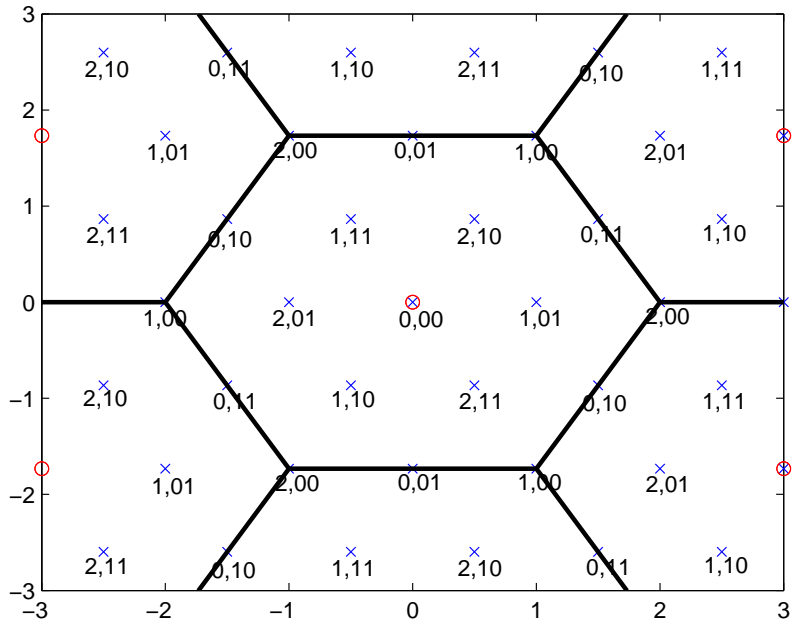}
        \caption{A 12-pt constellation generated by $\phi_1 = 1+2\omega$ and $\tilde{\phi}_1 = 2$.}
        \label{fig:homo_12pt}
    \end{figure}
\end{example}

\section{Proposed Separation-Based Multistage Compute-and-Forward}\label{sec:encode_decode}
In this section, we propose a multilevel encoding/multi-stage decoding scheme where only $\mbb{Z}$-module homomorphisms are required for exploiting the structural gains. For the proposed constellations, the proposed multilevel coding scheme allows one to significantly reduce the decoding complexity at the price of a slight rate reduction. Specifically, for the constellations with $\Pi_{l=1}^L q_l$ elements, the employment of the proposed multilevel coding/multistage decoding admits a low decoding complexity that is dominated by $\max_l{q_l}$ instead of $\Pi_{l=1}^L q_l$ the constellation size. Before starting, we note that the following description of the proposed multilevel coding/multistage decoding only considers the second proposed family of constellations. The other cases can be obtained straightforwardly.

\subsection{Encoding/Decoding}
Let $\phi_1,\phi_2,\ldots,\phi_L$ be a collection of distinct Eisenstein prime with $|\phi_l|^2=q_l$ a rational prime congruent to $0\mod 3$ or $1\mod 3$. Also, $\phi_1,\phi_2,\ldots,\phi_L$ are relatively prime. Let the messages be length-$N'$ vectors over $\mbb{F}_p$ such that
\begin{equation}
    p^{N'} \approx q_1^{m^1}\cdot q_2^{m^2}\cdot \ldots q_L^{m^L}.
\end{equation}
Each source node $S_k$ first splits its input stream $\mathbf{w}_k\in \mbb{F}_p^{N'}$ into $L$ streams, namely $\mathbf{w}_{k}^1\in\mbb{F}_{q_1}^{m^l}, \ldots, \mathbf{w}_{k}^L\in\mbb{F}_{q_L}^{m^L}$. For an Eisenstein prime $\phi$ such that $|\phi|=q$ is a rational prime congruent to $2\mod 3$, the encoder splits the input stream into two 2 streams which are over $\mbb{F}_q$. Let $C^l$ be linear code over $\mbb{F}_{q_l}$ adopted in level $l$ and let $G^l$ be the generator matrices for $l\in\{1,2,\ldots, L\}$. The rate of these linear codes are chosen to be $R_l=m^l/N \cdot \log(q_l)$ such that the outputs have the same length $N$. In total, the targeted computation rate of the proposed multilevel coding/multistage decoding scheme is $R_{\text{MLC}}=\sum_{l=1}^L R_l$. We individually encode each stream with the corresponding linear code as $\mathbf{c}_{k}^l = \mathbf{w}_{k}^l G^l$ for $l\in\{1,2,\ldots, L\}$. The encoder $k$ then takes $L$ of these symbols and maps them to a symbol from $\mc{A}$ via $\mc{M} \triangleq \gamma\cdot\varphi^{-1}$. The overall encoding process is summarized in Fig.~\ref{fig:encoder}.

\begin{figure}
    \centering
    \includegraphics[width=4in]{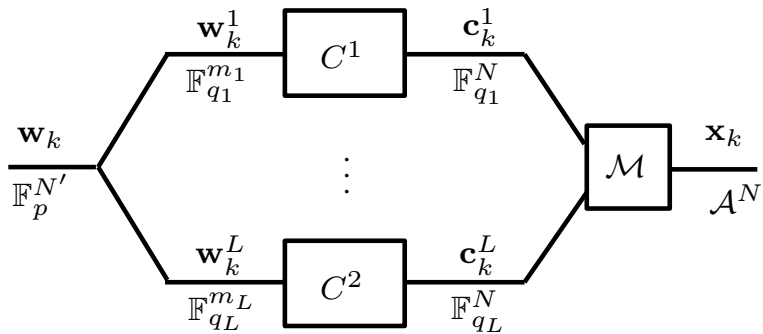}
    \caption{The encoder of the proposed multilevel coding scheme.}
    \label{fig:encoder}
\end{figure}

The decoder at the destination is a multistage decoder in which when decoding at level $l$, we treat all the subsequent levels $l'> l$ as unknown and regard the decoding at all the previous levels $\tilde{l}<l$ as correct. For this decoder, since we only consider one level at a time, the decoding performed at the $l$th level only deals with the code over $\mbb{F}_{q_l}$. Therefore, using the multilevel coding with multistage decoding for compute-and-forward only requires $\mbb{Z}$-module homomorphisms (from $\Zw$ to $\mbb{F}_{q_1}\times\ldots\times\mbb{F}_{q_L}$) instead of ring homomorphisms. This can also be seen from the examples provided in Section~\ref{sec:const} where it is evident that when considering only one level, the equivalent constellation is a modulo version of a $q_l$-ary pulse amplitude modulation (PAM). In the following, we describe the details of the proposed decoding algorithm.

At the first stage of decoding, given $b_{1}^1, b_{2}^1\in \mbb{F}_{q_1}$, in order to decode the first data stream, the decoder first computes the \textit{a posteriori} probabilities given by
\begin{align}
    &\Pp\left(\tilde{c}_{R}^1[n] = c | y[n]\right) \propto \underset{\underset{b_{1}^1c_{1}^1\oplus b_{2}^1c_{2}^1=c}{c_{1}^1,c_{2}^1\in\mbb{F}_{q_1}:}}{\sum} ~~\underset{\underset{l\in\{2,\ldots L\}}{c_{1}^l,c_{2}^l\in\mbb{F}_{q_l}}}{\sum}\nonumber \\
    &\exp\left[-\left\| h_1\mc{M}(c_{1}^1,\ldots,c_{1}^L)+h_2\mc{M}(c_{2}^1,\ldots,c_{2}^L) - y[n]\right\|^2\right],
\end{align}
for all $c\in\mbb{F}_{q_1}$ and for each codeword dimension $n$. According to these \textit{a posteriori} probabilities, the decoder forms the first level's estimate given by
\begin{equation}
    \hat{\mathbf{c}}_{R}^1 = \underset{\mathbf{c}\in C^1}{\arg\max}\prod_{n=1}^N \Pp\left(\tilde{c}_{R}^1[n] = c[n] | y[n]\right),
\end{equation}
where $c[n]$ denotes the $n$th element of the codeword $\mathbf{c}\in C^1$.

At the $l$th level, $l\in\{2,\ldots,L\}$, we assume the decoding at the levels $1,\ldots, l-1$ is correct and regards the signals from the levels $l+1,\ldots,L$ as unknown. Given $(b_{1}^l , b_{2}^l)$, the destination computes the corresponding \textit{a posteriori} probabilities given by
\begin{align}
    &\Pp\left(\tilde{c}_{R}^l[n] = c | y[n], \hat{c}_{R}^1[n],\ldots,\hat{c}_{R}^{l-1}[n]\right) \propto \nonumber \\ &\underset{\underset{\underset{\tilde{l}\in\{1,\ldots,l-1\}} {b_{1}^{\tilde{l}} c_{1}^{\tilde{l}}\oplus b_{2}^{\tilde{l}} c_{2}^{\tilde{l}} = \hat{c}_{R}^{\tilde{l}}[n]}} {c_{1}^{\tilde{l}},c_{2}^{\tilde{l}}\in\mbb{F}_{q_{\tilde{l}}}:}}{\sum} ~~
    \underset{\underset{b_{1}^2c_{1}^2\oplus b_{2}^2c_{2}^2=l}{c_{1}^2,c_{2}^2\in\mbb{F}_q:}}{\sum} ~~
    \underset{\underset{l'\in\{l+1,\ldots,L\} } {c_{1}^{l'},c_{2}^{l'}\in\mbb{F}_{q_{l'}}}}{\sum}\nonumber \\
    &\exp\left[-\left\| h_1\mc{M}(c_{1}^1,\ldots,c_{1}^L)+h_2\mc{M}(c_{2}^1,\ldots,c_{2}^L) - y[n]\right\|^2\right],
\end{align}
for all $c\in\mbb{F}_{q_l}$ and for each codeword dimension $n$. Similar to the first level, the decoder then forms the $l$th level's estimate as
\begin{equation}
    \hat{\mathbf{c}}_{R}^l = \underset{\mathbf{c}\in C^l}{\arg\max}\prod_{n=1}^N \Pp\left(\tilde{c}_{R}^l[n] = c[n] | y[n], \hat{c}_{R}^1[n],\ldots,\hat{c}_{R}^{l-1}[n]\right).
\end{equation}

\begin{remark}
    As shown in \cite{KofmanZehaviShamai}, for traditional multilevel coding scheme, adding an interleaver/deinterleaver pair prevents burst error propagation to the next level and hence improves the error probability. Similarly, for the proposed multilevel compute-and-forward scheme, one can also add an interleaver/deinterleaver pair at the encoder/decoder of each level to mitigate the effect of erroneous decoding at the previous first levels. Another way to potentially further lower the error probability is to perform iterative multistage decoding proposed in \cite{krishna00}. However, since the focus of this paper is on the analysis of achievable computation rate instead of error probability, we do not pursue these potential extensions.
\end{remark}

\subsection{Suboptimal Decoders}\label{sec:para_dec}
Motivated by the decoding algorithm in \cite{forney2000} for Construction D lattices, a suboptimal but less complex decoding algorithm can be implemented for the proposed constellations with the homomorphism given in \eqref{eqn:general_homo} as follows. Let us first assume that the channel gains are equal to 1 and $\gamma=1$ for the sake of simplicity. One observes that in the homomorphism \eqref{eqn:general_homo}, $\phi_1$ appears in the coefficient of every term except for the first term. Note that all the $\phi_l$s are relatively prime. Hence, when decoding the first codeword one can first knock out the contribution from all the other codewords by simply forming
\begin{equation}
    \mathbf{y}^1 = \mathbf{y} \mod\phi_1\Zw.
\end{equation}
The decoder then decodes to $\hat{\mathbf{c}}_R^1$ from $\mathbf{y}_1'$. For the levels $l\in\{2,\ldots,L-1\}$, the decoder subtracts all the effects from the previous levels and knocks out all the contributions from the next levels via forming
\begin{equation}
    \mathbf{y}^l = \left(\mathbf{y} - \sum_{s=1}^{l-1} \Pi_{i=1, i\neq s}^L \phi_{i} \hat{\mathbf{c}}_R^{s}\right) \mod\phi_l\Zw.
\end{equation}
The decoder then forms $\hat{\mathbf{c}}_R^l$ the output of the decoder at the $l$th level from $\mathbf{y}^l$. This makes the channel experienced by the $l$th coded stream a single level additive $\mod\phi_l\Zw$ channel. As mentioned in \cite{forney2000}, the above procedure will cause suboptimality only in the low SNR regime. In the last level of decoding, one does not have to do the modulo operation as there is only one level left. Therefore, the decoder at the last level directly decodes $\hat{\mathbf{c}}_R^L$ from
\begin{equation}
    \mathbf{y}^L = \left(\mathbf{y} - \sum_{s=1}^{L-1} \Pi_{i=1, i\neq s}^L \phi_{i} \hat{\mathbf{c}}_R^{s}\right).
\end{equation}
We summarize the procedure of the suboptimal decoder in Fig.~\ref{fig:sub_dec}. It should be noted that when channel coefficients are not $h_1=h_2=1$, similar to \cite{nazer2011CF}, one can first use an linear minimum mean squared error (MMSE) estimator to approximate the channel coefficients to a pair of Eisenstein integers and then quantize the scaled received signal to the Eisenstein integer combination of transmitted signals.
\begin{figure}
    \centering
    \includegraphics[width=4in]{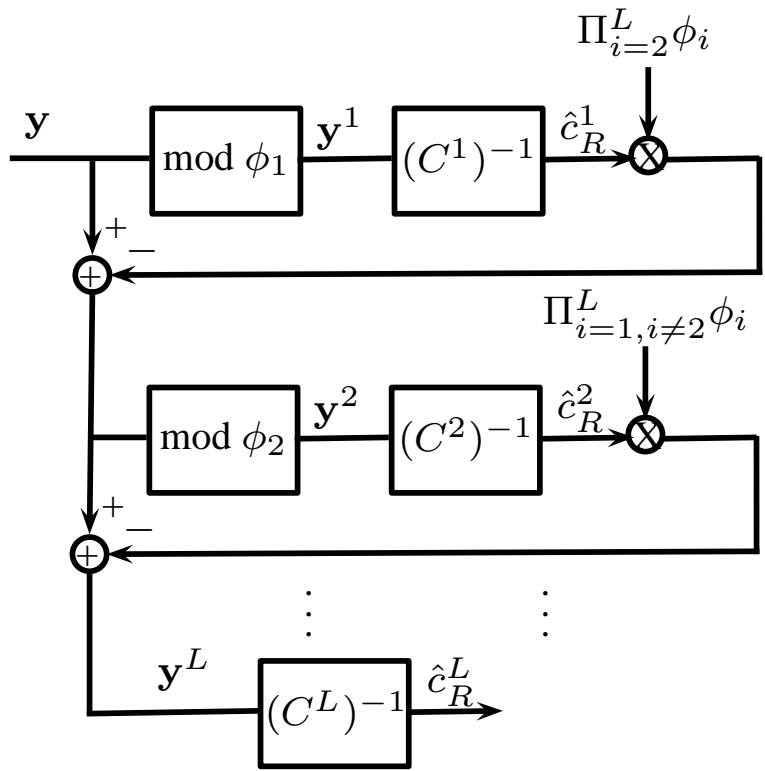}
    \caption{The proposed suboptimal decoder for the multilevel coding/multistage decoding scheme with the second proposed family of constellations and homomorphisms in \eqref{eqn:general_homo}.}
    \label{fig:sub_dec}
\end{figure}

A decoding algorithm that is even less complex and causes more rate reduction is given in Fig.~\ref{fig:para_dec} which is referred to as the parallel decoder due to that it can be implemented in a parallel fashion. This decoder simultaneously forms
\begin{equation}
    \mathbf{\tilde{y}}^l = \mathbf{y} \mod\phi_l\Zw,
\end{equation}
and then directly decodes $\hat{\mathbf{c}}_R^l$. As mentioned before, since $\phi_l$s are relatively prime, the modulo operation will get rid of the contributions from all but the $l$th level. However, in addition to having a $\mod \phi_l\Zw$ channel, this decoder also gives away the knowledge of previously decoded codeword and hence is worse than the previous one.

\begin{figure}
    \centering
    \includegraphics[width=4in]{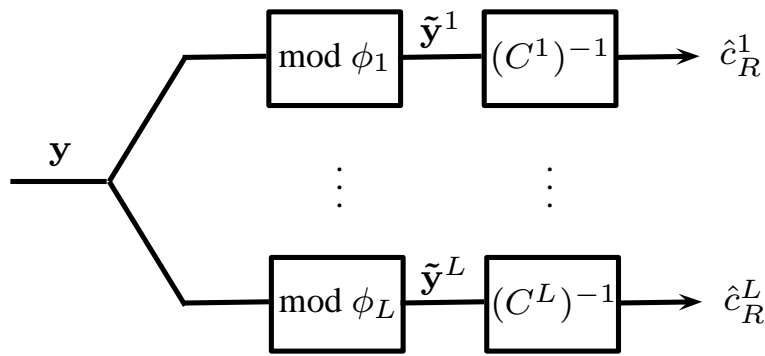}
    \caption{The proposed parallel decoder for the multilevel coding/multistage decoding scheme with the second proposed family of constellations and homomorphisms in \eqref{eqn:general_homo}.}
    \label{fig:para_dec}
\end{figure}

\subsection{Achievable Computation Rate}
In \cite{WachsmannFischerHuber99}, using the chain rule of mutual information \cite{cover91}, Wachsmann \textit{et al.} show that multilevel coding incurs no loss in terms of the achievable information rate for point to point communication. The same proof works for our construction as well, which we summarize here. Let $C^1, \ldots, C^L$ be the codebooks used for level $1,\ldots,L$, respectively, and let $\msf{C}^1,\ldots,\msf{C}^2$ be the corresponding random variables. Also, notice that the mapping between $(C^1,\ldots,C^L)$ and $\mc{A}$ is bijective. One has that
\begin{align}
    R_{\text{AWGN}} &= I(\msf{Y};\msf{A}) = I(\msf{Y};\mc{M}(\msf{C^1},\ldots,\msf{C^L}))\nonumber \\
    &\overset{(a)}{=}I(\msf{Y};\msf{C^1},\ldots,\msf{C^L}) = \sum_{l=1}^L I(\msf{Y};\msf{C^l}|\msf{C^1},\ldots,\msf{C^{l-1}}),
\end{align}
where (a) is due to the fact that $\mc{M}$ is bijective.

Now, we provide the achievable information rates of the proposed schemes for compute-and-forward. We first consider the case when multilevel coding/multistage decoding is not used. We restrict ourself to codes over fields (i.e., codes over rings are not considered in this paper). For the first proposed family in Section~\ref{sec:const1}, one can choose to directly work over $\mbb{F}_{q^2}$. Also, for the special subclass of second proposed family of constellations in Section~\ref{sec:const2_sp}, one can endow a matrix multiplication to $\mbb{F}_q^2$ so that again one can directly work over $\mbb{F}_{q^2}$. Now let $C$ be a linear codebook over $\mbb{F}_{q^2}$ and let $\msf{C}_1$ and $\msf{C}_2$ be the corresponding random variables at source nodes 1 and 2, respectively. The achievable computation rate of directly working over $\mbb{F}_{q^2}$ can be written as
\begin{equation}\label{eqn:R_GF}
    R_{\text{direct}} = \underset{b_1,b_2\in\mbb{F}_{q^2}}{\max} I(\msf{Y};b_1\msf{C_1}\oplus b_2\msf{C_2}),
\end{equation}
where the subscript "direct" stands for that we directly work over the extension field.

For the case when multilevel coding/multistage decoding scheme is adopted (which works for all the proposed constellations), let $C^1,\ldots,C^L$ be the linear codebooks adopted for level $1,\ldots,L$, respectively, and let $\msf{C}_k^1,\ldots,\msf{C}_k^L$ be the corresponding random variables at source node $k$. One has the achievable computation rate given by
\begin{align}\label{eqn:R_MLC}
    R_{\text{MLC}} &= \underset{\underset{l\in\{1,\ldots,L\}}{b_{1}^l,b_{2}^l~\in\mbb{F}_{q_l}}}{\max} I(\msf{Y};b_{1}^1\msf{C}_1^1\oplus b_2^1\msf{C}_2^1, \ldots, b_{1}^L\msf{C}_1^L\oplus b_2^L\msf{C}_2^L) \nonumber \\
    &=\underset{\underset{l\in\{1,\ldots,L\}}{b_{1}^l,b_{2}^l~\in\mbb{F}_{q_l}}}{\max} \sum_{l=1}^L I(\msf{Y};b_{1}^l\msf{C}_1^l \oplus b_2^l\msf{C}_2^l|b_{1}^1\msf{C}_1^1\oplus b_2^1\msf{C}_2^1,\ldots,b_{1}^1\msf{C}_1^{l-1}\oplus b_2^1\msf{C}_2^{l-1}) \nonumber \\
    &=\sum_{l=1}^L I(\msf{Y};b_{1}^{*l}\msf{C}_1^l \oplus b_2^{*l}\msf{C}_2^l|b_{1}^{*1}\msf{C}_1^1\oplus b_2^{*1}\msf{C}_2^1,\ldots,b_{1}^{*1}\msf{C}_1^{l-1}\oplus b_2^{*1}\msf{C}_2^{l-1}),
\end{align}
where $b_k^{*l}$ for $k\in\{1,2\}$ and $l\in\{1,\ldots,L\}$ are the maximizers. Note that this information rate may be achieved by using a \textit{good} linear code at the $l$th level with the rate set to be
\begin{equation}
    R_{\text{MLC}}^l = I(\msf{Y};b_{1}^{*l}\msf{C}_1^l \oplus b_2^{*l}\msf{C}_2^l|b_{1}^{*1}\msf{C}_1^1\oplus b_2^{*1}\msf{C}_2^1,\ldots,b_{1}^{*1}\msf{C}_1^{l-1}\oplus b_2^{*1}\msf{C}_2^{l-1}).
\end{equation}

\begin{remark}
When we consider the proposed constellations with $q^2$ elements, it should be noted that $R_\text{direct}$ and $R_\text{MLC}$ are in general not the same. It is because for $\msf{C}_k = (\tilde{\msf{C}}_k^1,\tilde{\msf{C}}_k^2)$, given $b_1,b_2\in\mbb{F}_{q^2}$, there may not exist $\tilde{b}_{1}^1,\tilde{b}_{1}^2,\tilde{b}_{2}^1,\tilde{b}_{2}^2~\in\mbb{F}_q$ such that
\begin{equation}\label{eqn:r_q_eq_r_q2}
    b_1\msf{C_1} \oplus b_2\msf{C_2} = (\tilde{b}_{1}^1\tilde{\msf{C}}_1^1\oplus \tilde{b}_2^1\tilde{\msf{C}}_2^1, \tilde{b}_{1}^2\tilde{\msf{C}}_1^2\oplus \tilde{b}_2^2\tilde{\msf{C}}_2^2),
\end{equation}
and $\tilde{\msf{C}}_k^1, \tilde{\msf{C}}_k^2$ are valid codewords over $\mbb{F}_q$. In what follows, we include the idea of flexible decoding \cite{Brett11} so that the multilevel coding/multistage decoding scheme can recover all the combinations in $\mbb{F}_{q^2}$ and can potentially do more.
\end{remark}

One can also get the achievable rates with the suboptimal decoder and that with the parallel decoder as follows.
\begin{align}\label{eqn:R_sub}
    R_{\text{sub}} &=\underset{\underset{l\in\{1,\ldots,L\}}{b_{1}^l,b_{2}^l~\in\mbb{F}_{q_l}}}{\max} \sum_{l=1}^L I(\msf{Y}^l;b_{1}^l\msf{C}_1^l \oplus b_2^l\msf{C}_2^l) \nonumber \\
    &=\sum_{l=1}^L I(\msf{Y}^l;b_{1}^{*l}\msf{C}_1^l \oplus b_2^{*l}\msf{C}_2^l),
\end{align}
and
\begin{align}\label{eqn:R_sub}
    R_{\text{para}} &=\underset{\underset{l\in\{1,\ldots,L\}}{b_{1}^l,b_{2}^l~\in\mbb{F}_{q_l}}}{\max} \sum_{l=1}^L I(\msf{\tilde{Y}}^l;b_{1}^l\msf{C}_1^l \oplus b_2^l\msf{C}_2^l) \nonumber \\
    &=\sum_{l=1}^L I(\msf{\tilde{Y}}^l;b_{1}^{*l}\msf{C}_1^l \oplus b_2^{*l}\msf{C}_2^l),
\end{align}
where again $b_k^{*l}$ for $k\in\{1,2\}$ and $l\in\{1,\ldots,L\}$ are the maximizers. Again, these rates may be achieved by setting the rate at $l$th level to be $R_{\text{sub}}^l = I(\msf{Y}^l;b_{1}^{*l}\msf{C}_1^l \oplus b_2^{*l}\msf{C}_2^l)$ and $R_{\text{para}}^l = I(\msf{\tilde{Y}}^l;b_{1}^{*l}\msf{C}_1^l \oplus b_2^{*l}\msf{C}_2^l)$, respectively

\subsection{Flexible Decoding}\label{sec:flex_dec}
For the proposed constellations with $q^2$ elements, i.e., those in the first proposed family or in the special class of the second proposed family, there is an interesting extension that may potentially increase the achievable computation rates. The idea is to restrict the codes used in the two levels (they are over the same field $\mbb{F}_q$) to be the same, i.e., $C^1 = C^2$. By doing this, one can incorporate the idea of flexible decoding \cite{Brett11} into our framework. By doing this, in addition to the original choice we have had
\begin{equation}
    b_1^1 \mathbf{c}_1^1\oplus b_2^1 \mathbf{c}_2^1 \text{~and~} b_1^2 \mathbf{c}_1^2\oplus b_2^2 \mathbf{c}_2^2,
\end{equation}
where $b_1^1, b_1^2, b_2^1, b_2^2\in\mbb{F}_q$, one can decode to something else. For example,
\begin{equation}
    \tilde{b}_1^1 \mathbf{c}_1^1\oplus \tilde{b}_2^2 \mathbf{c}_2^2 \text{~and~} \tilde{b}_1^2 \mathbf{c}_1^2\oplus \tilde{b}_2^1 \mathbf{c}_2^1,
\end{equation}
where $\tilde{b}_1^1, \tilde{b}_1^2, \tilde{b}_2^1, \tilde{b}_2^2\in\mbb{F}_q$. More precisely, one can decode the received signal to
\begin{equation}\label{eqn:flexble_decoding}
    [\mathbf{c}_R^1,\mathbf{c}_R^2]^T = [\mathbf{B}_1 \mathbf{B}_2]
    \left[
                                 \begin{array}{c}
                                   \mathbf{c}_1^1 \\
                                   \mathbf{c}_1^2 \\
                                   \mathbf{c}_2^1 \\
                                   \mathbf{c}_2^2 \\
                                 \end{array}
                               \right],
\end{equation}
where $\mathbf{B}_1$ and $\mathbf{B}_2$ are chosen from $\mc{B}$ the set of all 2 by 2 full-rank matrices with elements in $\mbb{F}_q$. This is because now all the codes used are identical so that $\mathbf{c}_1^1, \mathbf{c}_1^2, \mathbf{c}_2^1, \mathbf{c}_2^2$ are codewords from a same linear code. This approach allows rich choices of functions that one can decode to and hence may result in a higher rate in general. We now summarize the computation rates achieved by the proposed scheme with flexible decoding in the next theorem.

\begin{theorem}\label{thm:flexible}
    The achievable computation rate for the proposed constellations with the proposed multilevel coding scheme and with flexible decoding is given by
    \begin{align}
        R_{\text{flex}} \leq \underset{\mathbf{B}_1,\mathbf{B}_2\in\mc{B}}{\max} \min &\left\{ I(\msf{Y};\msf{C}_R^1|\msf{C}_R^2) , I(\msf{Y};\msf{C}_R^2|\msf{C}_R^1) \vphantom{\frac{1}{2}}, \right.\nonumber \\
        &  \left.\frac{1}{2}I(\msf{Y};\msf{C}_R^1,\msf{C}_R^2), I(\msf{Y};\msf{C}_R^1,\msf{C}_R^2|\msf{C}_R^1\oplus\msf{C}_R^2) \right\},
    \end{align}
    where $\msf{C}_R^1$ and $\msf{C}_R^2$ are given in \eqref{eqn:flexble_decoding} and are dependant on the choices of $\mathbf{B}_1$ and $\mathbf{B}_2$.
\end{theorem}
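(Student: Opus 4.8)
The plan is to establish this as an achievability statement by a random linear coding argument in which the single shared codebook forces a refined partition of the decoding error events. Fix $\mathbf{B}_1,\mathbf{B}_2\in\mc{B}$ and let $C$ be a random linear code over $\mbb{F}_q$; by the flexible-decoding restriction every source node uses the same $C$ in both of its levels, so $\mathbf{c}_1^1,\mathbf{c}_1^2,\mathbf{c}_2^1,\mathbf{c}_2^2$ are (independent) codewords of $C$, and write $R$ for its rate. Because $C$ is linear and each $\mathbf{B}_i$ is full rank over $\mbb{F}_q$, the pair $(\msf{C}_R^1,\msf{C}_R^2)$ in \eqref{eqn:flexble_decoding} is uniform on $C\times C$, the sum $\msf{S}\defeq\msf{C}_R^1\oplus\msf{C}_R^2$ again lies in $C$ and is uniform, and the two remaining $\mbb{F}_q$-linear degrees of freedom of $(\mathbf{c}_1^1,\mathbf{c}_1^2,\mathbf{c}_2^1,\mathbf{c}_2^2)$ can be taken independent of $(\msf{C}_R^1,\msf{C}_R^2)$ (extend $[\mathbf{B}_1\ \mathbf{B}_2]$ to an invertible $4\times4$ matrix). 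The relay decodes only $(\msf{C}_R^1,\msf{C}_R^2)$ and treats those two undecoded degrees of freedom as part of the channel, which is why the bound is one-sided. As in Section~\ref{sec:lattice_CF} and \cite{Engin12}, I would set up the effective channel by letting source $k$ transmit the power-scaled dithered image $\gamma\,\mc{M}(\mathbf{c}_k^1,\mathbf{c}_k^2)$, letting the relay apply the MMSE scaling, and reading the mutual-information terms in the statement off this effective channel with inputs uniform on the constellation $\mc{A}$.

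The core of the argument is a union bound over the four disjoint error events of a joint maximum-likelihood (equivalently, jointly-typical) decoder for $(\msf{C}_R^1,\msf{C}_R^2)$, each handled with the standard pairwise-error estimates for the random linear ensemble:
\begin{enumerate}
\item $\hat{\msf{C}}_R^1\neq\msf{C}_R^1$, $\hat{\msf{C}}_R^2=\msf{C}_R^2$: at most $2^{NR}$ competitors, and a change-of-measure bound needs $R<I(\msf{Y};\msf{C}_R^1\mid\msf{C}_R^2)$;
\item $\hat{\msf{C}}_R^1=\msf{C}_R^1$, $\hat{\msf{C}}_R^2\neq\msf{C}_R^2$: symmetrically, $R<I(\msf{Y};\msf{C}_R^2\mid\msf{C}_R^1)$;
\item both components wrong but $\hat{\msf{C}}_R^1\oplus\hat{\msf{C}}_R^2=\msf{S}$: the first component ranges over at most $2^{NR}$ values and then the second is pinned down by the correct sum, while the only ensemble correlation among the true and competing pairs is that they share the value of $\msf{S}$, which yields the exponent $I(\msf{Y};\msf{C}_R^1,\msf{C}_R^2\mid\msf{S})$ and the requirement $R<I(\msf{Y};\msf{C}_R^1,\msf{C}_R^2\mid\msf{C}_R^1\oplus\msf{C}_R^2)$;
\item both components wrong and the sum also wrong: both are free, giving at most $2^{2NR}$ competitors and $2R<I(\msf{Y};\msf{C}_R^1,\msf{C}_R^2)$, i.e.\ $R<\tfrac12 I(\msf{Y};\msf{C}_R^1,\msf{C}_R^2)$.
\end{enumerate}
Letting $N\to\infty$, the ensemble-averaged error probability vanishes whenever $R$ is strictly below the minimum of these four quantities, so a deterministic code attains any such $R$; optimizing the choice of $\mathbf{B}_1,\mathbf{B}_2$ over $\mc{B}$ gives the stated $\max$--$\min$ expression. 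The point-to-point chain-rule identity recorded around $R_{\text{AWGN}}$, together with the constructions in \cite{WachsmannFischerHuber99} and \cite{Brett11}, provide the template for turning these per-event rate constraints into an actual multilevel/multistage decoding implementation with $C^1=C^2$.

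The step I expect to be the main obstacle is event (3): one must show cleanly that a wrong pair with the \emph{correct} sum is confusable only with exponent $I(\msf{Y};\msf{C}_R^1,\msf{C}_R^2\mid\msf{C}_R^1\oplus\msf{C}_R^2)$ and not the larger $I(\msf{Y};\msf{C}_R^1,\msf{C}_R^2)$, which requires tracking the joint law of the true pair, the competing pair and $\msf{Y}$ inside the coset of $C$ that fixes $\msf{S}$ — essentially a linear-code / MAC-with-common-part computation. A related subtlety is that $\mc{M}$ is only a $\mbb{Z}$-module isomorphism, so the two sources' constellation symbols are coupled through $\mc{A}$ and the analysis cannot be reduced to an i.i.d.\ product input; moreover the residual interference from the two undecoded degrees of freedom has to be absorbed into the mutual-information terms, which is precisely what forces the ``$\le$''. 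Once these points are in place, the remaining estimates are routine random-coding bounds.
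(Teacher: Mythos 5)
Your proposal is correct and is essentially the paper's own route: the paper gives no argument of its own, stating only that the theorem is the $\mbb{F}_q$ version of Theorem 1 in \cite{Brett11} and omitting the proof, and your random-linear-coding analysis with the four error events (single component wrong, both wrong with correct sum giving the $I(\msf{Y};\msf{C}_R^1,\msf{C}_R^2|\msf{C}_R^1\oplus\msf{C}_R^2)$ constraint, and both wrong with wrong sum giving the $\tfrac{1}{2}I(\msf{Y};\msf{C}_R^1,\msf{C}_R^2)$ constraint) is precisely the argument underlying that cited theorem, adapted from $\mbb{F}_2$ to $\mbb{F}_q$. The subtleties you flag (the coset analysis for the sum-correct event and the limited pairwise independence of a common linear codebook) are exactly the points the cited proof handles, so your reconstruction matches in both decomposition and conclusions.
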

\begin{IEEEproof}
    This theorem is a $\mbb{F}_q$ version of Theorem 1 in \cite{Brett11} and the proof is hence omitted.
\end{IEEEproof}

Note that multiplication over $\mbb{F}_{q^2}$ can be represented as multiplication of a matrix and a vector over $\mbb{F}_q$. Thus, setting $C^1=C^2$ enables the proposed scheme with flexible decoding to recover all the linear combinations of codewords over $\mbb{F}_{q^2}$. For example, let $b\in \mbb{F}_{25}$ whose multiplication is defined by the irreducible polynomial $x^2+2x+4$ as in Example~\ref{exp:ring_Eis} and $b=b^1x+b^2$ with $b^1, b^2\in \mbb{F}_5$. Also, let $\mathbf{c} = \mathbf{c}^1 x + \mathbf{c}^2$ where $\mathbf{c}^1\in C^1$ and $\mathbf{c}^2\in C^2$ over $\mbb{F}_5$. Then, one has that
\begin{align}
    b\cdot \mathbf{c} &= b^1 \mathbf{c}^1 x^2 + (b^2 \mathbf{c}^1 \oplus b^1 \mathbf{c}^2) x + b^2 \mathbf{c}^2 \nonumber \\
    &= (b^2 \mathbf{c}^1 \oplus 3b^1 \mathbf{c}^1 \oplus b^1 \mathbf{c}^2) x + (b^2 \mathbf{c}^2 \oplus b^1 \mathbf{c}^1) \nonumber \\
    &=\left[
        \begin{array}{cc}
          b^2 \oplus 3b^1 & b^1 \\
          b^1 & b^2 \\
        \end{array}
      \right]
      \left[
        \begin{array}{c}
          \mathbf{c}^1 \\
          \mathbf{c}^2 \\
        \end{array}
      \right].
\end{align}
Therefore, every linear combination of codewords over $\mbb{F}_{25}$ of the form $b_1\mathbf{c}_1 \oplus b_2\mathbf{c}_2$ can be represented as a linear combination of codewords over $\mbb{F}_5$ by choosing
\begin{equation}\label{eqn:B1}
    \mathbf{B}_1 = \left[
        \begin{array}{cc}
          b_1^2 \oplus 3b_1^1 & b_1^1 \\
          b_1^1 & b_1^2 \\
        \end{array}
      \right],
\end{equation}
and
\begin{equation}\label{eqn:B2}
    \mathbf{B}_2 = \left[
        \begin{array}{cc}
          b_2^2 \oplus 3b_2^1 & b_2^1 \\
          b_2^1 & b_2^2 \\
        \end{array}
      \right].
\end{equation}

However, this still does not mean that the proposed multilevel coding scheme together with flexible decoding would achieve the same rate with that provided by the code over $\mbb{F}_{q^2}$. The reasons for this are twofold. One is that setting the linear codes to be the same imposes an extra constraint on the rate as shown in the last term of Theorem~\ref{thm:flexible}. Second, the symmetric capacity may not touch the boundary of the sum rate limit for the underlying MAC channel. It is interesting to see when setting the codes to be the same would not result in the above penalty. Currently we have been able to identify some special cases for which one can ignore the second penalty. For example, we have the following theorem which includes the symmetric bidirectional relaying problem studied in \cite{wilson10} and the simulation setup in Fig.~\ref{fig:MLC_eq} as special cases.
\begin{theorem}\label{thm:sym_cap}
    Let $h_1=|h_1|e^{j\theta}$ and $h_2=|h_2|e^{j\theta}$, i.e., they have a same phase $\theta$. Also let the functions for the two levels to be the same, i.e., $b_k^1=b_k^2=b_k\in\mbb{F}_q$. Then the symmetric capacity always lies on the boundary of the sum rate limit of the underlying MAC channel.
\end{theorem}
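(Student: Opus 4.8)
The plan is to reduce the statement to a single symmetry identity between the two ``individual'' mutual informations of the induced two-user multiple-access channel, namely $I(\msf{Y};\msf{C}_R^1\mid\msf{C}_R^2)=I(\msf{Y};\msf{C}_R^2\mid\msf{C}_R^1)$, and then to observe that this identity alone forces the symmetric rate point onto the sum-rate face. For the easy half, write $R_{\text{sym}}=\min\{I(\msf{Y};\msf{C}_R^1\mid\msf{C}_R^2),\ I(\msf{Y};\msf{C}_R^2\mid\msf{C}_R^1),\ \tfrac12 I(\msf{Y};\msf{C}_R^1,\msf{C}_R^2)\}$. By the chain rule $I(\msf{Y};\msf{C}_R^1,\msf{C}_R^2)=I(\msf{Y};\msf{C}_R^1\mid\msf{C}_R^2)+I(\msf{Y};\msf{C}_R^2)$, and $I(\msf{Y};\msf{C}_R^2)\le I(\msf{Y};\msf{C}_R^2\mid\msf{C}_R^1)$ because $\msf{C}_R^1$ and $\msf{C}_R^2$ are built from independent sub-codewords of the two sources, so $I(\msf{C}_R^1;\msf{C}_R^2)=0\le I(\msf{C}_R^1;\msf{C}_R^2\mid\msf{Y})$. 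Hence $I(\msf{Y};\msf{C}_R^1,\msf{C}_R^2)\le I(\msf{Y};\msf{C}_R^1\mid\msf{C}_R^2)+I(\msf{Y};\msf{C}_R^2\mid\msf{C}_R^1)$, and once the symmetry identity gives equality of the two conditional terms this reads $\tfrac12 I(\msf{Y};\msf{C}_R^1,\msf{C}_R^2)\le I(\msf{Y};\msf{C}_R^l\mid\msf{C}_R^{3-l})$ for $l=1,2$; therefore the minimum defining $R_{\text{sym}}$ is attained by the sum-rate term, i.e. $2R_{\text{sym}}=I(\msf{Y};\msf{C}_R^1,\msf{C}_R^2)$, which is exactly the statement that the symmetric point lies on the boundary of the sum-rate limit.

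The core of the argument is thus the symmetry identity, and this is where both hypotheses enter. First I would rotate the received vector by $e^{-j\theta}$, which alters neither the noise law nor any mutual information; since $h_1,h_2$ share the phase $\theta$, the rotated channel becomes $\msf{Y}=|h_1|\msf{X}_1+|h_2|\msf{X}_2+\msf{Z}$ with \emph{real} gains. Next I would exhibit an isometry $T$ of $\mbb{C}$, acting componentwise on $\mbb{C}^N$ --- complex conjugation post-composed with multiplication by a suitable unit of $\Zw$, e.g. $T\colon z\mapsto -\omega\bar z$ for the labeling $\mc{M}(v^1,v^2)=v^1+v^2\omega$, since $-\omega\,\overline{v^1+v^2\omega}=-(v^2+v^1\omega)$ --- with three properties: (i) $T$ maps $q(\Zw)^N$ to itself and hence permutes the minimum-energy coset representatives, realizing on the labels the level swap $(v^1,v^2)\mapsto(-v^2,-v^1)$; (ii) $T$ fixes the law of $\msf{Z}$, being a rotation composed with conjugation of a circularly symmetric Gaussian; (iii) $T$ is conjugate-linear, so $T\msf{Y}=|h_1|T\msf{X}_1+|h_2|T\msf{X}_2+T\msf{Z}$ --- here the real gains are essential, since $T(\alpha z)=\alpha\,T(z)$ only for real $\alpha$, which is precisely where the equal-phase hypothesis is used. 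Because each source uses the same code in its two levels and the two functions are equal ($b_k^1=b_k^2=b_k$), applying the label swap turns the level-$1$ function $b_1\msf{C}_1^1\oplus b_2\msf{C}_2^1$ into $b_1(-\msf{C}_1^2)\oplus b_2(-\msf{C}_2^2)=-\msf{C}_R^2$ and the level-$2$ function into $-\msf{C}_R^1$. Taking the sub-codeword symbols i.i.d.\ uniform, as in the rate expression of Theorem~\ref{thm:flexible}, $T$ then yields the distributional identity $(\msf{Y},\msf{C}_R^1,\msf{C}_R^2)\stackrel{d}{=}(T\msf{Y},-\msf{C}_R^2,-\msf{C}_R^1)$; invariance of mutual information under the bijection $T$ on $\msf{Y}$ and under negation on $\mbb{F}_q$ then gives $I(\msf{Y};\msf{C}_R^1\mid\msf{C}_R^2)=I(\msf{Y};\msf{C}_R^2\mid\msf{C}_R^1)$, which closes the argument.

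The step I expect to be the main obstacle is constructing the isometry $T$ \emph{uniformly over the $q^2$-point family of labelings} and verifying property (i) rigorously --- in particular, that $T$ restricted to the hexagonal Voronoi cell of $q\Zw$ permutes the chosen minimum-energy representatives (boundary ties must be broken consistently with $T$), and that for labelings such as $\mc{M}(v^1,v^2)=v^1+v^2\phi$ one can still pick a unit so that conjugation composed with it implements a \emph{bijective} level swap modulo $q\Zw$ (using that conjugation interchanges the ideals $\phi\Zw$ and $\bar{\phi}\Zw$, hence the two CRT factors). Everything else is bookkeeping with the chain rule and standard invariances of mutual information. It is worth recording that both hypotheses are tight: dropping the equal-phase condition breaks property (iii) of $T$, and using different coefficients at the two levels breaks the identification of the swapped level-$1$ function with $-\msf{C}_R^2$, so in either case the two individual-rate terms need no longer coincide and the symmetric point can move off the sum-rate face.
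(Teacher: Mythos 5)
Your proposal is correct and follows essentially the same route as the paper's proof: the paper also reduces to real gains (WLOG $\theta=0$) and establishes the level-swap symmetry of the two-level MAC via the map $\mathbf{y}\mapsto\omega\bar{\mathbf{y}}$, which swaps the $\{1,\omega\}$ coordinates, preserves the circularly symmetric noise, and commutes with the real channel gains, exactly as your conjugate-linear isometry $T\colon z\mapsto-\omega\bar z$ does (up to an immaterial negation). Your explicit reduction of ``symmetric capacity on the sum-rate face'' to the single identity $I(\msf{Y};\msf{C}_R^1\mid\msf{C}_R^2)=I(\msf{Y};\msf{C}_R^2\mid\msf{C}_R^1)$ plus independence is just a slightly more detailed version of the paper's concluding step that the MAC region is symmetric, and your tie-breaking caveat for minimum-energy representatives is a legitimate but minor point the paper avoids by working with coordinatewise $\bmod\, q\mbb{Z}$ representatives.
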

\begin{proof}
    See Appendix~\ref{apx:sym_rate}.
\end{proof}
It is worth mentioning that despite the above extra penalties, for many cases the proposed multilevel coding scheme with flexible decoding may in fact result in a higher achievable computation rate than that provided by directly coding over $\mbb{F}_{q^2}$. This can be seen from the fact that, in general, there are many full-rank matrices $\mathbf{B}_1$ and $\mathbf{B}_2$ which are not in the form of \eqref{eqn:B1} and \eqref{eqn:B2}, respectively. i.e., there exist many decoding functions for the proposed scheme with flexible decoding which can not be provided by the scheme directly coding over $\mbb{F}_{q^2}$. This makes the proposed scheme to be robust to phase shift. Similar results can be found in \cite{Brett11}.

\section{Simulation Results}\label{sec:simulation}
In this section, we use the Monte-Carlo method to evaluate the achievable computation rates. We first compare the computation rates achieved by using different mappings for the proposed constellations. After this, we provide comparisons on the performance of the proposed constellations with the proposed multilevel coding/multistage decoding and that with direct coding over $\mbb{F}_{q^2}$. In order to show that the proposed scheme indeed can approach those theoretic limits with reasonable complexity, we also simulate the proposed scheme with an ensemble of linear codes with iterative decoding. Recently, it has been shown in \cite{Kudekar11} \cite{Kudekar12} \cite{Pfister13} that the ensemble of spatially-coupled LDPC codes (or LDPC convolutional codes) \cite{Sridharan04} \cite{Lentmaier05} \cite{Lentmaier10} universally achieve the capacity for the class of binary memoryless symmetric channels under belief propagation (BP) decoding \cite{urbanke_book}. Motivated by this success, we choose the linear code at the $l$th level to be an non-binary spatially-coupled LDPC code over $\mbb{F}_{q_l}$ with BP decoding at the receiver. The ensemble we will use is the $(d_l,d_r,L')$ ensemble introduced in \cite[Section II-A]{Kudekar11} (here we use $L'$ to denote the coupling length instead of $L$ to avoid confusion with the number of levels $L$). This can be regarded as the extension of the LDA lattices in \cite{diPietro12} \cite{diPietro13} or the SCLDA lattices in \cite{Engin13ITW} to the proposed product construction.

\subsection{Comparison of Different Mappings}
For point-to-point communication, one of the most popular and most frequently used labeling strategy for multilevel coding is the one introduced by Ungerboeck \cite{Ungerboeck82}. The design guideline of this labeling strategy is to maximize the minimum intra-subset Euclidean distance. However, as mentioned in Section~\ref{sec:const2_sp}, this labeling strategy does not guarantee the homomorphism property that has been shown crucial for compute-and-forward problem. In what follows, we present some comparisons of the proposed mappings. We will show that indeed, as suggested in \cite{Feng10} and \cite{Engin12}, homomorphisms are crucial for compute-and-forward.

We consider the constellation with the Eisenstein prime $\phi=3+2\omega$. Note that $|\phi|^2=7$ is congruent to $1\mod 3$; therefore, this belongs to the special subclass of the second proposed family of constellations with $|\mc{A}|=49$. As mentioned before, for constellations in this subclass, one can freely choose any two linear independent vectors to generate a $\mbb{Z}$-module homomorphism. In the following, we briefly compare the achievable computation rates for two such homomorphisms, namely the one in \eqref{eqn:module_homo} (which is referred to as mapping 1) and the one uses the idea of Ungerboeck in \eqref{eqn:module_homo_ungerboeck}, also in Fig.~\ref{fig:ungerboeck2} (which is referred to as mapping 2). In Fig.~\ref{fig:MLC_49pt_homo}, we simulate the achievable computation rates for the case when $h_1 = h_2 = 1$ in order to avoid unnecessary distraction from the self-interference \cite{nazer2011CF}. One observes that the sum rates provided by the two mappings are the same. In the following simulations, since we will be focusing on the sum rates, we will only consider the proposed mapping 1 as the two mappings provide the same sum rates. On the other hand, the rates achieved at each level for the two mappings are quite different. This is because the minimum distances and the numbers of nearest neighbors at each level of the two mappings are different. Therefore, although the sum rates are the same, in practice, one can choose the mapping that is more suitable to the problem at hand. The cut-set upper bound $\log(1+\text{SNR})$ and the computation rates achieved by infinitely-dimensional lattice $\log(.5+\text{SNR})$ \cite{wilson10} are also plotted for comparison. One observes that there is a gap between the theoretic bounds and the proposed scheme in the moderate SNR regime. This is the shaping loss suffered by the separation-based compute-and-forward.

\begin{figure}
    \centering
    \includegraphics[width=5in]{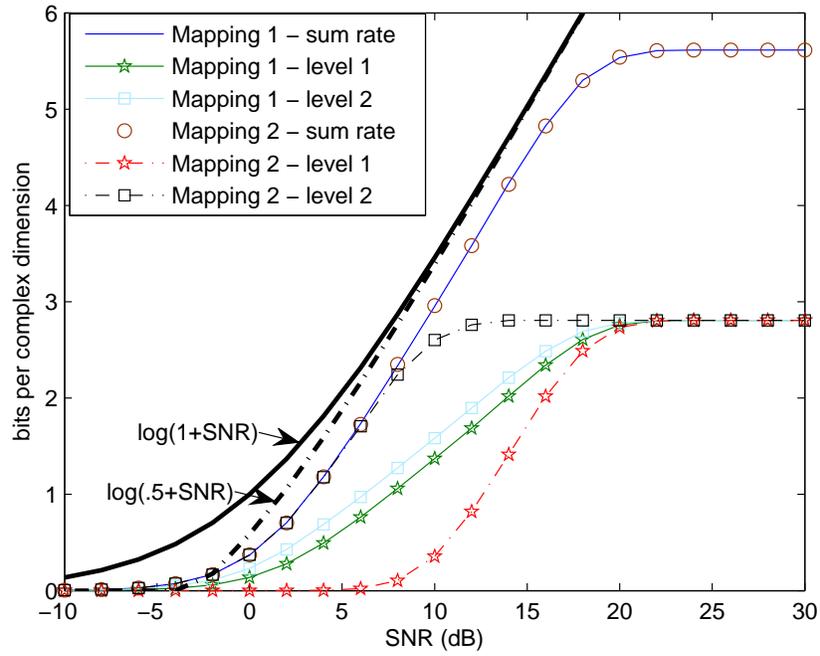}
    \caption{Achievable rates of the proposed scheme with the mapping in \eqref{eqn:module_homo} and that with the mapping in \eqref{eqn:module_homo_ungerboeck}.}
    \label{fig:MLC_49pt_homo}
\end{figure}

We then compare the performance of the proposed labeling and the labeling obtained from applying Ungerboeck's idea blindly, which we refer to as naive Ungerboeck labeling. Again, we consider the constellation with the Eisenstein prime $\phi=3+2\omega$. For the proposed multilevel coding scheme, we use the homomorphism in \eqref{eqn:module_homo} (mapping 1) as a labeling strategy. Moreover, for this family of constellations, a mapping obtained from the idea of Ungerboeck is given in Fig.~\ref{fig:ungerboeck} where only the 49 points inside the big hexagon are used. The channel coefficients are again set to be $h_1 = h_2 = 1$. One observes in Fig.~\ref{fig:MLC_49pt_labeling} that the proposed labeling substantially outperforms the naive Ungerboeck labeling in the high SNR regime. This is because Ungerboeck's labeling does not guarantee the homomorphism which has been shown to be crucial for compute-and-forward. For example, $1$ and $1+\omega$ in $\mbb{C}$ correspond to $[5,0]$ and $[1,0]$ in $\mbb{F}_7$, respectively, and $1+(1+\omega) = 2 +\omega$ in $\mbb{C}$ corresponds to $[6,1]$ in $\mbb{F}_7^2$ which is not equal to $[3,0]+[1,0]=[4,0]$ in $\mbb{F}_7^2$. The lack of a homomorphism renders the Ungerboeck's labeling ineffective for compute-and-forward in terms of the achievable computation rate. This coincides with and reinforces the main observation in \cite{Feng10} and \cite{Engin12} that homomorphisms are crucial for compute-and-forward.

\begin{figure}
    \centering
    \includegraphics[width=5in]{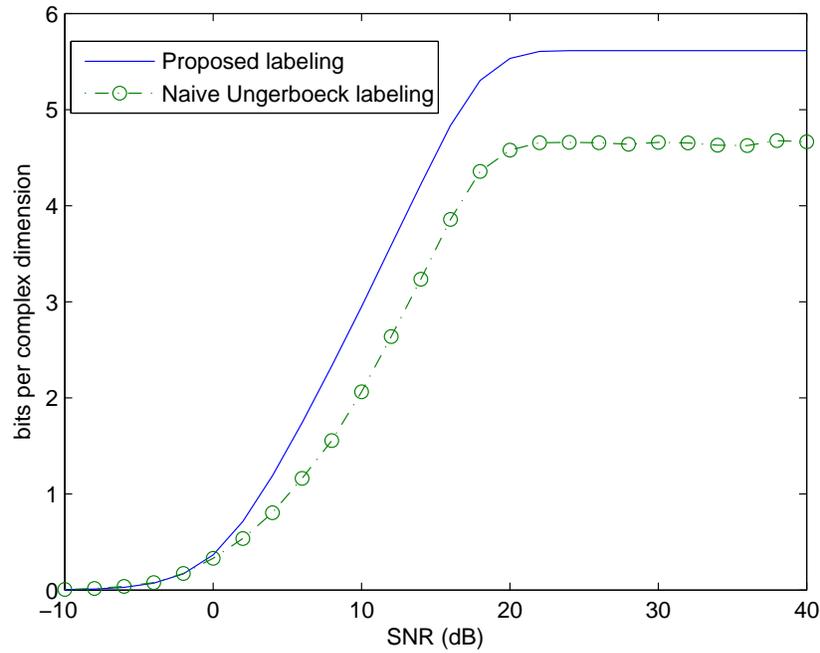}
    \caption{Achievable rates of the proposed constellation with the proposed labeling and that with the naive Ungerboeck labeling.}
    \label{fig:MLC_49pt_labeling}
\end{figure}

\subsection{MLC vs. Coding over $\mbb{F}_{q^2}$}
Here, we compare the achievable computation rates of the proposed constellations with the proposed MLC and the one directly coding over $\mbb{F}_{q^2}$. The Eisenstein prime that we use in the following simulations is $\phi = 5$, i.e., $|\mc{A}| = 25$. Since $5$ is congruent to $2\mod 3$, this belongs to the first proposed family of constellations. Hence, one can choose either to directly carry out the separation-based scheme with the ring homomorphism given in Example~\ref{exp:ring_Eis} or to implement the proposed multilevel coding and multistage decoding scheme. In what follows, we do both and compare the achievable rates of these approaches given in \eqref{eqn:R_GF} and \eqref{eqn:R_MLC}.

In Fig.~\ref{fig:MLC_eq}, we show the achievable rates of the proposed constellation with multilevel coding where each level employs a linear code over $\mbb{F}_5$ and that with a linear code over $\mbb{F}_{25}$. The ring homomorphism adopted is as shown in Example~\ref{exp:ring_Eis}. For the multilevel coding scheme, the rate achieved by each level is also shown. The transmitted SNR is ranging from -10 dB to 40 dB. The channel coefficients are set to be $h_1 = h_2 = 1$ in order to simulate the scenario when there is no self-interference. In this case, one can see from the figure that using the proposed constellations with multilevel coding incur no rate loss compared to the scheme directly working over $\mbb{F}_{25}$. It is because for this case, both schemes would choose to decode the received signal to the sum of the messages (over the corresponding fields) and element-wise addition in the base field is equivalent to addition in the extension field.

\begin{figure}
    \centering
    \includegraphics[width=5in]{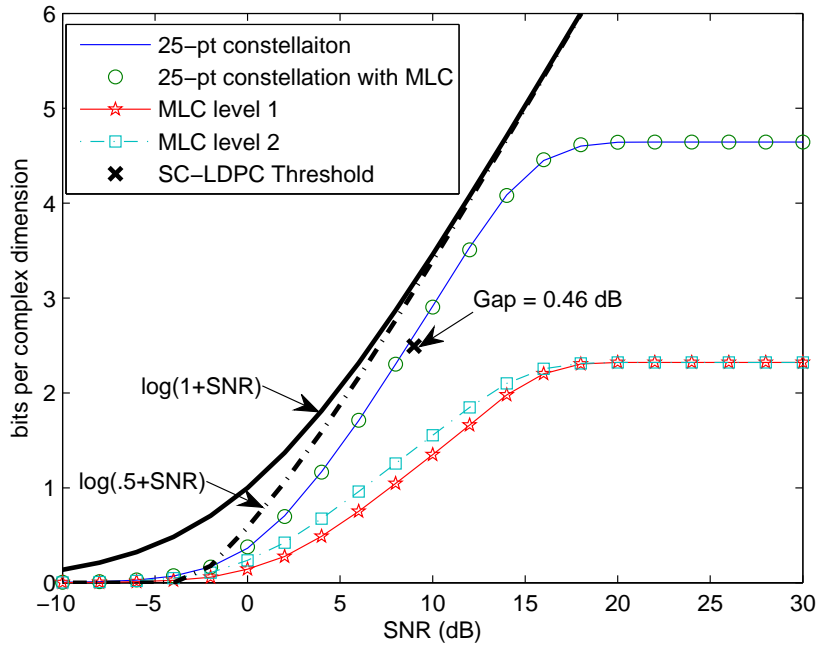}
    \caption{Achievable rates of the proposed construction with and without multilevel coding. For the one with multilevel coding, the achievable rate achieved by each level is also plotted. The channel coefficients are set to be $h_1 = h_2 = 1$.}
    \label{fig:MLC_eq}
\end{figure}

We then compare the average achievable rates of the proposed constellation in Fig.~\ref{fig:ring_homo_Eis} with and without multilevel coding. For the proposed multilevel coding scheme, we also plot the rates achieved at each level. We average over 100 pairs of channel coefficients drawn from $\mc{CN}(0,1)$ (i.e., its norm has a Rayleigh distribution). The results are shown in Fig.~\ref{fig:MLC_avg} where one can see that the scheme directly working over $\mbb{F}_{25}$ provides a slightly higher rate than that provided by the multilevel coding which works over $\mbb{F}_5$. However, the gap becomes smaller and smaller as the SNR increases. One also observes that after roughly $26$ dB, the gap becomes negligible and the proposed multilevel coding scheme over $\mbb{F}_5$ outperforms the scheme working over $\mbb{F}_{19}$. This shows that using the proposed scheme over $\mbb{F}_5$, one can perform very close to the scheme over $\mbb{F}_{25}$ and outperform the scheme over $\mbb{F}_7$ and $\mbb{F}_{19}$ in the high SNR regime with a substantially lower computational complexity.

In this figure, we also show the result of using the spatially-coupled LDPC codes with BP decoding. For the first level, the designed rate is set to be $1/2\log(5)$ so that one can directly use the $(3,6,64)$ spatially-coupled LDPC ensemble \cite{Kudekar11}. The number of variable nodes at each position, i.e., the protograph lifting factor, is chosen to be 10000; hence, the overall code length is $1.29\cdot 10^6$. For the second level, the same ensemble is used but is punctuated such that the rate becomes the one corresponding to the theoretic limit. The threshold is determined by the maximum noise variance for which no codeword errors were observed in the simulation of 10 consecutive codewords. In this figure, one can see that including the rate loss from the termination, we have been able to observe a threshold when SNR is equal to 9 dB, which is 0.46 dB away from the theoretic limit.

\begin{figure}
    \centering
    \includegraphics[width=5in]{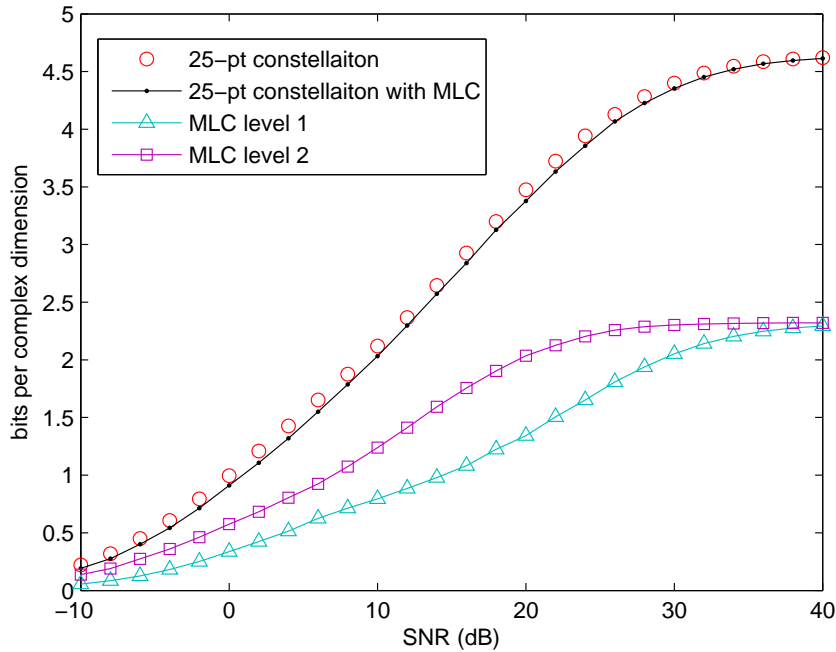}
    \caption{Average information rates of the proposed constellation with multilevel coding over $\mbb{F}_5$ and that with a linear code over $\mbb{F}_{25}$. Averaging 100 pairs of channel coefficients drawn from $\mc{CN}(0,1)$.}
    \label{fig:MLC_avg}
\end{figure}

\begin{remark}
    Although, in Section~\ref{sec:flex_dec}, the proposed scheme with flexible decoding has been suggested to potentially achieve higher rates and to recover the rates achieved by directly coding over $\mbb{F}_{q^2}$, we do not use it in the preceding simulations. It is mainly because the number of functions one can decode to grows very rapidly with $q$. In fact, the number of choices of each $\mathbf{B}_i$ is $(q^2-1)(q^2-q)$ and it is extremely time-consuming to run the Monte-Carlo simulation for all the possible choices of $\mathbf{B}_i$ and find the one that maximizes the achievable rates. Thus, efficient algorithms are called for. So far there have been some work in the literature on efficient algorithms of finding approximately optimal $\textit{linear}$ functions \cite{Wei13}. It is interesting to design efficient algorithms for the proposed scheme with flexible decoding in which the functions may \textit{not} be linear. We leave this problem to future work. Nevertheless, the results in this section suggest that even without the flexible decoding, the proposed multilevel coding scheme still performs very close to the one directly coding over $\mbb{F}_{q^2}$.
\end{remark}

\subsection{Achievable Rates for Constellations with Different Size}
In Fig.~\ref{fig:CF_7_13_19pt}, we plot the achievable computation rates for the separation-based compute-and-forward with constellations with $7$ elements, that with $13$ elements, and that with $19$ elements from \cite{Engin12}. Also, the achievable rates for the proposed multilevel coding/multistage decoding with proposed constellations with $21$ elements, that with $25$ elements, and that with $49$ elements are plotted. The channel coefficients are set to be $h_1=h_2=1$ in this figure. One can see that the proposed scheme together with the proposed constellations provide a way to extend the separation-based compute-and-forward to the high rate regime with a relatively lower complexity. Specifically, the decoding complexity for the constellations with $21$, $25$, and $49$ elements is dominated by the decoding complexity for codes over $\mbb{F}_7$, $\mbb{F}_5$, and $\mbb{F}_7$, respectively. This is at least as low (in terms of order) as using separation-based compute-and-forward for constellation with $7$ elements. Similar results can also be observed in Fig.~\ref{fig:MLC_21_25_avg} for the average (over 100 realizations) achievable rates when the channel coefficients are drawn from $\mc{CN}(0,1)$.

\begin{figure}
    \centering
    \includegraphics[width=5in]{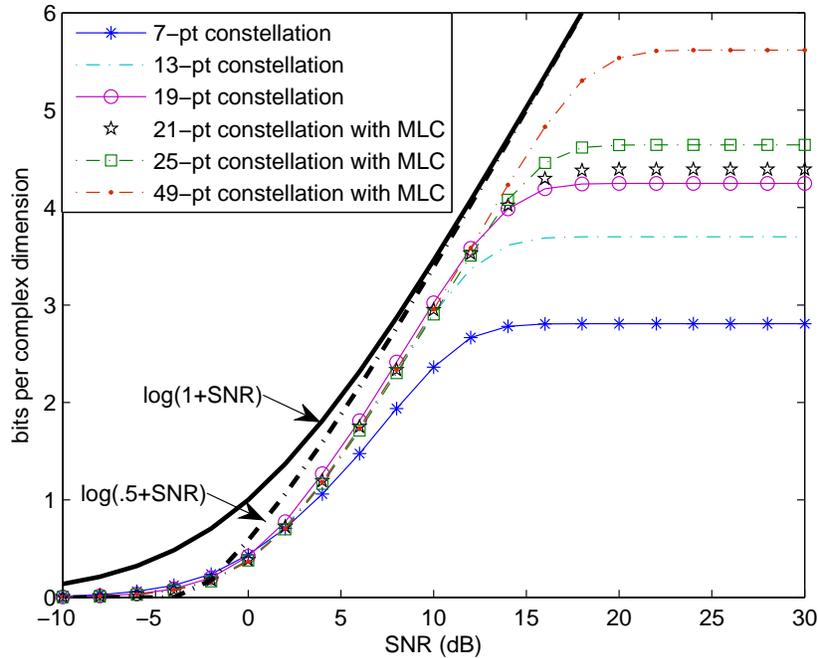}
    \caption{Achievable rates for constellations with different size.}
    \label{fig:CF_7_13_19pt}
\end{figure}

\begin{figure}
    \centering
    \includegraphics[width=5in]{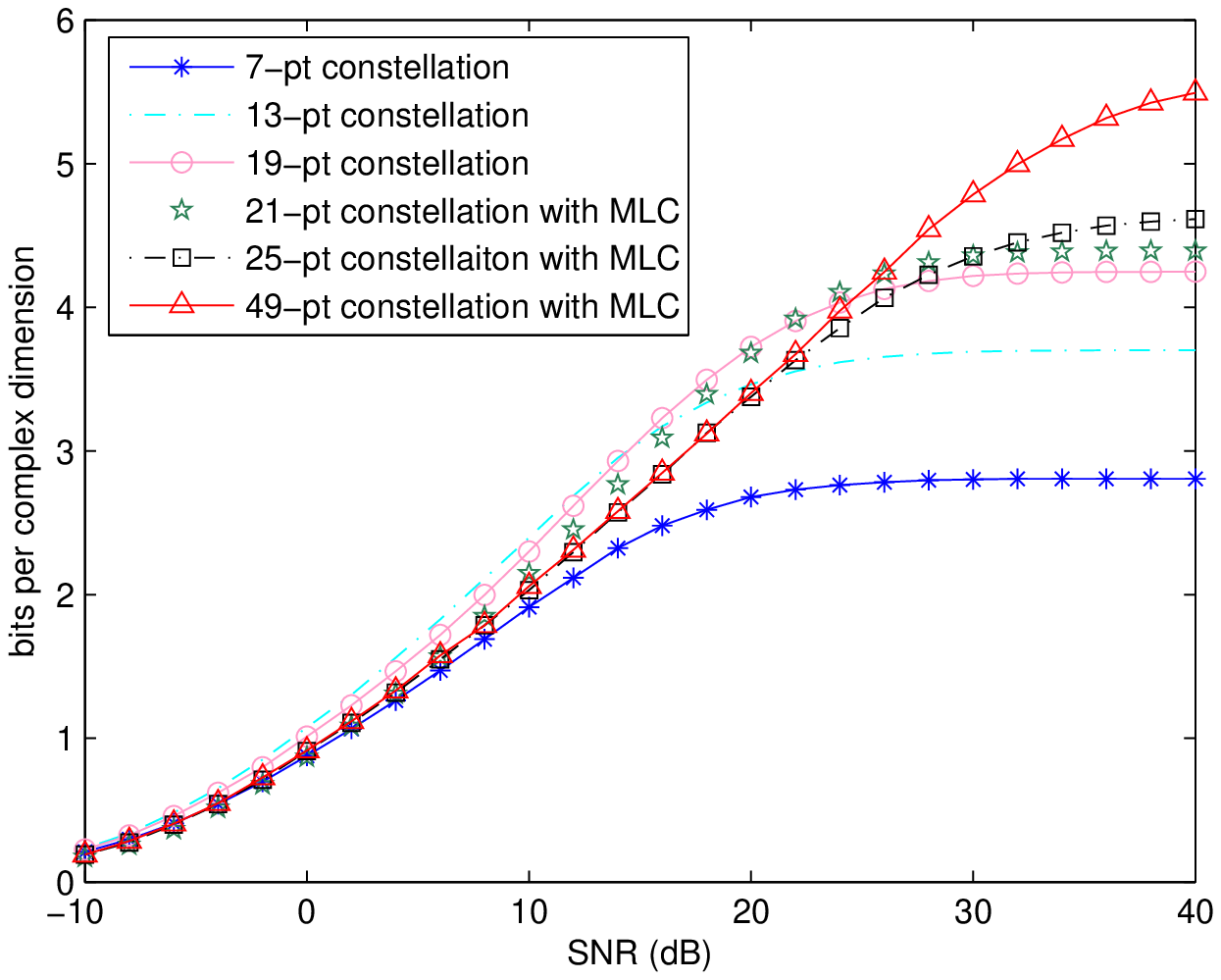}
    \caption{Average achievable rates for constellations with different size.}
    \label{fig:MLC_21_25_avg}
\end{figure}

\subsection{Comparison of Different Decoders}
In Fig.~\ref{fig:MLC_49pt_dec}, we compare the achievable computation rates for the proposed scheme with a multistage decoder, that with the suboptimal decoder, and that with the parallel decoder discussed in Section~\ref{sec:encode_decode}. The constellation adopted is the 49 elements constellation generated by $\phi_1=3+2\omega$ and its complex conjugate and the channel coefficients are set to be $h_1=h_2=1$. One observes that, as expected, the multistage decoder performs the best among these decoders as it is also the most complex one. On the other hand, although being suboptimal, the suboptimal decoder can provide rates close to that provided by the multistage decoder. For the parallel decoder, the achievable rates are much worse than that for the other two in the low SNR regime but it is still interesting in the medium and high SNR regime due to its low complexity and low latency.

\begin{figure}
    \centering
    \includegraphics[width=5in]{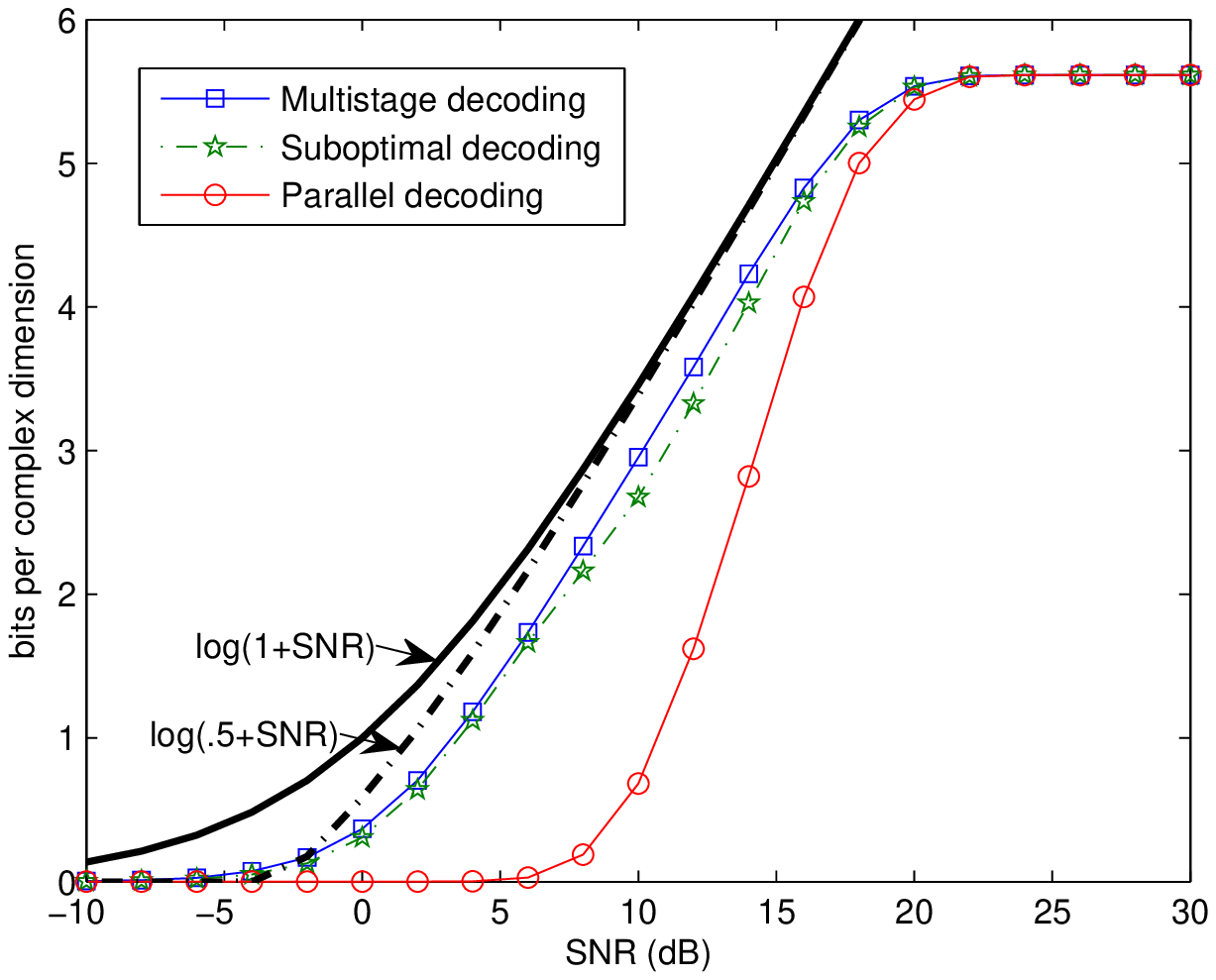}
    \caption{Average achievable rates for constellations with different size.}
    \label{fig:MLC_49pt_dec}
\end{figure}

\subsection{Comparison of Complexity}
We present a numerical result for providing a rough comparison of the decoding complexity between the proposed scheme and the one directly coding over the prime field for separation-based compute-and-forward. This can also be deemed as the comparison of the decoding complexity between the proposed product construction lattices and Construction A lattices both with the underlying codes being LDPC codes. Recall that for Construction A lattices the decoding complexity is dominated by $|\Lambda^*|$ while for the proposed product construction lattices, it only depends on the greatest divisor of $|\Lambda^*|$. For coding over $\mbb{F}_q$, we assume that a $q$-ary LDPC code is implemented and the decoding algorithm in \cite{DaveyMackay} with complexity $O(q\log(q))$ is adopted. In Fig.~\ref{fig:complexity}, we provide a comparison of the decoding complexity for Construction A lattices and the proposed product construction lattices. Note that for the proposed construction of lattices, we exclude those lattices generated by prime numbers since for those the complexity is the same as using Construction A. One observes that the proposed product construction significantly reduces the decoding complexity. Moreover, one can expect the gain to be larger and larger as the constellation size increases.

\begin{figure}
    \centering
    \includegraphics[width=5in]{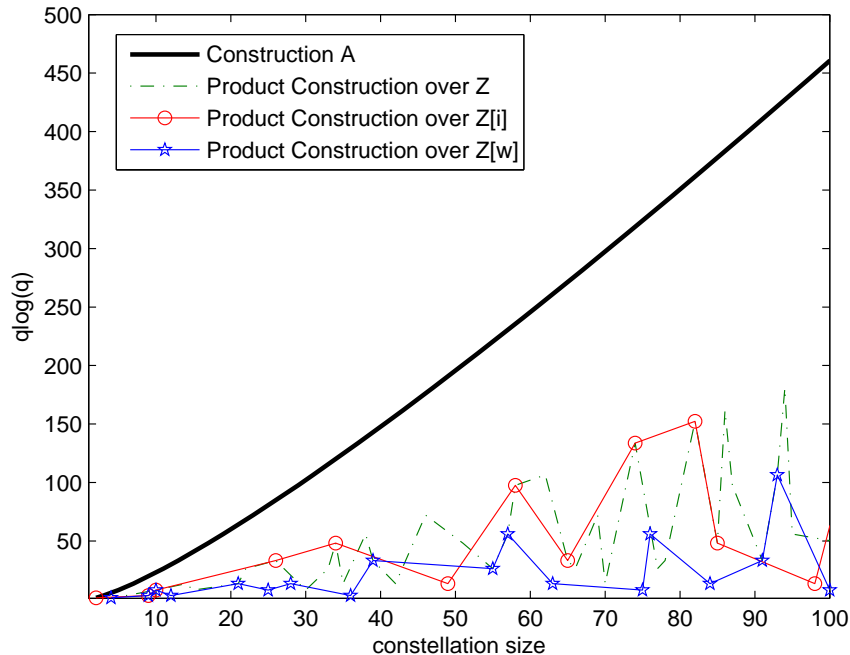}
    \caption{A comparison of the decoding complexity.}
    \label{fig:complexity}
\end{figure}

\section{Conclusions}\label{sec:conclude}
A novel construction of lattices called the product construction has been proposed. The existence of a sequence of lattices generated by the product construction that are good for MSE quantization and Poltyrev-good under multistage decoding has been shown. This has allowed us to perform lattice-based multistage compute-and-forward for achieving the same information-theoretic results in \cite{nazer2011CF} using multistage decoding.

We have used the proposed lattices to generate signal constellations that are suitable for separation-based compute-and-forward. Using the idea of multilevel coding and multistage decoding, we have proposed a low complexity scheme called separation-based compute-and-forward that would substantially reduce the decoding complexity in the high rate regime. We have also showed that the use of multilevel coding and multistage decoding incurs no essential rate loss in the regions that one would operate on. Moreover, for some special cases, the proposed scheme can be further extended by incorporating the idea of flexible decoding in \cite{Brett11} for potentially increasing the achievable computation rate.

\appendices
\section{Proof of Theorem~\ref{thm:lattice}}\label{apx:lattice}
In this appendix, we prove Theorem~\ref{thm:lattice}. Particularly, we focus on $\mbb{Z}$ the ring of integers and the proof for $\Zw$ can be obtained in a similar fashion. In this appendix, we will slightly abuse the notation and write $q \defeq \Pi_{l=1}^L p_l$.

\subsection{$\Lambda$ is a Lattice}
We first prove that $\Lambda$ is a lattice. i.e., $\Lambda$ is a discrete subgroup of $\mbb{R}^N$ which is closed under reflection and real addition. Since $\mc{M}(0,\ldots,0)=0$ and $\mathbf{0}\in C^l$, we have $\mathbf{0}\in\Lambda$. Let $\mathbf{c}^l_1,\mathbf{c}^l_2 \in C^l$ and let
\begin{align}
    \mathbf{\lambda}_1 &= \mathcal{M}(\mathbf{c}^1_1,\ldots,\mathbf{c}^L_1) + q \mathbf{\zeta}_1, \\
    \mathbf{\lambda}_2 &= \mathcal{M}(\mathbf{c}^1_2,\ldots,\mathbf{c}^L_2) + q \mathbf{\zeta}_2,
\end{align}
where $\mathbf{\zeta}_1,\mathbf{\zeta}_2\in \mbb{Z}^N$. It is clear that $\mathbf{\lambda}_1$ and $\mathbf{\lambda}_2$ are elements in $\Lambda$. One has that
\begin{align}
    \mathbf{\lambda}_1 + \mathbf{\lambda}_2 &= \mathcal{M}(\mathbf{c}^1_1,\ldots,\mathbf{c}^L_1)+ \mathcal{M}(\mathbf{c}^1_2,\ldots,\mathbf{c}^L_2) + q \mathbf{\zeta}_3 \nonumber \\
    &\overset{(a)}{=} \mathcal{M}(\mathbf{c}^1_1\oplus\mathbf{c}^1_2,\ldots,\mathbf{c}^L_1\oplus\mathbf{c}^L_2)+ q \mathbf{\zeta}'_3 \nonumber \\
    &\overset{(b)}{=} \mathcal{M}(\mathbf{c}^1_3,\ldots,\mathbf{c}^L_3) + q \mathbf{\zeta}'_3,
\end{align}
where $\mathbf{\zeta}_3, \mathbf{\zeta}'_3 \in \mbb{Z}^N$ and $\mathbf{c}^l_3\triangleq \mathbf{c}^l_1\oplus\mathbf{c}^l_2 \in C^l$. Note that (a) follows from the fact that $\mathcal{M}$ is an isomorphism, and (b) is due to the fact that $C^l$ for $l\in\{1,\ldots,L\}$ are linear codes. Now, one can see that $\mathcal{M}(\mathbf{c}^1_3,\ldots,\mathbf{c}^L_3) + q \mathbf{\zeta}_3$ is indeed an element in $\Lambda$. Moreover, choosing $\mathbf{c}_2^l$ such that $\mathbf{c}_3^l=\mathbf{0}$ for all $l\in\{1,\ldots,L\}$ and choosing $\zeta_2$ such that $\zeta'_3=\mathbf{0}$ make $\lambda_2$ the additive inverse of $\lambda_1$. Therefore, $\Lambda$ is a lattice.

\subsection{Existence of Poltyrev-Good Lattices}
We begin by noting that any lattice $\Lambda$ generated by Construction A can be written as (up to scaling) $\Lambda = \Lambda^* + p\mbb{Z}^N$, where $\Lambda^*$ is a coded level resulting from mapping a $(N,k)$ linear code to $\mbb{Z}_p^N$ via a ring isomorphism and $p\mbb{Z}^n\defeq \Lambda'$ can be viewed as an uncoded level. As shown in \cite{forney2000}, one can first reduce the received signal by performing $\mod\Lambda'$. This will make the equivalent channel a $\Lambda/\Lambda'$ channel. When the underlying linear code is capacity-achieving for the $\Lambda/\Lambda'$ channel, the probability of error for the first level can be made arbitrarily small. Moreover, by choosing $p$ arbitrarily large, the probability that one would decode to a wrong lattice point inside the same coset can be made arbitrarily small. i.e., the probability of error for the second level can be made arbitrarily small. Forney \textit{et al.} in \cite{forney2000} showed the existence of a sequence of Poltyrev-good lattices under the above two conditions.

In the following, we closely follow the steps in \cite{forney2000} to show the existence of Poltyrev-good lattices generated by our product construction. Let $p_1, p_2,\ldots, p_L$ be a collection of distinct odd primes. Similar to lattices from Construction A, one can view the lattices from product construction as $\Lambda = \Lambda^* + \Pi_{l=1}^L p_l \mbb{Z}^N$ where $\Lambda^*$ is obtained from the steps 1) and 2) in Section~\ref{sec:prod_const} and $\Pi_{l=1}^L p_l \mbb{Z}^N \defeq \Lambda'$ is an uncoded level. Similar to \cite{forney2000}, the probability of error in the uncoded level can be made arbitrarily small when we choose $\Pi_{l=1}^L p_l$ sufficiently large. Therefore, one then has to show that the linear code $C^1\times\ldots\times C^L$ over $\mbb{F}_{p_1}\times\ldots\times \mbb{F}_{p_L}$ together with the mapping $\mc{M}$ is capacity-achieving for the $\Lambda/\Lambda'$ channel under multistage decoding.

Now, by the chain rule of mutual information \cite{cover91}, one has that
\begin{align}
    I(\msf{Y};\msf{X}) &= I(\msf{Y};\mc{M}(\msf{C}^1,\ldots,\msf{C}^L)) \nonumber \\
    &= I(\msf{Y};\msf{C}^1,\ldots,\msf{C}^L) = \sum_{l=1}^L I(\msf{Y};\msf{C}^l|\msf{C}^1,\ldots,\msf{C}^{l-1})).
\end{align}
Hence, the only task remained is showing that linear codes over $\mbb{F}_{p_l}$ can achieve the conditional mutual information $I(\msf{Y};\msf{C}^l|\msf{C}^1,\ldots,\msf{C}^{l-1})$. To this end, we follow the proof in \cite{forney2000} and show that the equivalent channel at each level is regular in the sense of Delsarte and Piret \cite{delsarte82}.

As restated in \cite{forney2000}, a channel with transition probabilities $\{f(y|b), b\in B, y\in Y\}$ is regular if the input alphabet can be identified with an Abelian group $B$ that acts on the output alphabet $Y$ by permutation. In other words, if a set of permutations $\{\tau_b, b\in B\}$ can be defined such that $\tau_b(\tau_{b'}(y))=\tau_{b\oplus b'}(y)$ for all $b, b'\in B$ and $y\in Y$ such that $f(y|b)$ depends only on $\tau_b(y)$. Note that since we are considering the $\Lambda/\Lambda'$ channel, the additive noise is actually the $\Lambda'$-aliased Gaussian noise given by
\begin{equation}
    f_{\Lambda'}(z) = \sum_{\lambda\in\Lambda} g_{\eta^2}(z+\lambda),~~z\in\mbb{R}^N,
\end{equation}
where $g_{\eta^2}(.)$ is the Gaussian density function with zero mean and variance $\eta^2$.

Now, suppose we are at the $l$th level's decoding. i.e., all the codewords in the previous levels have been successfully decoded. The receiver first subtracts out the contribution from the previous levels by $y-\mc{M}(c^1,\ldots,c^{l-1},0,\ldots,0)\mod\Lambda'$. We show that the equivalent channel seen at the $l$th level's decoding is regular. For $b\in\mbb{F}_{p_l}$ define
\begin{equation}
    \mathbf{b} \defeq \begin{bmatrix}
                        \mc{M}(0,\ldots,0,b,v_1^{l+1},\ldots,v_1^L) \\
                        \mc{M}(0,\ldots,0,b,v_2^{l+1},\ldots,v_2^L) \\
                        \vdots \\
                        \mc{M}(0,\ldots,0,b,v_S^{l+1},\ldots,v_S^L) \\
                      \end{bmatrix},
\end{equation}
where $(v_s^{l+1},\ldots,v_s^L)\in\mbb{F}_{p_{l+1}}\times,\ldots,\times\mbb{F}_{p_L}$ for $s\in\{1,\ldots,S\}$ and none of these vectors are exactly the same. Therefore, there are total $S=\Pi_{l'>l} p_{l'}$ possibilities. Also, note that the ordering of elements in $\mathbf{b}$ does not matter and can be arbitrarily placed. Thus, given the previous decoded codewords, $\mathbf{b}$ is fully determined by $b$. For $y\in\mbb{R}^N$, let us now define the following,
\begin{align}
    \tau_b(y)&\defeq y-\mathbf{b} \mod \Lambda' \nonumber \\
    &\defeq\begin{bmatrix}
                        y- \mc{M}(0,\ldots,0,b,v_1^{l+1},\ldots,v_1^L) \mod\Lambda'\\
                        y- \mc{M}(0,\ldots,0,b,v_2^{l+1},\ldots,v_2^L) \mod\Lambda' \\
                        \vdots \\
                        y- \mc{M}(0,\ldots,0,b,v_S^{l+1},\ldots,v_S^L) \mod\Lambda'\\
                      \end{bmatrix}.
\end{align}
One can verify that
\begin{align}
    \tau_b(\tau_{b'}(y)) &= \tau_{b'}(y) - \mathbf{b} \mod \Lambda' \nonumber \\
    &= \begin{bmatrix}
                        y- \mc{M}(0,\ldots,0,b',v_1^{l+1},\ldots,v_1^L) - \mc{M}(0,\ldots,0,b,v_1^{l+1},\ldots,v_1^L) \mod\Lambda'\\
                        y- \mc{M}(0,\ldots,0,b',v_2^{l+1},\ldots,v_2^L) - \mc{M}(0,\ldots,0,b,v_2^{l+1},\ldots,v_2^L) \mod\Lambda' \\
                        \vdots \\
                        y- \mc{M}(0,\ldots,0,b',v_S^{l+1},\ldots,v_S^L) - \mc{M}(0,\ldots,0,b,v_2^{l+1},\ldots,v_2^L) \mod\Lambda'\\
                      \end{bmatrix} \nonumber \\
    &\overset{(a)}{=} \begin{bmatrix}
                        y- \mc{M}(0,\ldots,0,b'\oplus b,2v_1^{l+1},\ldots,2v_1^L)\mod\Lambda'\\
                        y- \mc{M}(0,\ldots,0,b'\oplus b,2v_2^{l+1},\ldots,2v_2^L)\mod\Lambda' \\
                        \vdots \\
                        y- \mc{M}(0,\ldots,0,b'\oplus b,2v_S^{l+1},\ldots,2v_S^L)\mod\Lambda'\\
                      \end{bmatrix} \nonumber \\
                      &= \begin{bmatrix}
                        y- \mc{M}(0,\ldots,0,b'\oplus b,\tilde{v}_1^{l+1},\ldots,\tilde{v}_1^L)\mod\Lambda'\\
                        y- \mc{M}(0,\ldots,0,b'\oplus b,\tilde{v}_2^{l+1},\ldots,\tilde{v}_2^L)\mod\Lambda' \\
                        \vdots \\
                        y- \mc{M}(0,\ldots,0,b'\oplus b,\tilde{v}_S^{l+1},\ldots,\tilde{v}_S^L)\mod\Lambda'\\
                      \end{bmatrix}, \label{eqn:pi_regular}
\end{align}
where $(\tilde{v}_s^{l+1},\ldots,\tilde{v}_s^L)\in\mbb{F}_{p_{l+1}}\times,\ldots,\times\mbb{F}_{p_L}$ for $s\in\{1,\ldots,S\}$ and (a) follows from the fact that $\mc{M}$ is an isomorphism. Now, since $\mbb{Z}_p$ and $2\odot\mbb{Z}_p$ are isomorphic for all odd primes $p$, it is clear that none of $(\tilde{v}_s^{l+1},\ldots,\tilde{v}_s^L)$ for $s\in\{1,\ldots,S\}$ are the same so one can rearrange \eqref{eqn:pi_regular} to get $\tau_b(\tau_{b'}(y))=\tau_{b\oplus b'}(y)$.

Consider $b\in\mbb{F}_{p_l}$ is transmitted, the transition probability is given by
\begin{align}
    f(y|c^1,\ldots,c^{l-1},b) &\propto  \nonumber \\
    \sum_{(v^{l+1},\ldots,v^L)\in\mbb{F}_{p_{l+1}}\times,\ldots,\times\mbb{F}_{p_L}} &f_{\Lambda'}(y|c^1,\ldots,c^{l-1},b,v^{l+1},\ldots,v^L),
\end{align}
which only depends on $\tau_b(y)$. Hence the equivalent channel experienced by the $l$th level is regular and linear codes suffice to achieve the mutual information. Repeating this argument to each level shows that multilevel coding and multistage decoding suffice to achieve the capacity.

\subsection{Existence of MSE Quantization Good Lattices}\label{sec:MSE_good}
Recall that the normalized second moment $G(\Lambda)$ is invariant to scaling. Here, we choose to work with a scaled (by $\gamma q^{-1}$) lattice
\begin{equation}
    \Lambda = \gamma q^{-1}\mc{M}(C^1,\ldots,C^L) + \gamma\mbb{Z}^N,
\end{equation}
where $\gamma\defeq 2\sqrt{N\beta}$. For any $\mathbf{x}\in\mbb{R}^N$, define the MSE distortion
\begin{align}
    d(\mathbf{x},\Lambda) &= \frac{1}{N} \underset{\lambda\in\Lambda}{\min}\| \mathbf{x} - \lambda \|^2 \nonumber\\
    &= \frac{1}{N} \underset{\mathbf{a}\in\mbb{Z}^N, \mathbf{c}^l\in C^l, l\in\{1,\ldots,L\}}{\min}\| \mathbf{x} - \gamma q^{-1}\mc{M}(\mathbf{c}^1,\ldots,\mathbf{c}^L) -\gamma\mathbf{a} \|^2 \nonumber\\
    &= \frac{1}{N} \underset{\mathbf{c}^l\in C^l, l\in\{1,\ldots,L\}}{\min}\| \mathbf{x} - \gamma q^{-1}\mc{M}(\mathbf{c}^1,\ldots,\mathbf{c}^L) \mod \gamma\mbb{Z}^n \|^2.
\end{align}
For any $\mathbf{w}=[\mathbf{w}^1,\ldots,\mathbf{w}^L]$ where $\mathbf{w}_l\in\mbb{F}_{p_l}^{m^l}\setminus\{\mathbf{0}\}$, define $\mathbf{C}(\mathbf{w})\defeq [\mathbf{G}_1\odot\mathbf{w}^1,\ldots,\mathbf{G}_L\odot\mathbf{w}^L]$. Note that each $\mc{M}(\mathbf{C}(\mathbf{w}))$ is uniformly distributed over $\mbb{Z}^n/q\mbb{Z}^N$ as each $\mathbf{G}_l\odot\mathbf{w}^l$ is uniformly distributed over $\mbb{F}_{p_l}^N$ and $\mc{M}$ is a ring isomorphism. We can then follow \cite[(14)-(16)]{ordentlich_erez_simple} that for all $\mathbf{w}\in\times_{l=1}^L \mbb{F}_{p_l}^{m^l}\setminus\{\mathbf{0}\}$ and $\mathbf{x}\in\mbb{R}^N$,
\begin{align}
    \varepsilon &\defeq \Pp\left(\frac{1}{N} \| \mathbf{x} - \gamma q^{-1}\mc{M}(\mathbf{C}(\mathbf{w})) \mod \gamma\mbb{Z}^n \|^2 \leq \beta \right) \nonumber \\
    &\geq V_N 2^{-N}\left(1-\frac{\sqrt{N}}{q}\right)^N,
\end{align}
where $V_N$ is the volume of an $N$-dimensional ball with radius 1. Note that the reason that we exclude those all-zeros sub-messages is because those would make $\mathbf{C}(\mathbf{w})$ non-uniform and hence make the analysis more involoved. However, including those points will only help the quantization and hence, the above inequality is valid.

Now, let us choose $q=\xi N^{\frac{3}{2}}$ where $\xi$ is chosen to be the largest value in $[0.5,1)$ such that $q$ is a product of $L$ distinct primes. This in turns provides
\begin{equation}
    \varepsilon >\frac{1}{N^2}V_N2^{-N}.
\end{equation}
Following the similar probabilistic arguments in \cite[(18)-(19)]{ordentlich_erez_simple}, one has that
\begin{align}
    \Pp\left(d(\mathbf{x},\Lambda)>\beta\right) &< N^{\frac{7}{2}}\Pi_{l=1}^L\frac{1}{(p_l^{m^l}-1)}2^N V_N^{-1} \nonumber \\
    &< 2^L N^{\frac{7}{2}}\Pi_{l=1}^L p_l^{-m^l}2^N V_N^{-1}
\end{align}

Using the above inequality, one can bound the expectation of the distortion over the ensemble of lattices and $\mathbf{X}$ which may have arbitrary distribution as
\begin{equation}
    \mbb{E}_{\mathbf{X},\Lambda}(d(\mathbf{X},\Lambda))\leq \beta\left(1+2^L N^{\frac{9}{2}}2^{-N\left[\sum_{l=1}^L \frac{m^l}{N}\log(p_l)-\frac{1}{2}\log\left(\frac{4}{V_N^{2/N}}\right)\right]}\right),
\end{equation}
which tends to $\beta$ provided that
\begin{equation}\label{eqn:MSE_good_criterion}
    \sum_{l=1}^L \frac{m^l}{N}\log(p_l)=\frac{1}{2}\log\left(\frac{4}{V_N^{2/N}}\right) + \delta,
\end{equation}
for a $\delta>0$. As in \cite{ordentlich_erez_simple}, picking $\mathbf{X}$ to have uniform distribution over $\gamma[0,1)^N$ allows one to relate the MSE to the second moment of the lattice. Thus,
\begin{equation}\label{eqn:exp_2_moment}
    \underset{N\rightarrow\infty}{\lim}\mbb{E}_{\Lambda}(\sigma^2(\Lambda)) = \underset{N\rightarrow\infty}{\lim}\mbb{E}_{\mathbf{X},\Lambda}(d(\mathbf{X},\Lambda))\leq\beta,
\end{equation}
and hence $\lim_{N\rightarrow\infty}\mbb{E}_{\Lambda}(\sigma^2(\Lambda))\leq 1$.
Moreover, the volume of the normalized fundamental Voronoi region of $\Lambda$ is lower bounded by
\begin{align}
    \text{Vol}(\mc{V}_{\Lambda})^{\frac{2}{N}} &\overset{(a)}{\geq} (\gamma^N\Pi_{l=1}^L p_l^{-m^l})^{\frac{2}{N}} \nonumber \\
    &= 4N\beta\Pi_{l=1}^L p_l^{-2m^l/N} \nonumber \\
    &\overset{(b)}{=} 2^{-2\delta}N V_N^{\frac{2}{N}}\beta,
\end{align}
where (a) becomes an equality if and only if every $\mathbf{G}_l$ is full rank and (b) follows from the choice of \eqref{eqn:MSE_good_criterion}. With the expectation of the second moment and the volume of the normalized fundamental Voronoi region, one can then bound the expectation of the normalized second moment over the ensemble of lattices as
\begin{align}\label{eqn:lim_nsm}
    \lim_{N\rightarrow \infty}\mbb{E}_{\Lambda}(G(\Lambda)) &= \lim_{N\rightarrow \infty}\mbb{E}_{\Lambda}\left( \frac{\sigma^2(\Lambda)}{\text{Vol}(\mc{V}_{\Lambda})^{\frac{2}{N}}}\right) \nonumber \\
    &\leq 2^{2\delta}\lim_{N\rightarrow\infty}\frac{1}{NV_N^{\frac{2}{N}}} \nonumber \\
    &= 2^{2\delta}\frac{1}{2\pi\exp(1)}.
\end{align}
After this, by applying the Markov inequality, one obtains that with high probability, the lattice from the ensemble is good for MSE quantization asymptotically.

\subsection{Existence of Simultaneously-Good Lattices}
Although our proof for the achievable computation does not require simultaneously-good lattices, we show the existence of such lattices generated by the proposed product construction for the sake of completeness. To show the simultaneous goodness, one has to make sure that the conditions for which the above two properties hold would not conflict with each other. As mentioned before, the normalized second moment is invariant to scaling, so choosing $\gamma = 2\sqrt{N(\eta^2+\epsilon)}$ with $\epsilon>0$ and $\eta^2$ the variance of the Gaussian noise, will not change the result. Now, choosing $p_l$ and $m^l$ such that \eqref{eqn:MSE_good_criterion} ensures that with high probability the sequence of lattices is good for quantization asymptotically. i.e., $\text{Vol}(\mc{V}_{\Lambda})^{\frac{2}{N}} \rightarrow 2\pi\exp(1)(\eta^2+\epsilon)$ in the limit as $N\rightarrow \infty$. From the capacity separability in \cite[Theorem]{forney2000}, when choosing $q$ sufficiently large, one has that
\begin{align}
    \frac{1}{N}C(\Lambda/\Lambda',\eta^2) &\approx \frac{1}{N}C(\Lambda',\eta^2) \approx \frac{1}{2}\log\left(\frac{\text{Vol}(\mc{V}_{\Lambda'})^{\frac{2}{N}}}{2\pi\exp(1)\eta^2}\right) \nonumber \\
    &= \frac{1}{2}\log\left(\frac{\text{Vol}(\mc{V}_{\Lambda'})^{\frac{2}{N}}}{\text{Vol}(\mc{V}_{\Lambda})^{\frac{2}{N}}}\right) + \frac{1}{2}\log\left(\frac{\text{Vol}(\mc{V}_{\Lambda})^{\frac{2}{N}}}{2\pi\exp(1)\eta^2}\right) \nonumber \\
    & = \frac{1}{2}\sum_{l=1}^L\log\left( p_l^{2m^l/N}\right) + \frac{1}{2}\log\left(\frac{ 2\pi\exp(1)(\eta^2+\epsilon) }{2\pi\exp(1)\eta^2}\right)\nonumber \\
    &\approx \sum_{l=1}^L\log\left(p_l^{m^l/N}\right) + \frac{\epsilon}{\eta^2}.
\end{align}
Moreover, the approximation becomes exact in the limit as $N\rightarrow \infty$. This implies that when choosing parameters in \eqref{eqn:MSE_good_criterion} such that the lattices are good for quantization, the sum rates of the underlying linear codes would achieve the capacity $\frac{1}{N}C(\Lambda/\Lambda',\eta^2)$ asymptotically. However, this does not specify the rate for each level. Therefore, when $N$ large enough, one is free to pick the linear code for each level such that the rate is arbitrarily close to the capacity of that level. When doing so, as shown previously, the lattice would be Poltyrev-good under multistage decoding as well.

\section{Construction A Lattices over the Second Proposed Family of Constellations}\label{apx:simul_good}
In this appendix, we provide the definition of Construction A lattices with the second proposed family of constellations and provide a theorem about the existence of such lattices that are simultaneously good for MSE quantization and Poltyrev-good.

{\bf \underline{Construction A}} \cite{LeechSloane71} \cite{conway1999sphere} Let $\phi$ be an Eisenstein prime with $\phi$ being the product of a unit and a rational prime $q$ congruent to $2\mod 3$. Thus, one has $|\phi|^2 = q^2$. Let $k$, $N$ be integers such that $k\leq N$ and let $\mathbf{G}$ be the generator matrix of a $(N,k)$ linear code. Construction A consists of the following steps,
\begin{enumerate}
    \item Define the discrete codebook $\mc{C}=\{\mathbf{x}=\mathbf{G}\odot\mathbf{y}:\mathbf{y}\in\mbb{F}_{q^2}^k\}$ where all operations are over $\mbb{F}_{q^2}$.
    \item Generate the $N$-dimensional lattice $\Lambda_{\mc{C}}$ as $\Lambda_{\mc{C}}\triangleq \{\mathbf{\lambda}\in(\Zw)^N:\sigma(\mathbf{\lambda})\in\mc{C}\}$, where $\sigma$ is the homomorphism defined in \eqref{eqn:homo_prod_Eis1}.
    \item Scale $\Lambda_{\mc{C}}$ with $\phi^{-1}$ to obtain $\Lambda = \phi^{-1}\Lambda_{\mc{C}}$.
\end{enumerate}
Given $N,k,q$, we define an $(N,k,q)$ ensemble as the set of lattices obtained through Construction A described above where for each of these lattices, $\mathbf{G}_{ij}$ are i.i.d. with a uniform distribution over $\mbb{F}_{q^2}$.

\begin{theorem}
    A lattice drawn from the $(N,k,q)$ ensemble is simultaneously good for quantization and good for AWGN channel coding as $N\rightarrow \infty$ in probability as long as the parameters satisfy

    i) $k \leq \beta N$ for some $\beta<1$ but grows faster than $\log^2(N)$,

    ii) $k, q$ satisfy
    \begin{align}\label{eqn:q2k}
        q^{2k} &= \frac{(\sqrt{3}/2)^N}{\text{Vol}(\mc{B}(r_{\Lambda}^{\text{eff}}))} = \frac{(\sqrt{3}/2)^N\Gamma(N+1)}{\pi^N (r_{\Lambda}^{\text{eff}})^{2N}} \nonumber \\
        &\approx \sqrt{2N\pi}\left(\frac{\sqrt{3}}{2}\right)^N\left(\frac{2N}{2\exp(1)(r_{\Lambda}^{\text{eff}})^2}\right)^N,
    \end{align}
    and
    \begin{equation}\label{eqn:rmin}
        r_{min}<r_{\Lambda}^{\text{eff}}(N)<2r_{min},
    \end{equation}
    where $0<r_{min}<1/4$,

    iii) $\gamma\rightarrow 0$ and $\text{Vol}(\mc{V}_{\gamma\Lambda})$ remains constant.
\end{theorem}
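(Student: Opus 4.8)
The plan is to analyze the random $(N,k,q)$ ensemble directly and to establish the two goodness properties separately, following the techniques behind the existence of simultaneously-good lattices \cite{erez05} together with the nested-lattice arguments of Ordentlich and Erez in \cite{ordentlich_erez_simple}, transplanted from $\mbb{Z}$ to the ring of Eisenstein integers $\Zw$; the transplant is routine because $q$ stays inert in $\Zw$ and $\Zw/\phi\Zw\cong\mbb{F}_{q^2}$. First I would record the covolume: the fundamental cell of $\Zw$ has area $\sqrt{3}/2$, the code $\mc{C}$ contributes a factor $q^{-2k}$ once $\mathbf{G}$ is full rank (which holds with high probability by hypothesis (i)), and scaling by $\phi^{-1}$ divides the covolume of $\Lambda_{\mc{C}}$ by $|\phi|^{2N}=q^{2N}$, giving $\text{Vol}(\mc{V}_{\Lambda})=q^{-2k}(\sqrt{3}/2)^N$. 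Condition (ii) is then exactly the identity $\text{Vol}(\mc{V}_{\Lambda})=\text{Vol}(\mc{B}(r_{\Lambda}^{\text{eff}}))$, so $r_{\Lambda}^{\text{eff}}$ is the effective radius, the Stirling form of (ii) merely making this explicit; in particular, since $r_{\Lambda}^{\text{eff}}$ is kept bounded while $N\to\infty$, the prime $q$ is forced to grow, which is what is needed below.

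For Poltyrev-goodness I would follow the two-level argument of Forney \textit{et al.}\ \cite{forney2000} used in Appendix~\ref{apx:lattice}: reduce the received signal modulo the reference lattice $\phi^{-1}(\Zw)^N$ to obtain a $\Lambda/\Lambda'$ channel, verify through the Delsarte--Piret regularity criterion that this channel is regular so that a random linear code over $\mbb{F}_{q^2}$ achieves its capacity, and use $q\to\infty$ to drive the error probability of the remaining ``uncoded'' level to zero. Equivalently, one may average over the ensemble in the spirit of Loeliger \cite{loeliger97} and Forney \cite{forney2000}, bounding $\E\bigl[\#\{\lambda\in\Lambda\setminus\{0\}:\lambda\in\mc{S}\}\bigr]$ by $\text{Vol}(\mc{S})/\text{Vol}(\mc{V}_{\Lambda})$ plus a term coming from the deterministic reference lattice that is negligible inside the noise ball once $q$ is large. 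Taking $\mc{S}$ to be a ball of radius slightly above the typical noise norm $\sqrt{2N\eta^2}$, concentration of the noise norm kills the out-of-ball probability while the first-moment term vanishes whenever $\text{Vol}(\mc{B}(r))<\text{Vol}(\mc{V}_{\Lambda})$, i.e.\ whenever $\eta^2<\text{Vol}(\mc{V}_{\Lambda})/(2\pi\exp(1))$; hence the ensemble-averaged error probability vanishes for every noise variance below the Poltyrev limit.

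For quantization goodness I would reuse the covering argument of \cite{ordentlich_erez_simple}: for $\mathbf{X}$ of arbitrary distribution, each of the $q^{2k}-1$ nonzero codewords yields a lattice point whose residue modulo the coarse grid is uniform, so the probability that a fixed ball of the relevant radius contains none of these points is exponentially small in $q^{2k}$; because $q$ grows at least polynomially in $N$ this probability is summable and $\E_{\mathbf{X},\Lambda}(d(\mathbf{X},\Lambda))$ converges to the target second moment. Specializing $\mathbf{X}$ to be uniform on a coarse cell links the distortion to $\sigma^2(\gamma\Lambda)$, and the relations in (ii)--(iii) among $q^{2k}$, $r_{\Lambda}^{\text{eff}}$, the constraint $r_{min}<r_{\Lambda}^{\text{eff}}<2r_{min}$ with $r_{min}<1/4$, and the scaling $\gamma$ are arranged precisely so that $\text{Vol}(\mc{V}_{\gamma\Lambda})$ stays fixed while $\sigma^2(\gamma\Lambda)$ tends to the matching value, whence $G(\gamma\Lambda)\to 1/(2\pi\exp(1))$. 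Condition (i), with $k$ growing faster than $\log^2(N)$, is what makes the subexponential prefactors vanish in the limit and keeps $\mathbf{G}$ full rank with probability tending to one, so the covolume computed above is indeed the actual covolume.

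The remaining step --- and the one I expect to be the main obstacle --- is to establish simultaneity, i.e.\ that the two sets of conditions are compatible: one must verify that the parameter regime forced by the tight covering/second-moment requirement (ii)--(iii) keeps the code rate $\tfrac{2k}{N}\log(q)$ strictly below the channel-coding capacity of the $\Lambda/\Lambda'$ channel with a positive (though vanishing) slack, so that the Poltyrev argument still applies to almost every lattice in the ensemble; a union bound over the two ``bad'' events then produces a single sequence of lattices good for both properties. This compatibility is the heart of the simultaneous-goodness phenomenon \cite{erez05}, and it is exactly why the hypotheses relate $r_{\Lambda}^{\text{eff}}$ to a packing-type radius $r_{min}<1/4$ rather than leaving the parameters free. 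By comparison, transporting the volume, Minkowski--Hlawka, and regularity estimates from $\mbb{Z}$ to $\Zw$ is routine bookkeeping, since $q$ stays inert and $\Zw/\phi\Zw\cong\mbb{F}_{q^2}$.
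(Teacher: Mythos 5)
The paper's own ``proof'' of this theorem is a one-line delegation: it asserts that following the steps of \cite{Engin12}, with some modifications on the choice of parameters, completes the argument; i.e., it imports the simultaneous-goodness proof already carried out over $\Zw$ in \cite{Engin12} rather than proving anything in situ. Your proposal instead tries to rebuild that proof from the ingredients used elsewhere in this paper (Forney-style regularity for Poltyrev-goodness \cite{forney2000}, Ordentlich--Erez covering for quantization goodness \cite{ordentlich_erez_simple}). The preparatory part of your outline is sound: the covolume computation $\text{Vol}(\mc{V}_{\Lambda})=q^{-2k}(\sqrt{3}/2)^N$ is correct (given $\mathbf{G}$ full rank), condition (ii) is indeed the identity $\text{Vol}(\mc{V}_{\Lambda})=\text{Vol}(\mc{B}(r_{\Lambda}^{\text{eff}}))$, and this does force $q\rightarrow\infty$ polynomially in $N$.

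The gap is that the proposal stops exactly where the theorem has content, and you flag it yourself: the ``simultaneity'' verification is described but not carried out, and none of the specific hypotheses is ever actually used. You do not show why $k$ must grow faster than $\log^2(N)$, why $0<r_{min}<1/4$ together with $r_{min}<r_{\Lambda}^{\text{eff}}<2r_{min}$ is the right constraint, or how $\gamma\rightarrow 0$ with $\text{Vol}(\mc{V}_{\gamma\Lambda})$ held constant enters the quantization/covering estimate --- yet these are precisely the ``modifications on the choice of parameters'' that the paper's citation of \cite{Engin12} is pointing at. One mechanism is also misattributed: after scaling by $\phi^{-1}$ the random lattice always contains the deterministic lattice $(\Zw)^N$, whose nonzero points have norm at least $1$, and it is the radius condition $r_{\Lambda}^{\text{eff}}<2r_{min}<1/2$, not ``$q$ large,'' that keeps the balls used in both the averaging and the covering arguments free of these deterministic points. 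Likewise, framing simultaneity as keeping the code rate strictly below the capacity of the $\Lambda/\Lambda'$ channel is not quite the issue: Poltyrev-goodness is measured against the lattice's own volume, and capacity separability makes the rate condition follow from the volume condition up to vanishing terms once $q\rightarrow\infty$; the real work is showing that each bad event has probability tending to zero under the single scaling (i)--(iii) and then union bounding, which is the content of \cite{Engin12} that your outline gestures at but does not reproduce. As written, the proposal is a plausible roadmap, not a proof.
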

\begin{proof}
    Following the steps in \cite{Engin12} with some modifications on the choice of parameters completes the proof.
\end{proof}

\section{Proof of Theorem~\ref{thm:sym_cap}}\label{apx:sym_rate}
\begin{IEEEproof}
    Without loss of generality, we can assume $\theta=0$. We prove this theorem by showing that the capacity region of the corresponding MAC channel has a symmetric shape. Therefore, the symmetric capacity always touches the boundary of the sum-rate limit.

    Recall that with the homomorphism given in \eqref{eqn:module_homo}, the transmitted signals are given by
    \begin{align}
        \mathbf{x}_k &= \gamma\left(\mathbf{c}_k^1 + \mathbf{c}_k^2 \omega\mod q\Zw\right) \nonumber \\
        &=\gamma\left((\mathbf{c}_k^1\mod q\mathbb{Z}) + \omega(\mathbf{c}_k^2\mod q\mathbb{Z})\right) \nonumber \\
        &=\gamma\left(\mathbf{\check{c}}_k^1 + \omega\mathbf{\check{c}}_k^2\right),
    \end{align}
    where $\mathbf{\check{c}}_k^l\triangleq \mathbf{c}_k^l\mod q\mbb{Z}$. Thus, the received signal can be rewritten as
    \begin{align}
        \mathbf{y} &= h_1\mathbf{x}_1 + h_2\mathbf{x}_2 + \mathbf{z} \nonumber \\
        &= h_1\gamma\left(\mathbf{\check{c}}_1^1 + \omega\mathbf{\check{c}}_1^2\right) + h_2 \gamma\left(\mathbf{\check{c}}_2^1 + \omega\mathbf{\check{c}}_2^2\right) +\mathbf{z}\nonumber \\
        &= \gamma\left(h_1\mathbf{\check{c}}_1^1 + h_2\mathbf{\check{c}}_2^1\right) + \gamma\left(h_1\mathbf{\check{c}}_1^2 + h_2\mathbf{\check{c}}_2^2\right)\omega +\mathbf{z}.
    \end{align}
    Notice that one also has
    \begin{equation}
        \omega\mathbf{\bar{y}} = \gamma\left(h_1\mathbf{\check{c}}_1^2 + h_2\mathbf{\check{c}}_2^2\right) +\gamma\left(h_1\mathbf{\check{c}}_1^1 + h_2\mathbf{\check{c}}_2^1\right)\omega + \omega\mathbf{\bar{z}},
    \end{equation}
    where $\omega\mathbf{\bar{z}}$ and $\mathbf{z}$ has the same distribution as $\mathbf{z}$ is circularly symmetric. Now, decoding second level first by regarding the first level as unknown would result in an information rate
    \begin{align}\label{eqn:level_1}
        I(\msf{Y};b_1\msf{C}_1^2\oplus b_2\msf{C}_2^2) &\overset{(a)}{=} I(\omega\msf{\bar{Y}};b_1\msf{C}_1^2\oplus b_2\msf{C}_2^2) \nonumber \\
        &\overset{(b)}{=}I(\msf{Y};b_1\msf{C}_1^1\oplus b_2\msf{C}_2^1)
    \end{align}
    where (a) follows from that the operations are bijective and (b) is due to the fact that $h_1\mathbf{\check{c}}_1^1 + h_2\mathbf{\check{c}}_2^1$ and $h_1\mathbf{\check{c}}_1^2 + h_2\mathbf{\check{c}}_2^2$ are statistically the same.

    Similarly, decoding the first level by assuming the decoded second level is correct results in an information rate
    \begin{equation}\label{eqn:level_2}
        I(\msf{Y};b_1\msf{C}_1^1\oplus b_2\msf{C}_2^1|b_1\msf{C}_1^2\oplus b_2\msf{C}_2^2) = I(\msf{Y};b_1\msf{C}_1^2\oplus b_2\msf{C}_2^2|b_1\msf{C}_1^1\oplus b_2\msf{C}_2^1).
    \end{equation}
    Now, combining \eqref{eqn:level_1} and \eqref{eqn:level_2}, one can show that the capacity region of this MAC channel is symmetric. This completes the proof.
\end{IEEEproof}

\bibliographystyle{ieeetr}
\bibliography{journal_abbr,prod}

\end{document}